\documentclass[twoside,11pt]{article}

\usepackage[abbrvbib, preprint]{jmlr2e}

\jmlrheading{}{2026}{}{8/26}{NA}{21-0000}{Andrea Mascaretti and Daniel R. Kowal}

\ShortHeadings{Graph-Dependent Shrinkage Priors}{Mascaretti and Kowal}
\firstpageno{1}

\usepackage{amsmath}

\usepackage{color}
\newcounter{subfigure}[figure]

\newcommand{\R}{\mathbb{R}}

\newcommand{\rank}{\mathrm{rank}}
\newcommand{\kernel}{\mathrm{ker}}

\newcommand{\Normal}{\mathcal{N}}
\newcommand{\bfbeta}{\pmb{\beta}}

\newcommand{\bfeta}{\pmb{\eta}}
\newcommand{\bfy}{\pmb{y}}
\newcommand{\bfh}{\pmb{h}}
\newcommand{\bfell}{\pmb{\ell}}
\newcommand{\bfone}{\pmb{1}}

\newcommand{\dd}{\mathrm{d}}
\newcommand{\hzero}{h_0}


\newtheorem{lemma}{Lemma}

\usepackage{verbatim}
\usepackage{booktabs}
\usepackage{placeins}
\newcommand{\muG}{\mu_{k}}

\begin{document}

\title{Graph-dependent shrinkage priors for Bayesian trend filtering}

\author{\name Andrea Mascaretti \email amascare@sissa.it\\
    \addr Scuola Internazionale Superiore di Studi Avanzati (SISSA)\\
    Trieste, Italy
    \AND
    \name Daniel R. Kowal \email dan.kowal@cornell.edu\\
    \addr Department of Statistics and Data Science, Cornell University\\
    Ithaca, NY 14850, USA}

\editor{TBD}

\maketitle

\begin{abstract}
  Many common data dependencies can be characterized by graphs: time
  series data are sequential (chain graph), images appear as pixels
  (lattice graph), areal data are defined by neighboring units
  (spatial adjacency graph), etc. Graph trend filtering seeks to
  smooth and predict such data. However, classical trend filtering
  only incorporates the graph for estimation of the trend, which
  limits its adaptivity, and is brittle in the presence of missing
  data. Further, it lacks uncertainty quantification and faces certain
  computing challenges. We address these limitations with a
  comprehensive Bayesian framework for (graph-) dependent data. Our
  approach leverages the graph at three critical junctures: 1) the
  trend, to enable smoothing, imputation, and prediction; 2) the
  local shrinkage, to enhance adaptivity and precision; and 3) the
  MCMC sampling algorithm, to deliver scalable posterior (predictive)
  inference via sparse and banded operations. For the proposed graph-dependent shrinkage priors, we study the local concentration and adaptivity properties and establish conditions for posterior propriety.  Simulation studies
  demonstrate that, relative to state-of-the-art frequentist and
  Bayesian alternatives, this framework provides more accurate point
  estimates, more precise interval estimates, and highly competitive
  computing. We apply our methods for
  spatio-temporal modeling and forecasting of local area unemployment data
  for every county in the continental U.S. during the 2020 COVID-19 unemployment shock.
\end{abstract}

\begin{keywords}
    Areal data; finite-difference operator; horseshoe prior; Markov chain Monte Carlo; time series.
\end{keywords}

\section{Introduction}\label{sec:introduction}
Dependent data are prevalent and diverse: prominent settings include time series data, image data,
spatial data, and other cases where there is some natural
ordering, grouping, or structure to the data. Despite the many
differences among these cases, several unifying themes persist.
Primary modeling tasks---prediction, imputation, (trend) estimation,
and uncertainty quantification---rely critically on careful modeling
of data dependencies. These dependencies may be relevant for multiple
aspects of the data-generating process, including expectations,
variances, priors, etc.  This is especially important for
Bayesian methods, which explicitly make distributional assumptions to
obtain model-based (posterior) inference. Additionally, dependent data
models typically incur a greater computational cost than independent counterparts, and often require specialized algorithms for
prediction and inference.  Finally, many types of dependencies may be
described by a graph, where vertices denote observations and edges
denote potential data dependencies. Examples include chain graphs to encode
the sequential nature of time series data, lattice (or grid) graphs to
represent adjacent pixels in image data, and spatial adjacency graphs
to identify the neighboring units among areal (spatial)
data. Consequently, models and algorithms that can be expressed in
terms of graphs gain substantial generalizability across different
types of dependent data.

Within this broad context, a central task is \emph{trend filtering},
which seeks to predict and smooth data observed over a
graph. ``Ordinary" trend filtering considers sequential (time series)
data and seeks to construct locally-adaptive trend estimates via
sparse $\ell_1$-penalization of $k$th order differences
\citep{kim$ell_1$TrendFiltering2009,tibshiraniAdaptivePiecewisePolynomial2014}.
That work sparked many extensions, including for
unequally-spaced data
\citep{madridpadillaAdaptiveNonparametricRegression2020} and several
Bayesian adaptions
\citep{faulknerLocallyAdaptiveSmoothing2018,kowalDynamicShrinkageProcesses2019,hengBayesianTrendFiltering2023,kangObjectiveBayesianTrend2025}. \cite{wangTrendFilteringGraphs2016}
provided the critical generalizations for trend filtering on a
graph. Specifically, \cite{wangTrendFilteringGraphs2016} designed
graph difference operators to embed within an $\ell_1$-penalty,
leveraging the induced sparsity patterns to gain smooth yet
locally-adaptive trend estimates. It is this sparsity
that distinguishes graph trend filtering from other approaches, such
as Laplacian smoothing \citep{smolaKernelsRegularizationGraphs2003}
and Gaussian Markov Random Fields \citep{rueFastSamplingGaussian2001},
which instead rely on $\ell_2$-penalties and Gaussian priors,
respectively. Subsequent work included total variation denoising
($k=0$) with $d$-dimensional grid graphs
\citep{sadhanalaHigherOrderTotalVariation2017,sadhanalaMultivariateTrendFiltering2024},
exponential families \citep{sadhanalaExponentialFamilyTrend2024}, and
multivariate data and non-convex penalties
\citep{varmaVectorValuedGraphTrend2020}. Graphs also appear in a
variety of other statistical and machine learning tasks; see
\cite{khoshraftarSurveyGraphRepresentation2024}.

These contributions clearly established the utility of (ordinary or
graph) trend filtering and prioritized the theoretical guarantees for trend estimation. However, they focused exclusively on point
estimation without accompanying uncertainty quantification. The Bayesian
approaches are an exception, 
but only considered ordinary trend filtering for time series data and do not apply for general graph dependencies. Furthermore, nearly all  
previous approaches only incorporated the graph dependencies in the
model for the trend, and
often only for total variation denoising or the fused lasso 
($k=0$). However, this regularization is controlled by a single tuning
parameter that 1) is constant over graph, which limits adaptivity, and 2) must be estimated 
such as through cross-validation, which adds to the  already substantial computational costs for graph-dependent data.
Lastly, these approaches did not address missing
data.  Even in
the simpler case of unequally-spaced time series data, the
\cite{madridpadillaAdaptiveNonparametricRegression2020} modification of ordinary trend
filtering required a nearest-neighbors pre-processing step that
redefined the graph and added a tuning
parameter. Thus, the challenges of missing data for general graph
dependencies are substantial.

Motivated by these challenges, we propose a Bayesian framework for
modeling graph-dependent data. Our approach leverages the graph
dependencies at three junctures. First, we adapt the
\cite{wangTrendFilteringGraphs2016} graph difference operators to define a prior
(rather than penalty) for the trend, thus encouraging graph-based
smoothness across connected vertices. Second, we introduce
\emph{graph-dependent shrinkage priors} to enable locally-adaptive
regularization informed by the graph. These priors substantially
improve trend estimation, especially for signals that exhibit both
smoothness and abrupt changes over the graph, and provide more precise
yet well-calibrated uncertainty quantification. And third, we develop
a scalable MCMC algorithm that combines sparse and banded matrix
operations with convenient parameter expansions to provide efficient
posterior predictive inference and imputation. Notably, both the
graph-dependent trend parameters and the graph-dependent shrinkage
parameters are each sampled jointly and all without a loop. These
methods are developed for general graphs, and thus apply across many
different types of dependent data.

\begin{figure}[h]
    \centering
    \includegraphics[width=.8\textwidth]{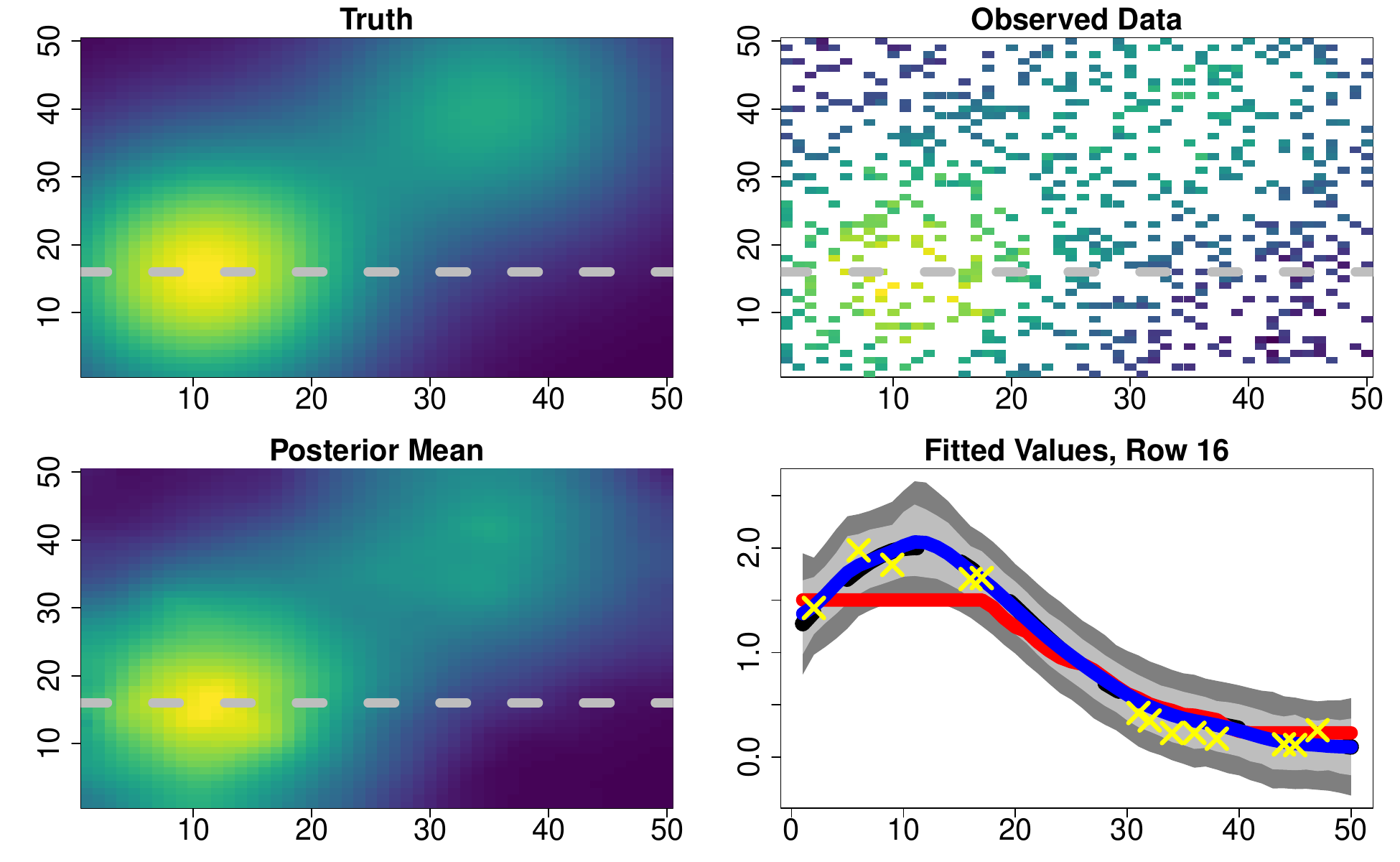}
    \caption{Bayesian trend filtering with graph-dependent shrinkage for a $50\times 50$ lattice. The true trend (top left) is corrupted by measurement error and 75\% missingness to obtain the observed data (top right). The fitted trend from the proposed approach (bottom left) accurately recovers the truth. Inspecting one row of the lattice (dashed gray horizontal lines), the bottom right plot shows the true trend (dashed black), observed data (yellow x's), fitted values (blue) and 95\% pointwise intervals (light gray) and simultaneous bands (dark gray) for the proposed approach and frequentist trend filtering estimates \citep{wangTrendFilteringGraphs2016} (red).}
    \label{fig:sim_lattice}
\end{figure}

To illustrate the challenges in this setting---and how our methods
address them---we provide an example in
Figure~\ref{fig:sim_lattice}. These are simulated data on a lattice
with a smooth, nonlinear trend (top left) corrupted by measurement
error and 75\% missingness (top right). Visually, it is difficult to
identify much of a trend from the observed data; yet the proposed
approach recovers it cleanly (bottom left) by borrowing information
across neighboring pixels for both the trend and the
shrinkage. Inspecting one row of the lattice (bottom right), our
Bayesian approach accurately captures the trend, reliably imputes the
missing data, and provides suitable uncertainty estimates. By
contrast, the estimated trend from \cite{wangTrendFilteringGraphs2016} struggles
due to the missing data and does not natively provide uncertainty
quantification.

Central to our modeling efforts is a new family of graph-dependent
shrinkage priors (GDSP). We operate within the class of continuous, global-local shrinkage
priors 
\citep{strawdermanProperBayesMinimax1971,griffinAlternativePriorDistributions2005,carvalhoHorseshoeEstimatorSparse2010,polsonShrinkGloballyAct2011}. Unlike
sparsity-inducing (e.g., spike-and-slab) priors, which explicitly
model variable inclusion yet face substantial computational burdens,
continuous shrinkage priors are designed for aggressive shrinkage of
noise, minimal shrinkage of signals, and scalable posterior computing. The novelty is that the GDSP local scale parameters are \emph{graph-dependent}, so that neighboring parameters borrow shrinkage information across the graph. We study the local concentration and adaptivity properties of the  GDSP for arbitrary graphs. Further, we construct global shrinkage priors that learn the overall level of smoothness over the graph and, even more critically, ensure posterior propriety of the GDSP---which is not guaranteed simply by assigning shrinkage priors to graph differences. 

Although shrinkage parameters are traditionally assigned iid priors, recent work has considered the advantages of dependent priors in
different contexts. For linear regression analysis,
\cite{liuBayesianRegularizationGraph2014} and
\cite{chakrabortyGraphLaplacianPrior2019} introduced a ``graph
Laplacian prior" that generalized the Bayesian lasso to model the
precision matrix among regression coefficients. Similarly,
\cite{griffinStructuredShrinkagePriors2024} introduced general priors
that include both shrinkage and dependence among regression
coefficients. However, these shrinkage priors were not designed for
dependent \emph{data} or for trend filtering, prediction, and
imputation.  \cite{kalliTimevaryingSparsityDynamic2014} and
\cite{kowalDynamicShrinkageProcesses2019} introduced dynamic shrinkage
priors with time series models for shrinkage parameters. These methods
are restricted to sequential data and do not apply for general graph
dependencies.

For completeness, we acknowledge the literature on horseshoe priors
for Gaussian graphical models
\citep{liGraphicalHorseshoeEstimator2019,lingjaerdeScalableMultipleNetwork2023,sagarPrecisionMatrixEstimation2024}. Despite
the similarities in terminology, that context is quite different:
their priority is to learn sparse precision matrices from multivariate
Gaussian data, rather than to model data dependencies from a (known)
graph.

The remainder of the paper is organized as follows. In
Section~\ref{sec:ad_dep_data}, we introduce our Bayesian model for
graph trend filtering with graph-dependent shrinkage priors, including
theoretical analysis of this prior.  Section~\ref{sec:mcmc} details
our Markov chain Monte Carlo (MCMC) sampling algorithm. Simulation
studies are presented in Section~\ref{sec:sim-data}. In
Section~\ref{sec:us_unemployment}, the methods are applied for
space-time modeling and  forecasting of local area (county-level)
unemployment data for  3,108 counties in the continental U.S.\ during spring 2020. We
conclude in Section~\ref{sec:conclusions}. The appendix includes proofs of all results, additional theory, computational details, and supporting empirical results for simulations and real data analysis. 

\section{Bayesian trend filtering with graph-dependent shrinkage}\label{sec:ad_dep_data}

\subsection{Trend filtering on a graph}\label{sec-gtf}
Graphs are an effective and general tool to characterize dependencies among observations. Suppose we observe dependent data $y_i \in \mathbb{R}$ for $i=1,\ldots,n$. Dependencies may be modeled via  graph $G = \left(V, E\right)$  with vertices (or nodes)
$V = \left\{1, \dots, n\right\}$ corresponding to observations  and edges  $E = \left\{e_1, \dots, e_m\right\}$ that list the potential dependencies among them.  Prominent examples include:
\begin{enumerate}
    \item Time series data: the \emph{chain graph} connects sequential points, $E = \{(i-1, i), \ i=2,\ldots,n\}$, with $m=n-1$ edges;
    \item Image data: the \emph{lattice graph} connects neighboring pixels on a $d_1 \times d_2$ grid (Figure~\ref{fig:sim_lattice}) with $n=d_1d_2$ vertices and $m=2n - d_1 - d_2$ edges;
    \item Areal data: \emph{spatial adjacency graphs} connect neighboring areal units, such as U.S. counties (Section~\ref{sec:us_unemployment}). 
\end{enumerate}
These graphs also may be combined to account for multiple dependencies simultaneously, such as a time series of images (Section~\ref{sec:st_sims}) or spatio-temporal areal data (Section~\ref{sec:us_unemployment}). Independent observations are described by the empty graph $E = \emptyset$. Although seemingly trivial, this setting is worth revisiting because it clarifies how our dependent data models and priors  generalize beyond the familiar independent versions. 

The goal of graph trend filtering is to extract a signal  from noisy data observed over the graph: 
\begin{equation}
  \label{eq:likelihood}
    y_i = \beta_i + \epsilon_i,\quad \epsilon_i \mid \sigma_{\epsilon}^2 \overset{\mathrm{iid}}{\sim} \mathcal{N}\left(0, \sigma_{\epsilon}^2\right), \quad i \in V.
\end{equation}
Broadly, we seek 1) point estimates and uncertainty quantification for
the signal $\{\beta_i\}_{i \in V}$ and 2) prediction and imputation
for future or unobserved data, say $\{\tilde y_i\}_{i\in V}$. Central
to each task is the graph dependence: connected observations
$(i,j) \in E$ are expected to be similar, e.g., with
$\vert \beta_i - \beta_{j}\vert$ small. This \emph{smoothness} over
the graph is critical in each of those goals---point estimation,
uncertainty quantification, prediction, and imputation. Yet still, stark differences may arise even between connected
observations. For instance, time series undergo change points, with
consecutive observations that differ substantially; images change
abruptly at object boundaries; and neighboring areal units may diverge
due to structural differences such as local policies and
demographics. Thus, our models, estimates, and uncertainties cannot
oversmooth on the graph, but rather must be \emph{adaptive} to sharp
changes between connected observations.

In order to convert general graph dependencies into viable statistical
models and estimators, we first define \emph{graph difference
  operators}. Let $\delta:E \to \left\{-1, 0, 1\right\}^{n}$ be the
(oriented) incidence function, so that
$\delta \left(e\right) = \left(0, \dots, -1, 0, \dots, 1, \dots,
  0\right)^\top$ for edge $e = (i, j)$, where $-1$ and $1$ occur at
the $i$th and $j$th entries, respectively. Then the $k$th order graph
difference operators are defined recursively
\citep{wangTrendFilteringGraphs2016}:
  \begin{equation}\label{graph-diff-op}
  \Delta^{\left(k + 1\right)} = \begin{cases}
    \Delta^{\left(1\right)} = (\delta(e_1), \dots,
  \delta(e_m))^\top & k = 0,\\
    \Delta^{\left(k + 1\right)} = \left(\Delta^{\left(1\right)}\right)^\top \Delta^{\left(k\right)} & k \text{ is odd},\\
    \Delta^{\left(k + 1\right)} = \Delta^{\left(1\right)} \Delta^{\left(k\right)} & k \text{ is even}.
  \end{cases}
\end{equation}
This family of operators can be alternatively formulated in terms of
the graph Laplacian $L=D - A$, where $A$ is the adjacency matrix and
$D$ is the degree matrix. Specifically, $\Delta^{\left(2\right)} = L$
is the graph Laplacian,
$\Delta^{\left(3\right)} = \Delta^{\left(1\right)}L$ is the gradient
of the Laplacian, $\Delta^{\left(4\right)} = L^2$ is the bi-Laplacian,
etc. This is akin to computing derivatives of increasing order; when used as a penalty term, it 
encourages the estimated signal to follow a piecewise polynomial trend
over the graph.

To interpret these operators, consider the output from applying \eqref{graph-diff-op} to $\pmb \beta = (\beta_1,\ldots, \beta_n)^\top$:
\begin{equation}\label{gdo-apply}
    \pmb \omega = \Delta^{(k+1)}\pmb \beta.
\end{equation}
Note that the dimension changes with $k$:
$\Delta^{\left(k + 1\right)}$ is $m \times n$ when $k$ is even or zero
and $n \times n$ when $k$ is odd. Thus, $k$ determines whether the
output $\pmb \omega$ is indexed by edges ($k$ even or zero) or
vertices ($k$ odd). For completeness, we define $\Delta^{(0)} = I_n$
for $k=-1$, which sets $\pmb \omega = \pmb \beta$ to reproduce the
independent case. In general, we assume that $G$ is connected but provide exceptions for isolated nodes below.

When $k=0$, the first-order difference operator induces
$\omega_{e} = \beta_i - \beta_{j}$ for each edge $e=(i,j) \in E$. By
regularizing $\Vert \pmb \omega \Vert$ toward zero, we obtain a trend
$\pmb \beta$ that is locally constant over the graph $G$, so that
connected observations share the same value. In frequentist
estimation, $\ell_1$-penalization of $\pmb \omega$ is known as
\emph{total variation denoising} or the \emph{fused lasso}
\citep{padillaDFSFusedLasso2018,JMLR:v25:23-1061,fanApproximate$ell_0$penalizedEstimation2018}. Among
Bayesian methods, shrinkage via a Gaussian prior on $\pmb \omega$ is
often called a \emph{random walk prior} (or local level model)
\citep{rueGaussianMarkovRandom2005} for time series data or an
\emph{intrinsic conditional autoregressive} (ICAR) model for areal
(spatial) data \citep{besagConditionalIntrinsicAutoregressions1995}.

When $k=1$, the operator \eqref{graph-diff-op} yields  $\omega_i = n_{i} \beta_i - \sum_{j \sim i} \beta_{j}$, where
$n_{i}$ is the number of neighbors for the
$i$th node. Now, regularizing $\vert \omega_i \vert$ toward zero pulls
$\beta_i$ toward an average of its neighbors, which produces a locally
linear trend over the graph.

For illustration, consider a chain graph for time series data. When
$k=0$, $\omega_e = \beta_i - \beta_{i-1}$, so $\omega_e = 0$ implies
that the signal inherits the value from the previous time point,
$\beta_i = \beta_{i-1}$, and the trend is locally constant. When
$k=1$, $\omega_i = (\beta_{i+1} - \beta_i) - (\beta_i - \beta_{i-1})$,
so $\omega_i = 0 $ has two equivalent implications: 1) the
\emph{change} in trend is locally constant, i.e., the trend is locally
linear, and 2) the trend $\beta_i = (\beta_{i+1} - \beta_{i-1})/2$
averages the immediate time points before and after. The pattern
continues and applies for general graphs $G$, so that increasing $k$
leads to higher order polynomials akin to roughness penalties on the
$(k+1)$th derivative.

Based on the graph difference operators in \eqref{graph-diff-op},
\cite{wangTrendFilteringGraphs2016} proposed to estimate the trend by
solving
\begin{equation}\label{gtf-est}
    \pmb{\hat \beta}_\lambda = \arg\min_{\pmb \beta \in \mathbb{R}^n} \frac{1}{2} \Vert \pmb y - \pmb \beta \Vert_2^2 + \lambda \Vert \Delta^{(k+1)}\pmb \beta\Vert_1
\end{equation}
where $\lambda > 0$ is a tuning parameter. By design, the
$\ell_1$-penalty encourages sparsity in
$\pmb \omega = \Delta^{(k+1)}\pmb \beta$, which induces this
generalized notion of (polynomial) smoothness over the graph for the
estimated signal $\pmb{\hat\beta}_\lambda$. Equally important are the
non-sparse entries of $\pmb \omega$, which admit local adaptivity in
the trend estimate. \cite{wangTrendFilteringGraphs2016} designed
algorithms to solve \eqref{gtf-est} depending on the choice of $k$,
which are implemented in the \texttt{R} package \texttt{genlasso} and
used subsequently for comparisons (Section~\ref{sec:sim-data}). The
tuning parameter $\lambda$ is typically selected using
cross-validation or based on a pre-specified degrees of freedom.

For our purposes, we use a subtle modification to the graph
difference operator initially defined in
\cite{wangTrendFilteringGraphs2016}. Consider isolated vertices
$i^* \in V$ with $n_{i^*} = 0$. Such components do not factor into
\eqref{graph-diff-op}, so the trend estimator in \eqref{gtf-est}
satisfies $\hat \beta_{\lambda, i^*} = y_{i^*}$. Yet for Bayesian
inference, explicitly modeling each $\beta_{i^*}$ remains important:
we may still wish to apply a shrinkage prior for $\beta_{i*}$ and to
infer the trend $\pmb \beta$ jointly. Explicitly,
 \eqref{gdo-apply} does not involve \emph{any}
isolated vertices, so building a model or prior from this output would
neglect all such $\beta_{i^*}$. Instead, we may simply modify the
graph $G$ to include self-loops for isolated vertices. Equivalently,
each graph difference operator in \eqref{graph-diff-op} expands to be
block diagonal with an identity matrix,
$\mbox{bdiag}( \Delta^{\left(k + 1\right)}, I_{n^*})$, with the $n^*$
isolated vertices ordered last for convenience. Now, the output
\eqref{gdo-apply} satisfies $\omega_{i^*} = \beta_{i^*}$ for any
isolated vertex. This ensures that our dependent data models
appropriately generalize the independent version in which  all vertices
are isolated.

\subsection{Graph-dependent shrinkage priors}\label{sec:graph_shrink}

To translate the graph trend filtering problem into a Bayesian
framework, the natural strategy is to adopt the likelihood implied by
\eqref{eq:likelihood} and replace the $\ell_1$-penalty in
\eqref{gtf-est} with a prior on
$\pmb \omega=\Delta^{(k+1)}\pmb \beta$. The structure of this prior
requires great care. First, the prior must encourage (near-) sparsity
of $\pmb \omega$ in order to produce a smooth trend. At the same time, the prior must not overshrink the nonzero
entries of $\pmb \omega$, which would impede the model's adaptivity to
abrupt changes in the signal. Equivalently, the prior must have
sufficiently heavy tails. Many common priors do not satisfy these criteria, such as Normal-Inverse-Gamma priors and the Bayesian lasso \citep{parkBayesianLasso2008}. 
Second, the computational burden is substantial: besides the sample size $n$, computational performance
will be determined by the density of the graph and $k$, and
specifically by the number and location of nonzero elements of
$\Delta^{(k+1)}$ (see Section~\ref{sec:mcmc}). Priors that incur a
sizable computing cost for independent data, such as spike-and-slab
priors, are unappealing here. Third, 
$\pmb \omega$ may be indexed by edges ($k$ even
or zero) or vertices ($k$ odd). Either way, any local shrinkage
parameters, like the trend itself, are oriented on the graph. Thus, we
\emph{again} leverage the graph in our model, now to borrow
information among neighboring shrinkage parameters. Finally, graph differences supply relative and thus incomplete information, which may result in improper posteriors. We address each of these challenges below. 

Consider the general family of conditionally Gaussian, global-local shrinkage
priors:
\begin{align}\label{eq:prior}
    \omega_j \mid \tau, \lambda_j &\overset{\mathrm{ind}}{\sim} \mathcal{N}\left(0, \tau^2\lambda_j^2\right) \\
    \label{eq:prior-lam}
    \pmb \lambda &\sim  \pi (\pmb \lambda)%
\end{align}
where $j$ indexes edges ($k$ even or zero) or vertices ($k$ odd), $\pmb \lambda = (\lambda_1, \lambda_2, \ldots)^\top$ are the local shrinkage parameters with a dimension that depends on $k$, and $\tau > 0$ is a global shrinkage parameter. The conditionally Gaussian distribution in
\eqref{eq:prior} enables scalable computing for the
trend $\pmb \beta$ (see Section~\ref{sec:mcmc}). The choice of
$\pi(\pmb\lambda)$ determines the shrinkage behavior of the
hierarchical prior \eqref{eq:prior}--\eqref{eq:prior-lam}. Such priors
are typically iid: the horseshoe prior is
$\lambda_j \sim^\mathrm{iid} C^+(0,1)$ 
and the Bayesian
lasso is
$\lambda_j^2  \sim^\mathrm{iid}\mbox{Exp}(1/2)$. 
The
Normal-Inverse-Gamma prior fixes $\lambda_j = \lambda$ and uses an
Inverse-Gamma prior on $\lambda^{2}$. This is also a typical
choice for ICAR models (usually with $k=1$). These priors are all
considered as competitors in Section~\ref{sec:sim-data}.

The main limitation of these shrinkage priors is that they ignore the
graph information. However, building a graph-informed model for
$\pmb \lambda$ has several new challenges beyond the traditional graph
trend filtering problem (e.g., \eqref{gtf-est}). First, the local
shrinkage parameters are positive, $\lambda_j > 0$, unlike the trend
$\beta_i \in \mathbb{R}$. Thus, incorporating graph dependencies
cannot be done exactly as in \eqref{gdo-apply}.  Second,
$\pi(\pmb \lambda)$ should preserve the appealing shrinkage properties
of (conditionally) iid shrinkage priors such as the horseshoe \citep{carvalhoHorseshoeEstimatorSparse2010}. Without
those, the graph-informed smoothness and local adaptivity for the
trend from \eqref{eq:likelihood} and \eqref{gdo-apply} would be
lost. Finally, even if we can design such a prior that features
both graph dependencies \emph{and} desirable shrinkage properties, we
must ensure that model-fitting with \eqref{eq:likelihood},
\eqref{gdo-apply}, and \eqref{eq:prior}--\eqref{eq:prior-lam} remains theoretically valid and 
computationally tractable.

We introduce a \textit{graph-dependent shrinkage prior} (GDSP) to achieve these goals: 
\begin{align}
    \label{eq:prior-2}
    h_j &= \log(\tau^2\lambda_j^2)\\
    \label{eq:prior-3}
    \Delta_k^{k_h + 1}\pmb{h} &= \pmb{\eta}, \quad \eta_i \overset{iid}{\sim}Z(a, b, 0, 1), \quad a,b > 0 \\
    \label{eq:h-priors}
    h_0 =\log(\tau^2) &\sim \Normal\bigl(\muG,\; s_0^2\bigr), \quad \mu_0 \in \mathbb{R}, s_0 > 0
\end{align}
where $\muG = \mu_0 + s_k s_0^2/2$.
The GDSP hyperparameters are $a,b,\mu_0,s_0$, while $s_{k}$ (and thus $\mu_k$) depends only on whether $k$ is even or odd (see below). 
 This prior has several features that require explanation: the log transformations, the shrinkage graph difference operator $\Delta_k^{k_h + 1}$ together with the global scale prior \eqref{eq:h-priors}, and the (standard) $Z$-distribution in \eqref{eq:prior-3} with density $p_Z(z) = \{\mathrm{B}(a, b)\}^{-1}
{\{\exp{(z)}\}}^{a} {\{1 +
    \exp{(z)}\}}^{-(a + b)}$ \citep{barndorff-nielsenNormalVarianceMeanMixtures1982}.

  First, the log transformations in \eqref{eq:prior-2} and \eqref{eq:h-priors} ensure
  positivity of the global ($\tau$) and local ($\lambda_j$) shrinkage parameters. The
  log-variance parameters $h_j \in \mathbb{R}$ are more amenable to
  the differences and local averaging induced by the graph difference
  operator in \eqref{eq:prior-3}. These remain interpretable on the
  $\tau\lambda_j$-scale as multiplicative (rather than additive) effects.

  Second, the shrinkage graph difference operator $\Delta_k^{k_h + 1}$
  has its own order $k_h$, but also depends on $k$ from
  \eqref{gdo-apply}.
  When $k$ is odd,  $\pmb \lambda$ is indexed by the
  vertices of $G$ just like the trend, so we define
  $\Delta_k^{k_h + 1} = \Delta^{k_h + 1}$ exactly as in
  \eqref{graph-diff-op}.
  When $k$ is even or zero, $\pmb \lambda$ is
  indexed by edges so the same approach cannot apply. Then we
  construct the \emph{line graph} of $G$, which connects any edges
  that share a common vertex in $G$. For instance, the line graph of
  the chain graph connects the edge $(i-1, i)$ with both its
  predecessor $(i-2, i-1)$ and its successor $(i, i+1)$. With this new
  graph, we appropriately define the shrinkage graph difference
  operator via the initialization and recursions in
  \eqref{graph-diff-op}, again up to order $k_h$.
  In both cases, isolated vertices are treated exactly as
  for the trend (see the discussion at the end of
  Section~\ref{sec-gtf}).

Third, a more subtle aspect of \eqref{eq:prior-3} is that it only supplies a prior on the differences of the log-variances, not their level (Lemma~\ref{lem:level-free}). Crucially, a prior of the form
\eqref{eq:prior-2}--\eqref{eq:prior-3} is not guaranteed to be proper or ensure a proper posterior. The critical step is to parameterize and assign a prior to the level of the log-variances, which we represent as the global scale  $h_0 = \log(\tau^2)$. We first decompose the log-variances into a global (mean) component and an orthogonal component,
\begin{equation}\label{eq:h-decomp}
    h_j = h_0 + h_j^\perp, \quad \sum_{j=1}^m h_j^\perp = 0
\end{equation}
so that $h_0 = m^{-1}\sum_{j=1}^m h_j$ by design. Noticing that $\Delta_k^{k_h + 1}\pmb{h} = \Delta_k^{k_h + 1}\pmb{h}^\perp$ and $\lambda_j = \exp(h_j^\perp/2)$, it follows that \eqref{eq:prior-3} supplies a prior \emph{only} for the local scales $\pmb \lambda$. The global scale aims to learn the global smoothness over the graph. However, the data inform $s_k$ fewer directions than $\pmb\omega$ has coordinates, where $s_k = 1$ for $k$ odd and $s_k=m-n+1$ for $k$ even (see the discussion after Lemma~\ref{lem:global-shrinkage-marginal}). 
In the latter case, $s_k$ may be large and the global scale is pulled toward zero on account of these free parameters---thereby overshrinking across the whole graph. 
The proposed prior \eqref{eq:h-priors} counteracts this effect: the implied prior mean is $\mathbb{E}(\tau^2) = \exp\{\mu_0 + (s_k + 1)s_0^2/2\}$, where the offset $s_k s_0^2/2$ corrects the location shift due to the free coordinates. Thus, the prior that acts on $h_0$  is effectively $\Normal(\mu_0, s_0^2)$. The implied joint prior on $\bfh$ can be characterized explicitly (Lemma~\ref{lem:level-free}). 

Finally, the $Z$-distribution endows $\pi(\pmb\lambda)$ with  shrinkage behavior. In the special case without graph dependence ($k_h = -1$), the innovations in \eqref{eq:prior-3} become  $\lambda_j^2 = \exp(\eta_j)$. The
$Z$-prior for $\eta_j$ then induces an inverted-Beta prior for $\lambda_j^2$, which includes several important special cases outlined in Table~\ref{tab:inv-beta} \citep{kowalDynamicShrinkageProcesses2019}. Namely, when $k_h = -1$ and $a = b = 1/2$, the prior \eqref{eq:prior-2}--\eqref{eq:prior-3} exactly reproduces the horseshoe prior (similarly for isolated vertices with any $k_h$). More importantly, when $k_h \ge 0$, the same innovations act on the differences of the log-variances, and the GDSP generalizes the horseshoe and other shrinkage priors for dependent data settings. For instance, consider $k_h = 0$: for an edge $e = (i,j)$, the local shrinkage parameter $\lambda_i^2 = \lambda_j^2\exp(\eta_{ij})$ effectively takes the familiar (iid) shrinkage prior determined by $\exp(\eta_{ij})$ (Table~\ref{tab:inv-beta}) and scales it based on neighboring shrinkage parameters.

\begin{table}[h]
  \centering
  \caption[Special cases of the inverted beta prior]{The (standard) $Z$-distribution on the log-scale corresponds to several well-known shrinkage priors.}
  \begin{tabular}[b]{lll}
    Parameters & Prior & Reference \\
    \hline
    $a = b = \frac{1}{2}$ & Horseshoe & \cite{carvalhoHorseshoeEstimatorSparse2010}\\
    $a = \frac{1}{2}$, $b = 1$ & Strawderman-Berger & \cite{strawdermanProperBayesMinimax1971} \\
    $a = 1$, $b = 2 - c$, $c > 0$ & Normal-Exponential-Gamma & \cite{griffinAlternativePriorDistributions2005}\\
    $a = b \to 0$ & (Improper) normal-Jeffreys & \cite{figueiredoAdaptiveSparsenessSupervised2003} \\
  \end{tabular}
  \label{tab:inv-beta}
\end{table}

When $G$ is a chain graph, the GDSP
\eqref{eq:prior-2}--\eqref{eq:h-priors} is related to the dynamic
shrinkage prior \citep{kowalDynamicShrinkageProcesses2019}, which
instead used a stationary autoregressive model in place of
\eqref{eq:prior-3}. The primary differences here are the use of
$\Delta_k^{k_h + 1}$ for general dependent data described by a
graph, including the necessary accommodations for even vs.\! odd $k$
in \eqref{gdo-apply}, the global scale prior \eqref{eq:h-priors}, and isolated vertices, which were not
addressed in previous work. This
generalized prior also requires modifications to the MCMC sampling
algorithm (see Section~\ref{sec:mcmc}).

\subsection{Theory for GDSP}\label{sec-theory}
To further justify the proposed GDSP, we detail the ways in which it
combines graph information with desirable shrinkage
behavior. Specifically, we study  prior concentration,  posterior adaptivity, and  posterior propriety. 
Following previous work on shrinkage priors, we focus on the
\emph{shrinkage coefficients} $\kappa_i = 1/(1+\tau^2\lambda_i^2)$. To
interpret this term, consider the likelihood \eqref{eq:likelihood} and
the prior \eqref{eq:prior}. For simplicity, fix $\sigma_\epsilon = \tau = 1$
and take $k = -1$ in \eqref{gdo-apply} to neglect any graph smoothing
for the trend. Then the estimated trend is
$\mathbb{E}(\beta_i\mid \pmb y) = \{1 -\mathbb{E}(\kappa_i\mid \pmb y
)\}y_i$. The shrinkage coefficients satisfy $\kappa_i \in (0,1)$ almost surely, but the endpoints are meaningful:
$\kappa_i \approx 0$ implies no shrinkage while $\kappa_i \approx 1$
implies aggressive shrinkage to zero.
For reference, the horseshoe prior uses
$\kappa_i \sim^\mathrm{iid} \mathrm{Beta}(1/2, 1/2)$ to place
substantial prior mass near both endpoints.

For the GDSP, we are similarly interested in the behavior of
$\kappa_i$, but also in how its distribution is informed by its
neighbors $\left\{\kappa_{j}\right\}_{j \sim i}$. We focus on non-isolated vertices of the shrinkage graph: $G$ itself when $k$ is odd, and the line graph of $G$ when $k$ is even or zero. We write $n_i$ to denote the number of neighbors of vertex $i$,  either a node ($k$ odd) or an edge ($k$ even or zero), and require that $n_i \geq 1$. For
isolated vertices, the prior implies
$\kappa_{i} \sim \mathrm{Beta}(b, a)$ as for iid shrinkage priors. We specify $k_h = 0$ here, which illustrates the graph-informed
 shrinkage behavior and is the default we use in the
applications. Generalizations for higher-order differences ($k_h=1$) are in Appendix~\ref{sec:kh1}. Initially, we fix $\tau = 1$, as is standard.

We first derive the conditional prior density of the GDSP, using $a=b$ for convenience and in anticipation of the horseshoe GDSP with $a=b=1/2$. All proofs are in Appendix~\ref{sec:proofs}.
\begin{theorem}[Conditional Prior Density]\label{thm:cond-prior}
For the GDSP with $k_h = 0$ and $a = b$, the conditional prior
density of $\kappa_i$ given the neighboring values on the shrinkage graph
$\{\kappa_j\}_{j \sim i}$
is
\begin{equation}\label{eq:cond-prior}
  \pi(\kappa_i \mid \{\kappa_j\}_{j \sim i}) \;\propto\;
  \frac{\{\kappa_i(1-\kappa_i)\}^{a n_i - 1}}
       {\prod_{j \sim i}\bigl\{\kappa_i(1-\kappa_j) + (1-\kappa_i)\kappa_j\bigr\}^{2a}}.
\end{equation}
\end{theorem}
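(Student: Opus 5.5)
With $k_h=0$ and $\tau=1$, the GDSP puts independent $Z(a,a,0,1)$ innovations $\eta_e = h_j - h_i$ on the edges $e=(i,j)$ of the shrinkage graph. The joint (improper, level-free) prior on $\bfh$ is therefore proportional to $\prod_{e=(i,j)} p_Z(h_j-h_i)$.

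\textbf{Step 1: orientation does not matter.} Since $a=b$, the $Z$-density is symmetric: $p_Z(z)\propto e^{az}(1+e^z)^{-2a} = (e^{z/2}+e^{-z/2})^{-2a}$. So each factor can be written without regard to the edge's orientation.

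\textbf{Step 2: condition on the neighbors.} Fixing $\{h_j\}_{j\sim i}$, only the $n_i$ factors involving vertex $i$ depend on $h_i$. Hence
\[
\pi(h_i\mid\{h_j\}_{j\sim i}) \propto \prod_{j\sim i}\bigl(e^{(h_i-h_j)/2}+e^{(h_j-h_i)/2}\bigr)^{-2a}.
\]
This uses the Markov structure of the pairwise product. Strictly, one should condition on all other coordinates, and the full conditional then depends only on the neighbors. The global level $h_0$ plays no role once $\tau=1$ is fixed, consistent with Lemma~\ref{lem:level-free}.

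\textbf{Step 3: express each factor in $\kappa$.} With $\tau=1$ we have $e^{h}=\lambda^2=(1-\kappa)/\kappa$. Setting $x_i=\lambda_i^2$, each factor becomes
\[
\Bigl(\frac{x_i+x_j}{\sqrt{x_ix_j}}\Bigr)^{-2a} = (x_ix_j)^{a}(x_i+x_j)^{-2a}.
\]
Substituting $x=(1-\kappa)/\kappa$ gives
\[
x_i+x_j = \frac{\kappa_j(1-\kappa_i)+\kappa_i(1-\kappa_j)}{\kappa_i\kappa_j}.
\]
Dropping the terms that depend only on $\kappa_j$, each factor is proportional in $\kappa_i$ to
\[
\bigl\{(1-\kappa_i)/\kappa_i\bigr\}^{a}\,\kappa_i^{2a}\,\bigl\{\kappa_i(1-\kappa_j)+(1-\kappa_i)\kappa_j\bigr\}^{-2a}
= \{\kappa_i(1-\kappa_i)\}^{a}\,\bigl\{\kappa_i(1-\kappa_j)+(1-\kappa_i)\kappa_j\bigr\}^{-2a}.
\]
Taking the product over the $n_i$ neighbors yields the factor $\{\kappa_i(1-\kappa_i)\}^{an_i}$ and the denominator in \eqref{eq:cond-prior}.

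\textbf{Step 4: Jacobian.} We have $h_i=\log(1-\kappa_i)-\log\kappa_i$, so
\[
\Bigl|\frac{dh_i}{d\kappa_i}\Bigr| = \frac{1}{\kappa_i(1-\kappa_i)}.
\]
This lowers the exponent by one and gives $an_i-1$, completing \eqref{eq:cond-prior}.

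The algebra is routine. The main point needing care is justifying that the conditional is well defined despite the improper level-free prior on $\bfh$. I would handle this by noting two things. First, the full conditional of a single coordinate given all others is proper: near $\kappa_i\to0$ or $1$ the density behaves like $\kappa_i^{an_i-1}$ or $(1-\kappa_i)^{an_i-1}$, which is integrable since $an_i>0$. Second, this full conditional is unaffected by the proper prior on $h_0$ once $\tau$ is fixed.
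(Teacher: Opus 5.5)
Your proposal is correct and follows essentially the same route as the paper: use the symmetry of $p_Z$ when $a=b$ to write the full conditional as $\prod_{j\sim i}p_Z(h_i-h_j)$, substitute $e^{h}=(1-\kappa)/\kappa$ factor by factor, drop the $\kappa_j$-only terms, and apply the Jacobian $\{\kappa_i(1-\kappa_i)\}^{-1}$. The differences are cosmetic. You rewrite each factor symmetrically as $(x_ix_j)^{a}(x_i+x_j)^{-2a}$, and you add an integrability check. Your handling of the level, conditioning through the difference relation alone with $\tau=1$ fixed, matches the paper's stated convention.
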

The conditional prior~\eqref{eq:cond-prior} induces local coherence
in shrinkage behavior through the neighboring factors. Each term in the denominator,
$\kappa_i(1-\kappa_j) + (1-\kappa_i)\kappa_j$ is near zero when $\kappa_i$
and $\kappa_j$ lie near the same extreme, and near one when they lie near
opposite extremes.  When the neighbors unanimously do not apply shrinkage ($\kappa_j \approx 0$ for all $j \sim i$), the conditional prior  concentrates near zero;
when they all apply shrinkage ($\kappa_j \approx 1$ for all $j \sim i$), it concentrates near
one.
The GDSP propagates shrinkage decisions across the graph, encouraging spatially adaptive patterns to distinguish signal and noise. 

More concretely, we show how the shrinkage behavior under the prior reacts to the shrinkage effects among the neighbors, and specifically for the horseshoe GDSP with $a=b=1/2$.
First, we consider the case when the neighbors are unanimous in
non-shrinkage, i.e., adapting for signals.
\begin{theorem}[Prior Concentration: Neighbors Unshrunk]\label{thm:prior_lims}
    For the horseshoe GDSP with $k_h = 0$, $a
    = b = 1 / 2$, the conditional prior
    distribution  satisfies
    \[
    \mathbb{P}(\kappa_i < \varepsilon \mid \{\kappa_j\}_{j
      \sim i}) \to 1 \quad \mbox{as }  \kappa^* \to 0
      \]
      for any $\varepsilon > 0$, where $\kappa^* =
    \max_{j \sim i} \{\kappa_j\}$.
\end{theorem}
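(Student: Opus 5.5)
The plan is to work directly from the conditional density in Theorem~\ref{thm:cond-prior} with $a=b=1/2$:
\[
\pi(\kappa_i \mid \{\kappa_j\}_{j\sim i}) \propto f(\kappa_i) := \frac{\{\kappa_i(1-\kappa_i)\}^{n_i/2-1}}{\prod_{j\sim i} d_j(\kappa_i)}, \qquad d_j(\kappa) = \kappa(1-\kappa_j) + (1-\kappa)\kappa_j .
\]
The heuristic is simple. As $\kappa^*\to 0$, each $d_j(\kappa)\to \kappa$, so $f(\kappa)\to \kappa^{-n_i/2-1}(1-\kappa)^{n_i/2-1}$. This limit is not integrable at $0$. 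Hence the normalizing constant blows up because of mass near zero, while the mass on $[\varepsilon,1)$ stays bounded. I will write
\[
\mathbb{P}(\kappa_i\ge\varepsilon\mid\cdot) = \frac{\int_\varepsilon^1 f}{\int_0^1 f}
\]
and show that the numerator is bounded uniformly and the denominator diverges.

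First I check that the conditional law is proper for each fixed configuration with $\kappa^*>0$ (and all $\kappa_j \in (0,1)$). Each $d_j(\kappa)$ is a convex combination of $\kappa_j$ and $1-\kappa_j$, both in $(0,1)$. So $d_j$ is bounded away from zero on $[0,1]$. Since $n_i\ge 1$, the exponent $n_i/2-1>-1$, and $f$ is integrable at both endpoints.

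Next comes the numerator bound. Fix $\varepsilon\in(0,1)$ and assume $\kappa^*\le 1/2$. For $\kappa\ge\varepsilon$ we have $d_j(\kappa)\ge \kappa(1-\kappa_j)\ge \varepsilon/2$. Therefore
\[
\int_\varepsilon^1 f \le (2/\varepsilon)^{n_i}\,\mathrm{B}(n_i/2,n_i/2),
\]
a constant independent of the neighbors.

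For the denominator I use the upper bound $d_j(\kappa)\le \kappa+\kappa_j\le \kappa+\kappa^*$. This gives $d_j(\kappa)\le 2\kappa$ whenever $\kappa\ge\kappa^*$. Hence, for $\kappa^*<\varepsilon\le 1/2$,
\[
\int_0^1 f \ge \int_{\kappa^*}^{\varepsilon} \kappa^{n_i/2-1}(1-\kappa)^{n_i/2-1}(2\kappa)^{-n_i}\,\dd\kappa \ge c\int_{\kappa^*}^{\varepsilon}\kappa^{-n_i/2-1}\,\dd\kappa .
\]
Here $c = 2^{-n_i}\min\{1,(1/2)^{n_i/2-1}\}>0$. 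The right-hand side equals $c\,(2/n_i)\{(\kappa^*)^{-n_i/2}-\varepsilon^{-n_i/2}\}$, which tends to $\infty$ as $\kappa^*\to 0$. Combining this with the numerator bound gives $\mathbb{P}(\kappa_i\ge\varepsilon\mid\cdot)\to0$. Larger $\varepsilon$ follow by monotonicity in $\varepsilon$, since the event only grows as $\varepsilon$ grows.

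The main obstacle is getting bounds on the product $\prod_j d_j$ that are uniform in the neighbor configuration and depend only on $\kappa^*$. The two elementary inequalities above handle this:
\[
\kappa(1-\kappa^*)\le d_j(\kappa)\le \kappa+\kappa^* .
\]
The remaining care is only in the exponent bookkeeping for $n_i=1$ versus $n_i\ge2$, where $(1-\kappa)^{n_i/2-1}$ is bounded above or below on $[0,1/2]$. Both cases are covered by the constant $c$.
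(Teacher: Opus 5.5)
Your proof is correct, and it takes a different route from the paper's.

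The paper argues on the log-variance scale. It writes $\pi(h_i\mid\cdot)\propto\prod_{j\sim i}p_Z(h_i-h_j)$ and notes that this density is log-concave with slope at least $n_i a/2$ on $(-\infty,\,h_{\min}-c_0]$. It converts that drift into an exponential left-tail bound via Lemma~\ref{lem:drift}, and then reads off $\mathbb{P}(\kappa_i\ge\varepsilon\mid\cdot)=\mathbb{P}(h_i\le t_\varepsilon\mid\cdot)\to0$ as $h_{\min}\to\infty$.

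You instead stay on the $\kappa$ scale and use the closed form from Theorem~\ref{thm:cond-prior}. You sandwich each factor by $\kappa(1-\kappa^*)\le d_j(\kappa)\le\kappa+\kappa^*$, so the mass on $[\varepsilon,1)$ stays bounded while the normalizer grows like $(\kappa^*)^{-n_i/2}$. This is the same ``bounded numerator, divergent denominator'' device the paper uses for the posterior result (Lemma~\ref{lem:diverge-one-kh0}) and for $k_h=1$ (Lemma~\ref{lem:g-bounds}).

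What each route buys:
\begin{itemize}
\item The paper's route never needs the explicit conditional density, only the slopes of $\log p_Z$. So it is not tied to the horseshoe exponents, and the mirrored drift bound gives Theorem~\ref{thm:prior-concentration-one} immediately.
\item Your route is more elementary and more quantitative. The bound depends on the neighbours only through $\kappa^*$, and it gives $\mathbb{P}(\kappa_i\ge\varepsilon\mid\cdot)=O\{(\kappa^*)^{n_i/2}\}$. This is the correct order, as you can check by taking all neighbours equal to $\kappa^*$. The paper's slope bound of $a/2$ per neighbour only yields $O\{(\kappa^*)^{n_i/4}\}$.
\item The symmetry $\kappa\mapsto1-\kappa$ of the $a=b$ density transfers your argument to the neighbours-shrunk case as well.
\end{itemize}
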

Similar behavior occurs when the neighbors are unanimous in aggressive
shrinkage, i.e., smoothness.
\begin{theorem}[Prior Concentration: Neighbors Shrunk]\label{thm:prior-concentration-one}
For the horseshoe GDSP with $k_h = 0$, $a
    = b = 1 / 2$, the conditional prior
    distribution  satisfies
    \[
    \mathbb{P}(\kappa_i > 1 - \varepsilon \mid \{\kappa_j\}_{j \sim i}) \to 1 \quad \mbox{as } \kappa_* \to 1
    \]
    for any $\varepsilon > 0$, where $\kappa_* = \min_{j \sim i} \kappa_j$.
\end{theorem}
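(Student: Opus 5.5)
The plan is to reduce the statement to Theorem~\ref{thm:prior_lims} through a symmetry of the conditional density~\eqref{eq:cond-prior}. Replace every shrinkage coefficient by its complement, $\kappa \mapsto 1-\kappa$. Under this map the numerator $\{\kappa_i(1-\kappa_i)\}^{a n_i-1}$ is unchanged. Each denominator factor $\kappa_i(1-\kappa_j)+(1-\kappa_i)\kappa_j$ becomes $(1-\kappa_i)\kappa_j+\kappa_i(1-\kappa_j)$, which is the same expression. So if we set $\tilde\kappa_i = 1-\kappa_i$ and $\tilde\kappa_j = 1-\kappa_j$, the conditional density of $\tilde\kappa_i$ given $\{\tilde\kappa_j\}_{j\sim i}$ has exactly the form~\eqref{eq:cond-prior}. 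The Jacobian of the map is $1$, and the normalizing constant is an integral over $(0,1)$ that is also invariant under the map.

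At the level of the original parameters, this symmetry is the fact that $Z(a,a,0,1)$ is symmetric about zero when $a=b$. The map $\kappa\mapsto 1-\kappa$ corresponds to $h\mapsto -h$, which sends the differences $\eta$ to $-\eta$. I would note this as a sanity check.

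With the symmetry in hand, the event $\{\kappa_i > 1-\varepsilon\}$ is the event $\{\tilde\kappa_i<\varepsilon\}$. Also, $\kappa_* = \min_{j\sim i}\kappa_j \to 1$ is equivalent to $\tilde\kappa^* = \max_{j\sim i}\tilde\kappa_j = 1-\kappa_* \to 0$. Therefore
\[
\mathbb{P}(\kappa_i > 1-\varepsilon \mid \{\kappa_j\}_{j\sim i}) = \mathbb{P}(\tilde\kappa_i<\varepsilon\mid\{\tilde\kappa_j\}_{j\sim i}),
\]
and the right-hand side tends to $1$ by Theorem~\ref{thm:prior_lims} applied to the tilde variables with $a=b=1/2$. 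This completes the argument.

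The only genuine obstacle is making sure Theorem~\ref{thm:prior_lims} is stated for arbitrary neighbor configurations with $\kappa^*\to 0$, and not along some particular path. I would check that its proof uses only the density form~\eqref{eq:cond-prior}. If a self-contained argument were preferred, I would repeat that route directly with $a=1/2$. The numerator is $\{\kappa_i(1-\kappa_i)\}^{n_i/2-1}$. As each $\kappa_j\to1$, the factor $\kappa_i(1-\kappa_j)+(1-\kappa_i)\kappa_j \to 1-\kappa_i$, so the unnormalized density behaves like $\kappa_i^{n_i/2-1}(1-\kappa_i)^{-n_i/2-1}$. This limit is non-integrable at $\kappa_i=1$ but bounded on $[0,1-\varepsilon]$ when $n_i \geq 2$; the case $n_i=1$ needs a small extra check near $\kappa_i=0$, where $\kappa_i^{-1/2}$ is still integrable. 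By dominated convergence on $[0,1-\varepsilon]$, the mass there stays bounded. By Fatou's lemma near $1$, the total normalizing mass diverges. Hence the mass on $[0,1-\varepsilon]$ becomes negligible.
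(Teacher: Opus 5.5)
Your proposal is correct, but it is organized differently from the paper's proof. The paper does not derive Theorem~\ref{thm:prior-concentration-one} from Theorem~\ref{thm:prior_lims}. Instead it re-runs the drift argument on the $h$-scale with mirrored constants. The conditional log-density of $h_i$ has derivative $\sum_{j\sim i}\{a-(a+b)\sigma(h_i-h_j)\}\le -n_i b/2$ on $[h_{\max}+c_1,\infty)$, where $\sigma(c_1)=(2a+b)/\{2(a+b)\}$. The right-tail half of Lemma~\ref{lem:drift} then gives an exponential bound on $\mathbb{P}(\kappa_i\le 1-\varepsilon\mid\cdot)$ that vanishes as $h_{\max}\to-\infty$.

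Your reflection $\kappa\mapsto 1-\kappa$ (i.e.\ $h\mapsto -h$) instead yields the exact identity $\mathbb{P}(\kappa_i>1-\varepsilon\mid\{\kappa_j\}_{j\sim i})=\mathbb{P}(\kappa_i<\varepsilon\mid\{1-\kappa_j\}_{j\sim i})$ under \eqref{eq:cond-prior}. The result then becomes a corollary of Theorem~\ref{thm:prior_lims} with no new estimate. It uses $a=b$ essentially, which is harmless here, and it loses nothing: for $a=b$ one has $c_1=c_0=\log 3$, so the bound transported from the proof of Theorem~\ref{thm:prior_lims} coincides with the paper's mirrored bound.

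The obstacle you flag is not really one. The identity holds configuration by configuration and maps $\{\kappa_*\to 1\}$ onto $\{\max_{j\sim i}(1-\kappa_j)\to 0\}$, so the limit in Theorem~\ref{thm:prior_lims}, in whatever sense it holds, transfers verbatim. Moreover, that proof bounds the tail through $h_{\min}$ alone, i.e.\ through $\kappa^*$ alone.

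Your self-contained fallback is also valid. The dominating function on $(0,1-\varepsilon]$ comes from $\kappa_i(1-\kappa_j)+(1-\kappa_i)\kappa_j\ge(1-\kappa_i)\kappa_*\ge\varepsilon\kappa_*$. This is exactly the technique the paper reserves for the posterior result, Theorem~\ref{thm:posterior-shrinkage}(i) via Lemma~\ref{lem:diverge-one-kh0}, where the likelihood factor $\kappa_i^{1/2}e^{-\kappa_i y_i^2/2}$ breaks the reflection symmetry. It gives the limit but no rate.
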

Whether the neighbors unanimously favor non-shrinkage or shrinkage,  the conditional prior  for $\kappa_i$ under the horseshoe GDSP  collects mass in the corresponding region.

Next, we investigate the properties of the conditional \emph{posterior} distribution of $\kappa_i$. For simplicity, we omit graph dependence in the trend ($k=-1$) so that $\omega_i = \beta_i$ and fix $\sigma_\epsilon = 1$.
\begin{theorem}[Posterior Shrinkage Adaptation]\label{thm:posterior-shrinkage}
Consider the horseshoe GDSP with  $a = b = 1/2$, $k_h = 0$, $k = -1$, and $\sigma_\epsilon=1$. The conditional posterior of $\kappa_i$ satisfies:
\begin{enumerate}
\item[\textbf{(i)}] \textbf{Neighborhood-driven shrinkage:} for any $\varepsilon > 0$ and any $M > 0$,
    \[
    \mathbb{P}(\kappa_i > 1 - \varepsilon \mid \bfy, \{\kappa_j\}_{j \sim i}) \to 1 \quad \mbox{as } \kappa_* \to 1
    \]
    uniformly for $\bfy \in \mathcal{Y}_M =\{\bfy \in \R^n : |y_i| \le M\}$, where $\kappa_* = \min_{j \sim i} \kappa_j$.

\item[\textbf{(ii)}] \textbf{Signal preservation:} If $\kappa_j \in (0, 1)$ for all $j \sim i$, then for any $\varepsilon > 0$,
\[
\mathbb{P}(\kappa_i < \varepsilon \mid \bfy, \{\kappa_j\}_{j \sim i}) \to 1 \quad \mbox{as } |y_i| \to \infty.
\]
\end{enumerate}
\end{theorem}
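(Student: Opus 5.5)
Our plan is to combine the conditional prior of Theorem~\ref{thm:cond-prior} with the likelihood for $\kappa_i$ obtained by integrating out $\beta_i$. With $k=-1$, $\sigma_\epsilon=1$, $\tau=1$, we have $\beta_i \mid \lambda_i \sim \Normal(0,\lambda_i^2)$ and $y_i \mid \beta_i \sim \Normal(\beta_i,1)$. Hence $y_i \mid \kappa_i \sim \Normal(0, 1/\kappa_i)$, and the marginal likelihood is $p(y_i\mid\kappa_i) \propto \kappa_i^{1/2}\exp(-\kappa_i y_i^2/2)$. Under the GDSP with $k_h=0$, the other $y_{i'}$ and $\kappa_{i'}$ interact with $\kappa_i$ only through the neighbors.

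With $a=1/2$, the conditional posterior is therefore
\[
\pi(\kappa_i\mid \bfy,\{\kappa_j\}_{j\sim i}) \propto \kappa_i^{1/2}e^{-\kappa_i y_i^2/2}\,\frac{\{\kappa_i(1-\kappa_i)\}^{n_i/2-1}}{\prod_{j\sim i}\{\kappa_i(1-\kappa_j)+(1-\kappa_i)\kappa_j\}}.
\]
Both parts of the theorem then become statements about this explicit density on $(0,1)$.

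For (i), the extra likelihood factor $\kappa_i^{1/2}e^{-\kappa_i y_i^2/2}$ is bounded above by $1$. On $\mathcal{Y}_M$ it is bounded below by $\kappa_i^{1/2}e^{-M^2/2}$. On the region $\kappa_i>1-\varepsilon$ this lower bound is at least $(1-\varepsilon)^{1/2}e^{-M^2/2}$, a constant independent of $\bfy$.

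We then take the ratio of the posterior mass on $\{\kappa_i\le 1-\varepsilon\}$ to the mass on $\{\kappa_i>1-\varepsilon\}$. It is bounded by $e^{M^2/2}(1-\varepsilon)^{-1/2}$ times the corresponding ratio of integrals of the prior kernel. By Theorem~\ref{thm:prior-concentration-one}, and more precisely its proof, that prior ratio tends to $0$ as $\kappa_*\to1$. The bound is uniform in $\bfy \in \mathcal{Y}_M$, which gives (i).

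If only the probability statement of Theorem~\ref{thm:prior-concentration-one} is available, we argue directly. As $\kappa_j\to1$, each denominator factor near $\kappa_i=1$ behaves like $(1-\kappa_i)+(1-\kappa_j)$. So the kernel near $1$ behaves like $(1-\kappa_i)^{n_i/2-1}/\prod_j\{(1-\kappa_i)+(1-\kappa_j)\}$, whose integral diverges like $\log$ or a power of $1/(1-\kappa_*)$. Away from $1$ the kernel stays bounded, uniformly in $\kappa_*$ near $1$.

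For (ii), the neighbors are fixed in $(0,1)$. Each denominator factor then lies in $[\min(\kappa_j,1-\kappa_j),\,1]$, so it is bounded above and away from zero. The posterior kernel is therefore within constant multiples of $g_y(\kappa_i)=\kappa_i^{(n_i+1)/2-1}(1-\kappa_i)^{n_i/2-1}e^{-\kappa_i y_i^2/2}$.

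We show that the mass of $g_y$ on $[\varepsilon,1)$ is $O(e^{-\varepsilon y_i^2/2})$. This uses integrability of $(1-\kappa)^{n_i/2-1}$, which holds since $n_i\ge1$.

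For the mass near zero, substitute $u=\kappa_i y_i^2$. The integral over $(0,\varepsilon)$ is at least $c\,(y_i^2)^{-(n_i+1)/2}\int_0^{\varepsilon y_i^2/2}u^{(n_i-1)/2}e^{-u/2}\,du$, which is polynomial in $1/y_i^2$. The ratio of the two masses is thus exponentially small, and the probability tends to $1$ as $|y_i|\to\infty$.

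The main obstacle is (i). The prior normalizing integral near $\kappa_i=1$ must be shown to blow up, so that the $\kappa_*$-dependent mass concentrates there, and this must be done uniformly over $\bfy$. We will handle it by the sandwich bounds above, which reduce everything to the prior argument.
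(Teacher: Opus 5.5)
Your proposal is correct. Part (ii) is essentially the paper's argument: an exponentially small mass on $[\varepsilon,1)$ is played against a lower bound of order $|y_i|^{-(n_i+1)}$ on $(0,\varepsilon)$ after the substitution $u=\kappa_i y_i^2$. Your observation that each factor $\kappa_i(1-\kappa_j)+(1-\kappa_i)\kappa_j$ is a convex combination of $\kappa_j$ and $1-\kappa_j$, and so lies in $[\min(\kappa_j,1-\kappa_j),1]$, neatly replaces the paper's two separate bounds.

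Part (i) takes a genuinely different route. The paper argues directly. It bounds $\int_0^{1-\varepsilon}g_y$ by a Beta-integral constant using $(1-\kappa_i)\kappa_j\geq\varepsilon\kappa_*$. It then proves $\int_{1-\varepsilon}^1 g_y\to\infty$ with a separate divergence lemma: lower-bound the integrand by a multiple of $t^{n_i/2-1}/(t+1-\kappa_*)^{n_i}$ and apply monotone convergence. That is exactly your fallback.

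Your main route instead sandwiches the likelihood factor: it is at most $1$ everywhere and at least $(1-\varepsilon)^{1/2}e^{-M^2/2}$ on $(1-\varepsilon,1)$. Hence the posterior odds of $\{\kappa_i\leq1-\varepsilon\}$ are at most a constant depending only on $(M,\varepsilon)$ times the prior odds, and those vanish by Theorem~\ref{thm:prior-concentration-one}. The statement of that theorem suffices, provided it is read uniformly over neighbor configurations with a given $\kappa_*$. Its drift-bound proof delivers this, because the bound depends on the neighbors only through $h_{\max}$.

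What each route buys: yours makes (i) a short corollary of prior concentration, and it transfers verbatim to $k_h=1$ via Theorem~\ref{rem:kh1-prior}(ii). The paper's direct argument is self-contained and gives a sharper explicit rate for the odds, $\mathcal{O}\{(1-\kappa_*)^{n_i/2}\}$, versus the $\mathcal{O}\{(1-\kappa_*)^{n_i/4}\}$ that the drift bound yields.

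One small imprecision in your fallback: for $n_i=1$ the prior kernel is not bounded on $(0,1-\varepsilon]$, since it grows like $\kappa_i^{-1/2}$ near $0$. What you need, and what holds, is that its integral over that interval is bounded uniformly for $\kappa_*\geq1/2$.
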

Theorem~\ref{thm:posterior-shrinkage} shows the bidirectional
adaptivity of the horseshoe GDSP. First, (i) shows that if the neighbors unanimously recommend aggressive shrinkage, then the conditional
posterior for $\kappa_i$ follows suit. Thus, the trend $\pmb \beta$
will exhibit local smoothness. Alternatively, in (ii), we find that as
long as the neighboring shrinkage behavior is moderated, the horseshoe GDSP will remove shrinkage for  strong signals. Thus, the trend $\pmb \beta$ will exhibit local adaptivity.

By comparison, higher-order difference operator ($k_h = 1$) link each local shrinkage parameter to its neighbors via $\prod_{j \sim i}(1-\kappa_j)/\kappa_j = \prod_{j \sim i} \lambda_j^2$ rather than $\kappa_*$ or $\kappa^*$ (see Appendix~\ref{sec:kh1}). In general, larger $k_h$ encourages more smoothing over the shrinkage graph, but the neighbors similarly inform prior concentration and posterior adaptation as for $k_h=0$. 

Finally, we establish posterior properiety under the GDSP.  This result is nontrivial: shrinkage priors combine substantial mass near zero with heavy tails, while graph-difference operators only inform relative (not absolute) behavior. Thus, simply combining them is not sufficient. Perhaps surprisingly, the delicate term is the global scale $\tau$: without specifying its prior carefully, the induced joint posterior may not be proper or the estimated trend may oversmooth. 

Anticipating the central role of $\tau$, we first provide additional clarity on the joint prior for 
$\bfh$ under the graph-differencing in \eqref{eq:prior-3} and the global scale in \eqref{eq:h-priors}. 
\begin{lemma}[Joint prior on the log-variance]\label{lem:level-free}
Suppose $G$ is connected with $n \geq 2$ vertices, $k \geq 0$, and
$k_h \geq 0$. Let
$g(\bfh) = \prod_i p_Z\bigl\{(\Delta_k^{k_h+1}\bfh)_i\bigr\}$, the
product running over the rows of $\Delta_k^{k_h+1}$, denote the prior
density kernel induced by \eqref{eq:prior-3}. Then (i) 
$g(\bfh + c\,\bfone_m) = g(\bfh)$ for every $c \in \R$; (ii) the joint prior is $\pi(\bfh) \propto g(\bfh) \phi\bigl(\hzero; \muG,\, s_0^2\bigr)$, where $\phi(\cdot\,; \mu, s^2)$ is the $\Normal(\mu, s^2)$ density and $\hzero = m^{-1}\bfone_m^\top\bfh$; and (iii) $\pi(\bfh)$ is proper. 
\end{lemma}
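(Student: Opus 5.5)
Part (i) follows from the kernel of the shrinkage difference operator. I will show that $\Delta_k^{k_h+1}\bfone_m = 0$ by induction on the recursion \eqref{graph-diff-op}, applied to the shrinkage graph ($G$ itself for $k$ odd, the line graph of $G$ for $k$ even or zero). The base case is that every row $\delta(e)$ of $\Delta^{(1)}$ has entries $-1,+1$, so $\Delta^{(1)}\bfone=0$. For the inductive step, $\Delta^{(k_h+1)}$ is either $(\Delta^{(1)})^\top\Delta^{(k_h)}$ or $\Delta^{(1)}\Delta^{(k_h)}$. In both forms the rightmost factor annihilates constants, by the base case or the inductive hypothesis, so the product does too. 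For $k\ge0$ with $G$ connected and $n\ge2$, the shrinkage graph is connected with no isolated vertices. The line graph of a connected graph is connected, and when $n\ge2$ it has no isolated edges unless $m=1$, which needs a separate remark. So no identity blocks arise, and $g(\bfh+c\bfone_m)=g(\bfh)$ because the argument of each $p_Z$ factor is unchanged.

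For (ii), I will work in the coordinates \eqref{eq:h-decomp}, $\bfh = \hzero\bfone_m + \bfh^\perp$ with $\bfh^\perp\in\bfone_m^\perp$. This is an orthogonal linear change of variables, so it has constant Jacobian. The prior construction assigns \eqref{eq:prior-3} to the differences, which depend only on $\bfh^\perp$ by (i), and it independently assigns \eqref{eq:h-priors} to $\hzero$. The joint density is therefore the product $g(\bfh^\perp)\,\phi(\hzero;\muG,s_0^2)$. Rewriting it in $\bfh$ and using (i) once more to replace $g(\bfh^\perp)$ by $g(\bfh)$ gives the stated form.

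For (iii), Fubini in the same coordinates gives
\[\int \pi(\bfh)\,\dd\bfh \propto \int_\R \phi(\hzero;\muG,s_0^2)\,\dd\hzero \cdot \int_{\bfone^\perp} g(\bfh^\perp)\,\dd\bfh^\perp,\]
so it suffices to show the second integral is finite. The key fact is that, for connected graphs, $\kernel(\Delta_k^{k_h+1}) = \mathrm{span}(\bfone_m)$ exactly. For $\Delta^{(1)}$ this is the classical fact for connected graphs. For higher orders I will write each operator as $L^{j}$ or $\Delta^{(1)}L^{j}$, where $L$ is the Laplacian of the shrinkage graph. Then $\kernel(L^j)=\kernel(L)$ because $L$ is symmetric positive semidefinite, and $\kernel(\Delta^{(1)}L^j)=\kernel(L^{j+1})=\kernel(L)$ because $\kernel\Delta^{(1)}=\kernel L$. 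Hence $D=\Delta_k^{k_h+1}$ restricted to $\bfone^\perp$ is injective, with rank $m-1$.

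The main obstacle is that $D$ may have more rows $r$ than $m-1$ (for $k_h$ even, rows are indexed by edges of the shrinkage graph), so $g$ cannot simply be treated as a density of an iid vector pushed forward by an invertible map. I would handle this by choosing $m-1$ linearly independent rows of $D$, forming $D_S$, which is invertible on $\bfone^\perp$. I then bound the remaining factors by $\sup_z p_Z(z)<\infty$, which holds since $p_Z$ is bounded for $a,b>0$. This gives $\int g \le (\sup p_Z)^{r-m+1}\int \prod_{i\in S}p_Z\{(D_S\bfh^\perp)_i\}\,\dd\bfh^\perp$. Changing variables $\bfeta_S = D_S\bfh^\perp$ yields $|\det D_S|^{-1}\prod_{i\in S}\int p_Z = |\det D_S|^{-1}<\infty$. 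Therefore $\pi(\bfh)$ is proper.
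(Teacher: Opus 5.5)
Your proposal is correct. Parts (i)--(ii) match the paper's proof: shift-invariance from $\Delta_k^{k_h+1}\bfone_m=\mathbf{0}$, then the constant-Jacobian change of variables $\bfh\leftrightarrow(\hzero,\bfh^\perp)$ and multiplication by the level prior.

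The genuine difference is in (iii), where one must show that $g$ is integrable over $\bfone_m^\perp$.
\begin{itemize}
\item The paper (Lemma~\ref{lem:hperp-proper}) uses the exponential tails $p_Z(z)\le \mathrm{B}(a,b)^{-1}e^{-\min(a,b)|z|}$ together with $\|\bfeta\|_1\ge\|\bfeta\|_2\ge\sigma_{\min}^{+}\|\bfh^\perp\|_2$. This dominates all $r$ factors at once by an integrable radial envelope.
\item You instead bound the $r-m+1$ surplus factors by $\sup_z p_Z(z)<\infty$ and make an exact change of variables on $m-1$ independent rows. These rows lie in and span $\bfone_m^\perp$, so $D_SV$ is invertible for an orthonormal basis $V$ of $\bfone_m^\perp$. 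Your $|\det D_S|$ should be read as $|\det(D_SV)|$.
\end{itemize}
Your route needs only that $p_Z$ is a bounded density, with no tail condition. It would therefore carry over unchanged to heavier-tailed innovations, and it gives the explicit bound $(\sup p_Z)^{r-m+1}|\det(D_SV)|^{-1}$ on the normalizing constant. The paper's route avoids a non-canonical choice of rows and gives a stronger pointwise conclusion: the local prior decays exponentially in $\|\bfh^\perp\|_2$. That conclusion relies on the exponential tails of the $Z$-distribution. Likewise, your kernel identification writes the operators as $L^j$ or $\Delta^{(1)}L^j$ and uses $\ker\Delta^{(1)}=\ker L=\ker L^{j}$. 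This is a one-line replacement for the paper's parity-split induction in Lemma~\ref{lem:kernel-preserved}.

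Your flag on $m=1$ (a single edge with $k$ even) is warranted, since this case is allowed by the hypotheses. The line graph is then one isolated vertex. Under the self-loop convention of Section~\ref{sec-gtf} the operator would be $I_1$, so $g(h)=p_Z(h)$, which is not shift-invariant, and (i) fails. The ``separate remark'' therefore has to adopt the reading that $g$ is the empty product, $g\equiv 1$. This is implicitly how the paper treats it when it calls $m=1$ trivial in Lemma~\ref{lem:hperp-proper}.
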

Lemma~\ref{lem:level-free} first shows that the graph-differencing in \eqref{eq:prior-3} is uninformative about the level of $\bfh$. It then describes how the global scale prior \eqref{eq:h-priors} and the decomposition \eqref{eq:h-decomp} combine to induce a proper prior for $\bfh$ using independent global and (graph-informed) local components.

We give the posterior propriety result using an Inverse-Gamma prior on the error variance:
\begin{equation}\label{eq:sigma-prior}
  \sigma_\epsilon^2 \sim \mathrm{Inverse\text{-}Gamma}(a_\sigma, b_\sigma),
  \qquad a_\sigma, b_\sigma > 0.
\end{equation}
No conditions are needed beyond connectedness of the graph.
\begin{theorem}[Posterior Propriety]\label{thm:posterior-proper}
Suppose $G = (V, E)$ is connected with $n \geq 2$ vertices and $\bfy$ is
fully observed. Under the likelihood~\eqref{eq:likelihood}, the
conditionally Gaussian prior~\eqref{eq:prior}--\eqref{eq:prior-lam}, the
GDSP prior~\eqref{eq:prior-2}--\eqref{eq:h-priors}, and the
prior~\eqref{eq:sigma-prior}, the joint posterior distribution of
$(\pmb\beta, \pmb\lambda, \tau^2, \sigma_\epsilon^2)$ is proper for every
$k \geq 0$ and every $k_h \geq 0$.
\end{theorem}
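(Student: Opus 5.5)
The plan is to integrate out $\bfbeta$ analytically, then bound the resulting marginal of $(\bfh, \sigma_\epsilon^2)$ by something integrable against the proper priors from Lemma~\ref{lem:level-free} and \eqref{eq:sigma-prior}. Write $\Delta = \Delta^{(k+1)}$ with rows indexed by $j = 1, \dots, m'$ ($m' = m$ for $k$ even, $m' = n$ for $k$ odd), and $\Lambda = \mathrm{diag}(e^{h_j})$. The conditional prior \eqref{eq:prior} on $\pmb\omega = \Delta\bfbeta$ gives the (possibly improper in $\bfbeta$) kernel
\[
p(\bfbeta \mid \bfh) \propto \prod_j e^{-h_j/2}\exp\bigl\{-\tfrac12 \bfbeta^\top \Delta^\top\Lambda^{-1}\Delta\bfbeta\bigr\}.
\]
Combining this with the Gaussian likelihood, the $\bfbeta$-integral is a Gaussian integral. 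It is bounded by a constant times
\[
\sigma_\epsilon^{-n}\,\det\bigl(\sigma_\epsilon^{-2}I_n + \Delta^\top\Lambda^{-1}\Delta\bigr)^{-1/2}\prod_j e^{-h_j/2},
\]
after dropping the exponential term in $\bfy$, which is at most one.

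The key algebraic step is Cauchy--Binet. Let $r = \rank(\Delta)$; for a connected graph $r = n-1$, since the kernel is spanned by $\bfone_n$. Expanding the determinant, for every set $S$ of $r$ rows with $\Delta_S$ of full row rank,
\[
\det\bigl(\sigma_\epsilon^{-2}I + \Delta^\top\Lambda^{-1}\Delta\bigr) \ge c_S\,\sigma_\epsilon^{-2(n-r)}\prod_{j\in S}e^{-h_j},
\]
with $c_S > 0$ depending only on the graph. Hence the marginal is bounded by
\[
C\,\sigma_\epsilon^{-r}\,\min_S \prod_{j\notin S} e^{-h_j/2},
\]
with exactly $m'-r = s_k$ indices excluded: $s_k = m-n+1$ for $k$ even and $s_k = 1$ for $k$ odd.

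The $\sigma_\epsilon^2$ factor integrates against the Inverse-Gamma$(a_\sigma, b_\sigma)$ prior, because $\mathbb{E}(\sigma_\epsilon^{-r}) < \infty$. This is the only place where a fully observed $\bfy$ and $n \ge 2$ enter. Using the decomposition \eqref{eq:h-decomp}, the remaining factor becomes
\[
e^{-s_k h_0/2}\,\min_S \exp\Bigl(-\tfrac12\sum_{j\notin S}h_j^\perp\Bigr).
\]
By Lemma~\ref{lem:level-free}, $h_0$ and $\bfh^\perp$ are independent and $h_0 \sim \Normal(\muG, s_0^2)$ has all exponential moments, so the global factor is finite. (This is where the precise form of \eqref{eq:h-priors} is used: a heavier-tailed or improper prior on $\log\tau^2$ would fail here.)

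The main obstacle is the local factor, i.e.\ showing
\[
\mathbb{E}\Bigl[\min_S \exp\Bigl(-\tfrac12\sum_{j\notin S}h_j^\perp\Bigr)\Bigr] < \infty
\]
under the prior induced by \eqref{eq:prior-3}. A fixed $S$ does not suffice, because $h^\perp$ is a linear map of the $Z$-innovations $\pmb\eta$. Its exponential moments only exist for coefficients inside $(-a, b)$, which can fail for small $a$ such as the horseshoe case $a = 1/2$.

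I would therefore choose $S$ adaptively. The row sets $S$ are the bases of the row matroid of $\Delta$: spanning trees for $k$ even, and all-but-one vertex for $k$ odd. Taking $S^c$ to be the maximum-weight cobasis for weights $h_j^\perp$ via the greedy algorithm makes $\sum_{j\notin S}h_j^\perp$ as large as possible. For $k$ odd ($s_k = 1$), dropping $\arg\max_j h_j^\perp$, which is nonnegative because the $h_j^\perp$ sum to zero, gives a bound of one immediately. For $k$ even, I would compare the greedy cobasis weight with the sum of the $s_k$ largest $h_j^\perp$. Where these differ (cycles forcing a smaller edge into $S^c$), I would bound the deficit by differences $h_j - h_{j'}$ along paths in the line graph. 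These are sums of boundedly many $\eta$'s, and I would apply Hölder with small enough exponents to stay inside $(-a,b)$. If that comparison is too delicate, the fallback is to bound the minimum by an average of $\exp\{-\tfrac{1}{2}\sum_{j\notin S}h_j\}$ over a family of spanning-tree complements weighted to cancel the $h_0$-dependence. I expect this combinatorial step to carry essentially all of the difficulty.
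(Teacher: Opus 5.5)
Your argument settles $k$ odd, but for $k$ even (including $k=0$ and $k=2$) the decisive step, finiteness of $\mathbb{E}\bigl[\min_S \exp\bigl(-\tfrac12\sum_{j\notin S}h_j^\perp\bigr)\bigr]$, is only a plan, and the plan does not work as sketched.

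Whenever $s_k=m-n+1\ge 1$, $G$ contains a cycle, and hence so does its line graph. For $k_h=0$ the kernel $\prod_i p_Z\{(\Delta_k^{1}\bfh)_i\}$ then has more factors than $\bfh^\perp$ has free coordinates. So $\bfh^\perp$ is not a linear image of iid $Z$-innovations, and H\"older over iid $\eta$'s is not available. For $k_h\ge 1$, $\Delta_k^{k_h+1}\bfh=\bfeta$ no longer writes neighbor differences $h_j-h_{j'}$ as single innovations, so the path-difference device does not apply either.

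The averaging fallback does not escape the problem. Bounding a minimum by an arithmetic average of exponentials still requires exponential moments of fixed linear functionals of $\bfh^\perp$, which is exactly what you flagged as possibly failing. As written, the proof has a hole precisely where you located the difficulty.

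The hole is self-inflicted. By keeping only Cauchy--Binet terms with $|S|=r$, you turn part of a harmless global factor into a local one. Keep instead the $S=\emptyset$ term, i.e.\ the trivial bound $\det(\sigma_\epsilon^{-2}I_n+\Delta^\top\Lambda^{-1}\Delta)\ge\sigma_\epsilon^{-2n}$. Your marginal bound then becomes $C\prod_j e^{-h_j/2}=C\exp(-\tfrac12\sum_j h_j)=C\,e^{-m'\hzero/2}$, because $\sum_j h_j^\perp=0$ under \eqref{eq:h-decomp}. Both $\sigma_\epsilon^2$ and the local scales drop out. Since $\hzero$ is Gaussian, $\mathbb{E}(e^{-m'\hzero/2})=\exp\{-m'\muG/2+(m')^2s_0^2/8\}<\infty$. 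The proper priors on $\bfh^\perp$ (Lemma~\ref{lem:hperp-proper}) and on $\sigma_\epsilon^2$ then complete the argument by Tonelli, for every $k\ge0$ and $k_h\ge0$, with no combinatorics.

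This is exactly the paper's route. Lemma~\ref{lem:trend-marginal} uses $|Q_\alpha|^{-1/2}\le(\sigma_\epsilon^2)^{(n-1)/2}$ and $\prod_j\lambda_j=1$ to get $M\le C\tau^{-m}$, and Lemma~\ref{lem:global-shrinkage-marginal} integrates $\tau^{-m}$ against the log-normal prior. The sharper exponent $s_k$ you were chasing is the true small-$\tau$ rate \eqref{eq:M-asymp} and motivates the offset in $\muG$. Propriety does not need it, because the log-normal level prior has negative moments of every order.
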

The result rests on properties of both global and local components. The global scale prior \eqref{eq:h-priors} has all negative
moments finite, which controls the growth of the marginal likelihood near
total shrinkage for every $k$ and $k_h$
(Lemma~\ref{lem:global-shrinkage-marginal}). For the local
shrinkage, the $Z$-innovations have exponential tails, so the prior on  $\pmb h^\perp$ is proper for every $k_h \geq 0$
(Lemma~\ref{lem:hperp-proper}). Together these results
bound the posterior integral for any connected graph, any $k \geq 0$, and
any $k_h \geq 0$.
The excluded values $k = -1$ and $k_h = -1$ are proper trivially,
since no graph is involved and every prior is proper.

\section{Posterior sampling via Markov Chain Monte Carlo}\label{sec:mcmc}

We design a Gibbs sampling algorithm that exploits several critical features of the model. First, we use sparse and banded matrix operations---reflecting the graph's inherent structure---to jointly sample the trend parameters $\bfbeta$ (all without a loop). Second, we leverage two data augmentations and a Sherman-Morrison update to apply sparse and banded matrix operations for sampling the log-variances $\bfh$. All global parameters are sampled using standard steps. Finally, prediction and imputation sampling is straightforward using (conditionally) independent Gaussian draws. Perhaps surprising, the resulting algorithm achieves competitive or superior time performance relative to frequentist alternatives (Section~\ref{sec:sim-data}), while providing full posterior and predictive inference and principled imputation of missing data. 

We make a few simplifying assumptions for clarity. Initially, we assume no missing data; the case of missing data adds an imputation step (Section~\ref{mcmc-imp}) within the Gibbs sampler. We focus on the horseshoe GDSP ($a = b =1/2$), but extensions to the other choices in Table~\ref{tab:inv-beta} are available. Finally, we present all displays for the edge-indexed case, in which $\pmb\omega$, $\pmb\lambda$, and $\pmb h$ have common dimension $m = |E|$; for odd $k$, these vectors are vertex-indexed and the dimension is $n = |V|$ instead.

Given an initialization (see Appendix~\ref{sec:extra-mcmc}), the Gibbs sampler cycles repeatedly through full conditionals for the trend $\pmb\beta$ (Section~\ref{mcmc-trend}), the log-variance $\pmb h$ (Section~\ref{mcmc-var}), the observation error variance $[\sigma_\epsilon^{2} \mid \pmb y, -] \sim
\mathrm{Inverse\text{-}Gamma}\bigl(a_\sigma + n/2,\;
b_\sigma + \sum_{i=1}^n(y_i-\beta_i)^2/2\bigr)$, and, if missing data are present, imputation  (Section~\ref{mcmc-imp}). 

\subsection{Sampling the trend}\label{mcmc-trend}
Under the likelihood \eqref{eq:likelihood} and the (conditional) prior implied by \eqref{gdo-apply} and \eqref{eq:prior}, the full conditional posterior distribution for the trend  is $[\pmb{\beta} \mid \pmb y, -] \sim \mathcal{N}_n(Q^{-1}_{\beta}\pmb{\ell}_{\beta},
   Q^{-1}_{\beta})$, where
\begin{equation}
  \label{eq:awol}
    Q_{\beta} = \sigma_\epsilon^{-2} I_n +
\tau^{-2}(\Delta^{\left(k +
  1\right)})^\top\Lambda^{-1}\Delta^{\left(k + 1\right)},
\end{equation}
$\Lambda  = \mathrm{diag}(\lambda_1^2,\ldots,\lambda_m^2)$, and $\pmb{\ell}_{\beta} = \sigma_\epsilon^{-2}\pmb{y}$.
This full conditional distribution is general, regardless of the continuous shrinkage induced by $\pi(\pmb \lambda)$ in \eqref{eq:prior-lam}.  Notably, $Q_\beta$ is the critical term for computational costs in sampling $\pmb \beta$. There are two key features in our approach.

 First, the sparsity of $Q_\beta$ is determined by the graph structure. The term $\sigma_{\epsilon}^{-2}I_n$ only impacts the diagonal and  $\Lambda^{-1}$ is available without any numerical matrix inversion. Therefore, the sparsity pattern of $Q_{\beta}$ is governed exclusively by the product $(\Delta^{\left(k+1\right)})^{\top}\Lambda^{-1} \Delta^{\left(k+1\right)}$, which encodes the conditional dependencies among the nodes. For instance, consider a chain graph with $k=1$, akin to second-order differencing. Then $Q_{\beta}$ is a banded  matrix  with non-zero elements appearing only on the main diagonal and the first two off-diagonals above and below it. This structural sparsity extends naturally to more complex topologies. Continuing with $k=1$,  $\Delta^{\left(2\right)} = D - A$ is the graph Laplacian, where $D$ is the (diagonal) degree matrix and $A$ is the (sparse) adjacency matrix. For a lattice, each node has at most four neighbors, so $\Delta^{\left(2\right)}$ contains at most five non-zero entries per row. Consequently, the key term  $(\Delta^{\left(k+1\right)})^{\top}\Lambda^{-1} \Delta^{\left(k+1\right)}$
 only has non-zero entries for node pairs that are either directly adjacent or share a common neighbor. This is a small fraction: out of the $n^2$ total entries in $Q_{\beta}$, only about $13n$ are non-zero.
Thus, $Q_\beta$ is highly sparse with density $\mathcal{O}\left(1/n\right)$.

Second, we do not ever explicitly form $Q_\beta^{-1}$, despite its apparently central role in $[\pmb{\beta} \mid \pmb y, -]$. Instead, we sample $\pmb{\beta}$ jointly, all without a loop, using efficient matrix decompositions and forward-/back-solve operations. These exploit the sparse or banded structure of $Q_\beta$. The first step is to compute the Cholesky decomposition $Q_{\beta} = L_{\beta}L_{\beta}^\top$, where
$L_{\beta}$ is a lower-triangular matrix. Although this decomposition must be re-computed at each MCMC iteration---$Q_\beta$ depends on parameters ($\sigma_\epsilon$, $\tau^2$, and $\pmb\lambda$) that will be updated---its sparsity pattern will remain the same, which can be exploited for  time- and memory-efficient computing. Next, we solve $L_{\beta}\pmb{a}_{\beta} = \pmb{\ell}_{\beta}$ for
$\pmb{a}_{\beta}$ followed by 
$L_{\beta}^\top\pmb{\beta} = \pmb{a}_{\beta} + \pmb{e}_{\beta}$ for $\pmb \beta$, where 
$\pmb{e}_{\beta} \sim \mathcal{N}_n\left(\pmb{0}, I_n\right)$. Similar approaches have appeared previously for specific graphs in spatial and time series contexts \citep{rueFastSamplingGaussian2001,kastnerAncillaritySufficiencyInterweavingStrategy2014,kowalDynamicShrinkageProcesses2019}.  

\subsection{Sampling the log-variances}\label{mcmc-var}
Because the shrinkage parameters also share graph  dependencies, we seek to adapt the strategy from Section~\ref{mcmc-trend}: a joint draw, all without a loop, that leverages the sparsity of the graph. However, there are three obstacles for the GDSP shrinkage parameters that do not occur for the trend. First, the scales are positive. This is easily addressed by sampling on the log scale $h_j = \log(\tau^2\lambda_j^2)$, which removes the constraint. Second, neither the conditional model for $\bfh$ implied by \eqref{eq:prior} nor
the innovations of \eqref{eq:prior-3} are Gaussian. We identify data  augmentations customized to each case to induce conditionally Gaussian distributions. And finally, the GDSP not only consists of the graph differences in \eqref{eq:prior-3}, but also the global scale \eqref{eq:h-priors} with the decomposition from \eqref{eq:h-decomp}, neither of which is present for the trend $\pmb\beta$. 

We first consider the contributions from \eqref{eq:prior}. From \eqref{eq:prior} and \eqref{eq:prior-2}, $\omega_j \mid h_j \sim \Normal(0, \exp(h_j))$ independently. Then conditional on $\pmb \beta$,  $\pmb \omega = \Delta^{(k+1)}\pmb \beta$ is given and we may express $\bfh$ on the additive scale with pseudo-data 
\begin{equation}\label{eq:pseudo-data}
  \tilde\omega_j \;=\; \log(\omega_j^2 + c_0) \;=\; h_j + e_j,
  \qquad e_j \overset{\mathrm{iid}}{\sim} \log\chi_1^2,
\end{equation}
where $c_0 \geq 0$ is a numerical guard, taken to be zero unless some
$\omega_j^2$ underflows.
To handle the non-Gaussian errors, we incorporate a popular strategy from Bayesian stochastic volatility models that approximates the $\log\chi_1^2$ distribution as a discrete mixture of Gaussians. Specifically, we use the ten-component Gaussian mixture from \citet{omoriStochasticVolatilityLeverage2007}, with fixed and known weights
$q_{1:10}$, means $m_{1:10}$ and variances $v_{1:10}^2$. Let $z_j \in \{1,\ldots,10\}$ be mixture indicators so that
$e_j \mid z_j \sim \Normal(m_{z_j}, v_{z_j}^2)$. This approximation requires sampling the mixture indicators from their full conditionals, which follow the discrete distribution
\begin{equation}\label{eq:z-update}
  \Pr(z_j = \ell \mid \tilde\omega_j, h_j) \;\propto\; q_\ell\,\phi(\tilde\omega_j - h_j;\, m_\ell, v_\ell^2),
  \qquad \ell = 1,\ldots,10.
\end{equation}
Then, conditional on
$z_j$, the contribution from \eqref{eq:prior}  is a Gaussian location model,
$\tilde\omega_j \mid h_j, z_j \sim \Normal(h_j + m_{z_j},\, v_{z_j}^2)$.

Second, consider the GDSP innovations in \eqref{eq:prior-3}. We augment each $\eta_i$ with an auxiliary variate $\xi_i$  with joint density proportional to $\exp\{(a-b)\eta_i/2 - \xi_i\eta_i^2/2\}\,p_{a+b}(\xi_i)$, where $p_{a+b}$ is the P\'olya-Gamma $\mathrm{PG}(a+b,\,0)$ density. This joint recovers the $Z(a,b,0,1)$ marginal for $\eta_i$ and yields the Gaussian conditional $\eta_i \mid \xi_i \sim \Normal\!\bigl(\xi_i^{-1}(a-b)/2,\;\xi_i^{-1}\bigr)$, with mean zero in the horseshoe case $a = b = 1/2$ used here (Theorem~4 of \citealp{kowalDynamicShrinkageProcesses2019}, as corrected in \citealp{kowalCorrectionDynamicShrinkage2025}). This parameter expansion requires a draw from the full conditional of $\xi_i$, which is conditionally conjugate, $\xi_i \mid \eta_i \sim \mathrm{PG}(a+b,\,\eta_i)$,
and readily sampled using
\citet{polsonBayesianInferenceLogistic2013} implemented in
\texttt{BayesLogit::rpg}. Now, conditional on $\pmb \xi$, the GDSP innovations in \eqref{eq:prior-3} are Gaussian.

With these augmentations, we have $\tilde{\pmb\omega} \mid \bfh, \pmb z \sim \Normal_m(\bfh + \pmb\mu_z,\, \Sigma_v)$ for $\tilde{\pmb\omega} = (\tilde\omega_1,\ldots,\tilde\omega_m)^\top$,
$\pmb{\mu}_z = (m_{z_1},\ldots,m_{z_m})^\top$,
$\Sigma_v = \mathrm{diag}(v_{z_j}^2)$ and
$\Sigma_\xi = \mathrm{diag}(\xi_i^{-1})$, while \eqref{eq:prior-3} now contributes $\exp\{-\tfrac{1}{2}\,\bfh^\top(\Delta_k^{k_h+1})^\top
\Sigma_\xi^{-1}\Delta_k^{k_h+1}\bfh\}$. 
It remains to consider the global scale $h_0$ in \eqref{eq:h-priors} and the decomposition \eqref{eq:h-decomp}.  To facilitate efficient computing, consider first the full conditional precision without incorporating $h_0$: $Q_h = \Sigma_v^{-1} + \bigl(\Delta_k^{k_h+1}\bigr)^\top \Sigma_\xi^{-1}\,\Delta_k^{k_h+1}$, which shares a similar sparsity structure as for the trend precision in \eqref{eq:awol}. The full conditional distribution for $\bfh$, now accounting for the contributions from $h_0$ in \eqref{eq:h-priors}, is $\Normal_m\bigl((Q_h^{\star})^{-1}\bfell_h^{\star}, (Q_h^{\star})^{-1}\bigr)$,
with $Q_h^{\star} = Q_h + (s_0^2m^2)^{-1}\pmb{1}_m\pmb{1}_m^\top$ and $\bfell_h^{\star} =
  \Sigma_v^{-1}\bigl(\tilde{\pmb\omega} - \pmb{\mu}_z\bigr)
  + \muG/(s_0^2 m)\pmb{1}_m$ (Lemma~\ref{lem:h-fcd}). 
  
Finally, we sample $\bfh$ in two steps. This is essential because the global scale disrupts the sparsity of the full conditional precision: $Q_h$ inherits sparsity from $\Delta_k^{k_h+1}$, but $Q_h^\star$ does not. First, we apply a Cholesky decomposition to the sparse matrix $Q_h$, followed by forward-/back-solve operations with a sampling step like for the trend (Section~\ref{mcmc-trend}). Then, to correct for $Q_h^\star$, we apply a Sherman-Morrison update that recycles the Cholesky decomposition of $Q_h$ to provide the necessary rank-one update.  The explicit steps and verification of the sampling algorithm are in Appendix~\ref{sec:extra-mcmc}. 
Upon jointly sampling $\bfh$, we compute the global and local scales directly:  $\tau^2 = \exp(\hzero)$ with $\hzero = m^{-1}\bfone_m^\top\bfh$  and $\lambda_j^2 = \exp(h_j - \hzero) = \exp(h_j^\perp)$.

\subsection{Imputation and prediction}\label{mcmc-imp}
Suppose we do not observe data for vertices $\mathcal{I} \subset \{1,\ldots,n\}$. Imputation or prediction is straightforward and fast: the unobserved data have full conditional distribution
\begin{equation}\label{pred}
[\tilde y_i \mid \pmb y, - ] \stackrel{indep}{\sim} \mathcal{N}(\beta_i, \sigma_\epsilon^2), \quad i \in \mathcal{I}. 
\end{equation}
Sampling \eqref{pred} is easy, efficient, and parallelizable across all unobserved data. This imputation step is appended to the previous Gibbs sampling steps. The samplers for $\pmb \beta$, $\pmb h$, and $\sigma_\epsilon^2$ proceed exactly as in the complete data case, now using the imputed data  $\pmb y = (\pmb y_{-\mathcal{I}}, \pmb{\Tilde{y}}_{\mathcal{I}})$  updated at each MCMC iteration.

\section{Simulations}
\label{sec:sim-data}
We conduct a series of simulation studies to compare among frequentist and Bayesian approaches for graph trend filtering. We evaluate 1) trend estimation, which is the primary goal for frequentist methods;  2) trend uncertainty quantification, especially to compare different shrinkage priors; and 3) computational efficiency. We consider both complete datasets and datasets with missing values. 

\subsection{Simulation design: lattice data}
\label{sec:simulation-design}
We generate dependent (image) data on a $d_1 \times d_2$ lattice subject to measurement error and missingness. The true trends $\pmb \beta$ are displayed in Figures~\ref{fig:sim_lattice}~and~\ref{fig:true-mean-funcs}; see Appendix~\ref{sec:extra-comp} (Table~\ref{tab:kernel-choices}) for the explicit forms. These trends include: \texttt{bg} (short for bivariate Gaussian kernel), which is smooth yet nonlinear; \texttt{blocks} and \texttt{blocks+}, which are piecewise constant and discontinuous but with different patterns; and \texttt{lin}, which is  linear.
The lattices have $d_1 = d_2 = 50$, so $n=2500$ vertices and $m=4900$ edges. Given a trend $\bfbeta$, each simulated dataset is generated from \eqref{eq:likelihood}. We set $\sigma_\epsilon = \mbox{sd}(\pmb \beta)/\mbox{RSNR}$, where $\mbox{sd}(\pmb \beta)$ is the sample standard deviation of $\pmb \beta$ and RSNR~$=3$ is a pre-specified root-signal-to-noise ratio. We repeat this process for each trend to create 100 synthetic datasets. Here, we report results under substantial missingness with $50\%$ of vertices uniformly at random removed (see Appendix~\ref{sec:extra-comp} for the complete data case). 
Subsequently, we extend this simulation design for a time series of images (Section~\ref{sec:st_sims}). We performed all simulations on two Nvidia DGX clusters running Ubuntu Linux in Singularity containers.

\begin{figure}[h]
  \centering
  \includegraphics[width=\linewidth]{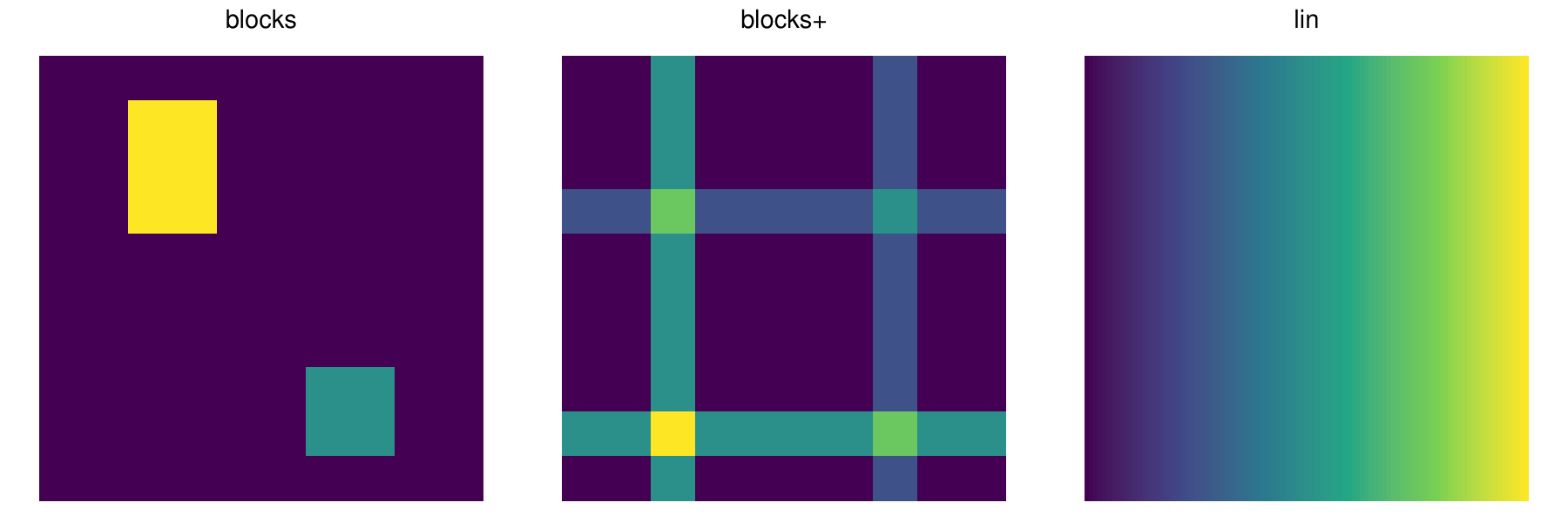}
  \caption{True $\pmb\beta$ on a $50 \times 50$ lattice for  \texttt{blocks}, \texttt{blocks+} and \texttt{lin}; see Figure~\ref{fig:sim_lattice} for \texttt{bg}.}
  \label{fig:true-mean-funcs}
\end{figure}

\subsection{Competing methods}
\label{sec:competing-methods}
We compare the proposed approach against Bayesian and frequentist alternatives for graph trend filtering. All Bayesian methods use the likelihood \eqref{eq:likelihood}, the same graph difference operator \eqref{graph-diff-op} with $k=1$ (see Appendix~\ref{sec:extra-comp} for $k \in\{ 0, 1,2\}$ comparisons),
the graph-informed prior on the trend given by \eqref{gdo-apply} and \eqref{eq:prior}--\eqref{eq:prior-lam}, and 
the error-variance prior \eqref{eq:sigma-prior} with $a_\sigma = b_\sigma = 0.01$. Differences only occur in the prior on the shrinkage parameters, $\pi(\pmb\lambda)$. ``NIG" adopts a Normal-Inverse-Gamma prior with  $ \lambda_j^2 = \lambda^2 \sim \mathrm{Inverse\text{-}Gamma}(0.01, 0.01)$ and $\tau=1$. NIG resembles (second order) ICAR models and incorporates graph information for modeling the trend, but without aggressive shrinkage or sparsity.  ``BL" is the Bayesian lasso \citep{parkBayesianLasso2008} and ``HS" is the horseshoe prior \citep{carvalhoHorseshoeEstimatorSparse2010}. HS does not use any graph information for $\pi(\pmb \lambda)$ and thus is a special case of the GDSP with $k_h = -1$.  ``GDSP" refers to the proposed horseshoe GDSP  ($a = b = 1/2$). We use  $k_h=0$ for first-order differences for the local (log) shrinkage and $\mu_0 = 0, s_0 =1$ so $\hzero \sim \Normal\bigl(1/2,\; 1\bigr)$ for the global (log) shrinkage.   The MCMC implementations of these Bayesian models share many of the same Gibbs sampling steps, including for the trend (Section~\ref{mcmc-trend}), the error variance,  and prediction/imputation  (Section~\ref{mcmc-imp}). We run each MCMC algorithm for  $N + N_{\mathrm{burn}} = 15000$ iterations,
where the initial $N_{\mathrm{burn}} = 7500$ simulations are discarded as a burn-in.

Finally, we include the frequentist graph trend filtering estimator \eqref{gtf-est},  and specifically the fused lasso variant  ($k=0$) recommended by \citet{wangTrendFilteringGraphs2016}. As a favor to this method, we select $\lambda$ via an oracle: namely, we use $\lambda$ for which the estimated trend $\pmb{\hat\beta}_\lambda$ has the smallest root mean squared error (RMSE) for estimating the true trend $\pmb\beta$. This strategy is not feasible in practice, but represents a best case scenario for frequentist graph trend filtering. It also circumvents any computationally intensive methods to select $\lambda$, such as cross-validation. 
We fit this model using the \texttt{genlasso} package in \texttt{R} and thus denote it by ``GL". Since GL does not automatically handle missingness, we impute missing data using the sample mean among observed vertices.
Appendix~\ref{sec:extra-comp}  includes a simulation study without missingness, so the GL performance there does not depend on the imputation method. 

\subsection{Evaluating model accuracy and uncertainty quantification}
\label{sec:eval-model-accur}
We assess the performance of each method by considering both point estimates and
intervals for the true trend ${\beta}_{i}$, evaluated on a set
$S \subseteq V$: the held-out vertices when data are missing, and all
vertices in the complete case. For point
estimation, we compute the RMSE 
\begin{equation*}
  \mathrm{RMSE} = \sqrt{\frac{1}{|S|}\sum_{i \in S}\big(\hat{\beta}_{i} - \beta_{i}\big)^2}
\end{equation*}
where $\hat \beta_i$ is the posterior mean $\mathbb{E}(\beta_i \mid \pmb y)$ for Bayesian methods or \eqref{gtf-est} for GL. Given an interval estimate $(\hat \ell_i, \hat u_i)$ for the trend, we compute both the mean credible interval widths and the empirical coverages:
\[
  \mathrm{MCIW} = \frac{1}{|S|}\sum_{i \in S} (\hat u_i - \hat \ell_i), \quad  \mathrm{ECP} = \frac{1}{|S|}\sum_{i \in S} \mathbf{1}\{ \hat \ell_i \leq \beta_{i} \leq \hat u_i\}
\]
We use  95\% posterior (equal-tailed) credible intervals. Intervals that are calibrated ($\mbox{ECP} \ge 0.95$) and
precise (small MCIW) are preferred.  GL does not produce interval estimates and thus is excluded from this metric. 

\begin{figure}[h]
  \centering
  \includegraphics[width=0.49\linewidth,trim=2 2 2 2,clip]{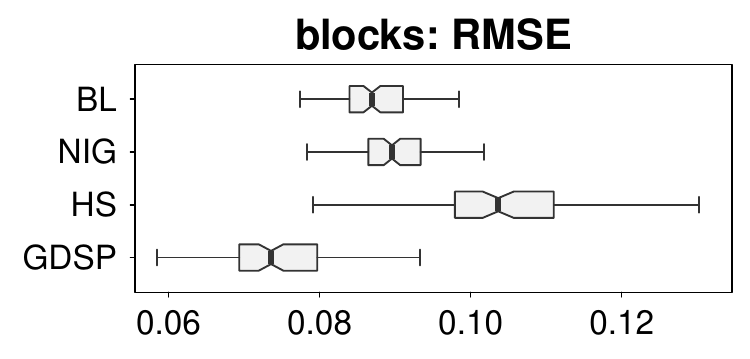}\hfill
  \includegraphics[width=0.49\linewidth,trim=2 2 2 2,clip]{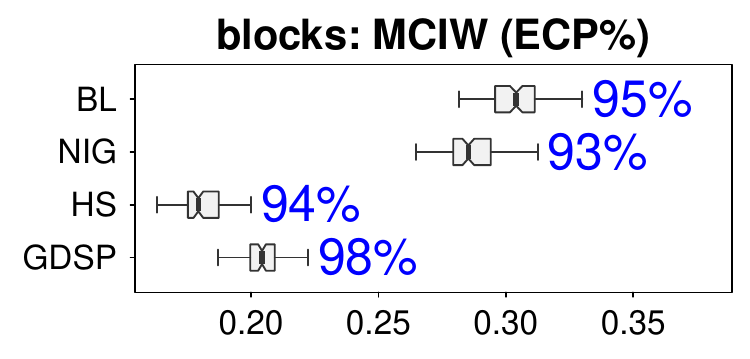}

  \includegraphics[width=0.49\linewidth,trim=2 2 2 2,clip]{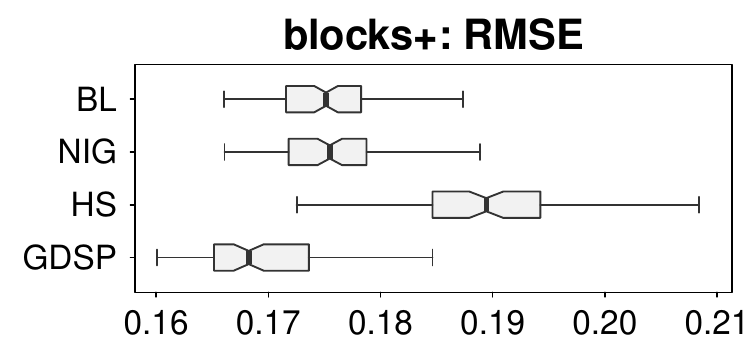}\hfill
  \includegraphics[width=0.49\linewidth,trim=2 2 2 2,clip]{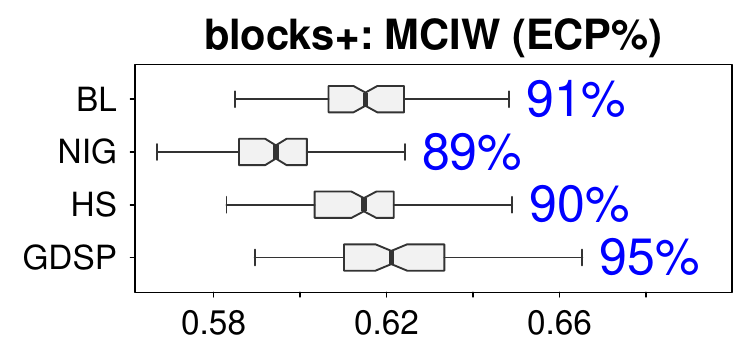}

    \includegraphics[width=0.49\linewidth,trim=2 2 2 2,clip]{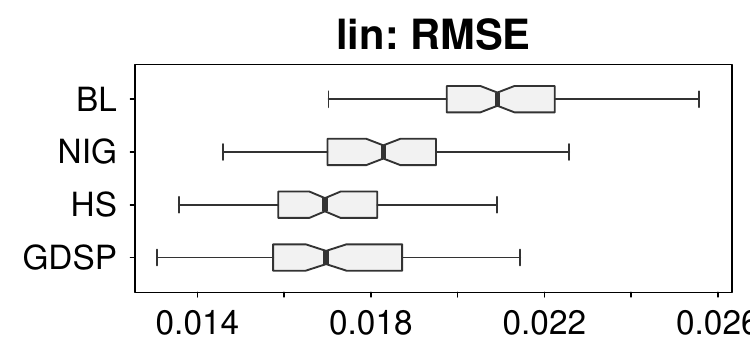}\hfill
  \includegraphics[width=0.49\linewidth,trim=2 2 2 2,clip]{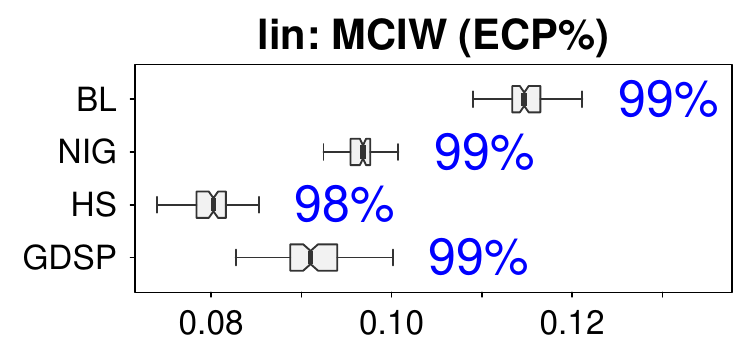}

  \includegraphics[width=0.49\linewidth,trim=2 2 2 2,clip]{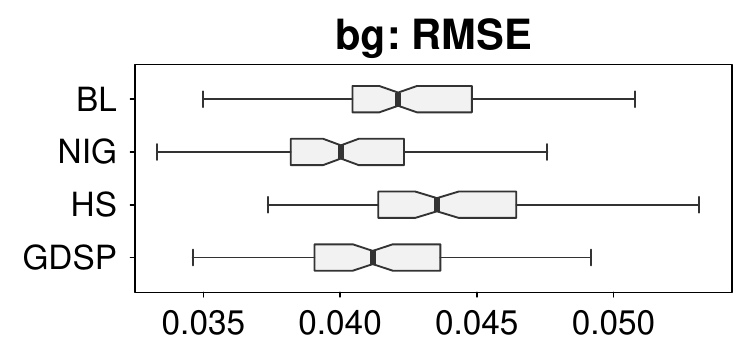}\hfill
  \includegraphics[width=0.49\linewidth,trim=2 2 2 2,clip]{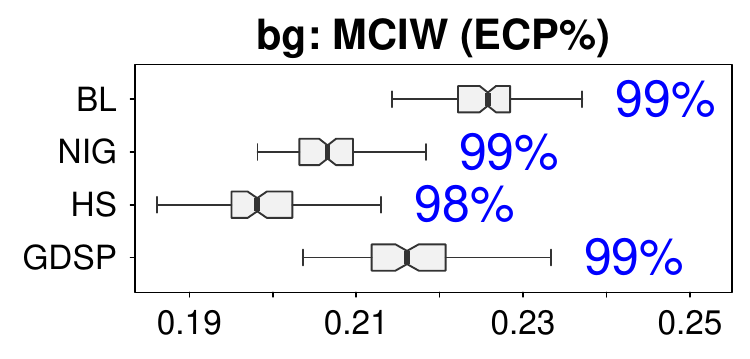}
  \caption{RMSEs and MCIWs with ECPs (blue annotations) for the four trends with 50\% missingness, computed over the held-out vertices; each panel title reports the trend and the metric. GL performs poorly and is omitted (median RMSEs 0.189 for \texttt{blocks}, 0.350 for \texttt{blocks+}, 0.172 for \texttt{lin}, and 0.316 for \texttt{bg}). The proposed horseshoe GDSP is the most accurate for the non-smooth trends and the only one whose intervals stay near the nominal level at \texttt{blocks+}; for the smooth trends the Bayesian priors perform comparably.}
  \label{fig:sims}
\end{figure}

We report the RMSEs, MCIWs, and ECPs in Figure~\ref{fig:sims}
for the case with $50\%$ missingness. First, the Bayesian  estimators significantly outperform the frequentist estimator, despite the latter's oracle selection of its tuning parameter. This is partly due to the inability of GL to handle missing data. Second, the Bayesian models are mostly well-calibrated, with coverages that typically achieve the nominal level. Thus, Bayesian models for graph trend filtering can offer reliable inference for the trend, whereas the frequentist GL does not provide interval estimates. Third, the choice of shrinkage prior is highly impactful for both estimation and inference, and the differences are sharpest for the most challenging, non-smooth trends. For \texttt{blocks} and \texttt{blocks+}, the horseshoe GDSP is the most accurate by a wide margin (e.g., median RMSE $0.074$ versus $0.087$--$0.104$ for the Bayesian competitors on \texttt{blocks}), and it is the only prior whose intervals stay near the nominal level at \texttt{blocks+} (ECP $94.8\%$), while the competitors undercover (ECP $ \le 91\%$). The narrower intervals from HS or NIG in that case simply reflect this undercoverage. For the smooth trends (\texttt{lin} and \texttt{bg}), the Bayesian estimates offer similar accuracy
while HS has slightly narrower calibrated intervals. Perhaps surprisingly, the aggressive shrinkage of the horseshoe alone does not deliver accuracy on the non-smooth trends: HS has the \emph{worst} point estimation among the Bayesian priors on \texttt{bg}, \texttt{blocks}, and \texttt{blocks+}. Thus, both the aggressive shrinkage \emph{and} the graph-dependent shrinkage of the horseshoe GDSP are essential where trend filtering is hardest.

For completeness, we also compute WAIC \citep{watanabeAsymptoticEquivalenceBayes2010} to compare  $k \in\{ 0, 1,2\}$ (Appendix~\ref{sec:extra-comp}). The trends for $k=2$ are much smoother but less adaptive than for $k=0$; we find that $k=1$ provides a good balance.

\subsection{Evaluating MCMC efficiency and computational scalability}
\label{sec:eval-mcmc-effic}
A natural concern is whether these Bayesian graph trend filtering models are computationally viable. In particular, the  GDSP introduces another layer of dependencies and our MCMC algorithm requires multiple data augmentations that are not present for iid shrinkage priors. Here, we evaluate the computational performance of each method. As a benchmark, we include the frequentist estimator GL from \cite{wangTrendFilteringGraphs2016}, but this computing time does \emph{not} include any data-driven selection method for the tuning parameter $\lambda$, such as cross-validation. Instead, we record the time necessary to compute the solution path from the fully regularized solution ($\lambda$ large) to fully dense estimator ($\lambda \approx 0$), 
noting that the complexity of the path grows with $n,m$, and the topology of the graph.

For the MCMC algorithms, we consider both computing time (i.e., wall-clock time) and MCMC efficiency. The latter is measured by the effective sample size $N_{\mathrm{eff}}(\beta_i)$ from $N$ MCMC draws of each node $i \in V$. We compute \emph{average relative efficiency}, 
\begin{equation*}
  \frac{\bar{N}_{\mathrm{eff}}}{N} = \frac{1}{n}\sum_{i \in V}\frac{N_{\mathrm{eff}}(\beta_{i})}{N}.
\end{equation*}
Values that approach one achieve (iid) Monte Carlo efficiency, while values near zero indicate substantial autocorrelations in the MCMC chain. Next, we estimate the (wall-clock) \emph{time to 1000 effective samples}, which jointly considers computing time and MCMC efficiency:
\begin{equation*}
  \bar{s}_{1000} = \frac{1}{n}\sum_{i \in V}\left\{s_{N_{\mathrm{burn}}} + s_{N}\frac{1000}{N_{\mathrm{eff}}(\beta_{i})}\right\},
\end{equation*}
where  $s_{N_{\mathrm{burn}}}$ and $s_{N}$ are the (wall-clock) time needed to complete the burn-in and MCMC draws, respectively. By design, $\bar{s}_{1000}$ enables comparisons among MCMC algorithms that are fast yet MCMC-inefficient with those that may be slow  but achieve greater MCMC efficiency. The total of 1000 effective samples is a reasonable target for algorithm completion, and thus comparable to the frequentist GL estimator.

We report the computational comparisons in Appendix~\ref{app-compute}  for the simulations and in Section~\ref{sec:us_unemployment} for the real data. Briefly, we note that 1)  Bayesian graph trend filtering models are often \emph{more efficient} than frequentist GL estimation, 2) the more aggressive shrinkage priors (HS, GDSP) sacrifice some MCMC efficiency relative to simpler (NIG) alternatives, and 3)  the horseshoe GDSP, which includes more (shrinkage) parameters and strong dependencies among them, typically sacrifices some MCMC efficiency relative to its Bayesian competitors, but remains computationally viable.

\subsection{Simulation design: time series of images}\label{sec:st_sims}

A primary advantage of (Bayesian or frequentist) graph trend filtering models is that they
are readily applicable for a broad variety of dependent data
scenarios. Once the graph is defined, our competing (GL, BL,
NIG, HS) and proposed (GDSP) models and algorithms apply immediately
without any further customization. Here, we showcase this feature while
investigating how each method performs under more complex
dependencies.

We extend the simulation design from
Section~\ref{sec:simulation-design} to consider a time series of
images, $\bfbeta_t \in \mathbb{R}^n$, $t=1,\ldots,T$. The baseline
trend $\mathbf{b}_0 \in \mathbb{R}^n$ is one of the four static
lattice trends \texttt{bg}, \texttt{blocks}, \texttt{blocks+},
\texttt{lin} from Section~\ref{sec:simulation-design}; each choice
yields a separate simulation study. The time series is initialized at
the baseline ($\bfbeta_1 = \mathbf{b}_0$) and then evolves as a
combination of autoregressive and spatial-diffusion dynamics:
\begin{equation}
  \label{eq:heat-equation}
      \bfbeta_t = \rho\,\bfbeta_{t-1} - \kappa\, \Delta^{(2)} \bfbeta_{t-1} + (1 - \rho)\,\mathbf{b}_t + \boldsymbol{\nu}_t, \quad t=2,\ldots,T,
  \end{equation}
  where $\rho \in [0,1]$ is the temporal autocorrelation coefficient,
  $\kappa > 0$ controls the strength of spatial diffusion,
  $\Delta^{(2)}$ is the graph Laplacian of the lattice,
  $\boldsymbol\nu_t \sim^\mathrm{iid} \mathcal{N}(\mathbf{0},
  \sigma_{\mathrm{AR}}^2 \mathbf{I})$ is the process noise, and $\mathbf{b}_t = \mathbf{b}_0\{1 + \alpha (t-1)/(T-1)\}$ 
  is a time-varying target trend that drifts linearly away from the
  baseline, from $\mathbf{b}_1 = \mathbf{b}_0$ at $t=1$ to
  $\mathbf{b}_T = (1+\alpha)\mathbf{b}_0$ at $t=T$.
  The AR(1) term $\rho\bfbeta_{t-1}$ introduces
  temporal persistence: each image is a damped version of the previous
  one. The diffusion term $-\kappa\Delta^{(2)}\bfbeta_{t-1}$
  introduces spatial smoothing: the $i$th entry of
  $\Delta^{(2)}\bfbeta_{t-1}$ is
  $n_{i}\beta_{i,t-1} - \sum_{j\sim i}\beta_{j,t-1}$, with $D_{ii}$ the
  spatial degree of vertex $i$, so subtracting it
  pulls $\beta_{i,t}$ toward its local (spatial) average. This is the
  discrete analogue of the continuous heat equation
  $\partial_t u = \kappa\nabla^2 u$, which smooths a surface by
  redistributing values toward their spatial surroundings. Observed
  data $\bfy_t$ are then generated from \eqref{eq:likelihood} as in Section~\ref{sec:simulation-design}.  
  Both the
  trend and the observed data are visualized in
  Figure~\ref{fig:spatiotemporal_evolution}.
 
\begin{figure}[ht]
  \centering
\includegraphics[width=1\linewidth]{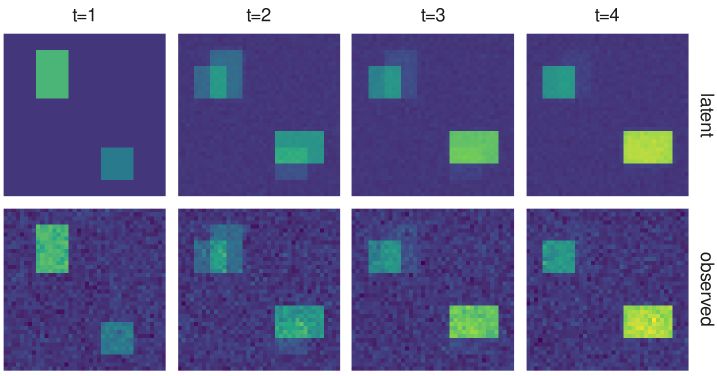}
  \caption[AR(1) blocks example.]{Spatio-temporal evolution of a square
    lattice ($d_1 = d_2 = 40$) under \eqref{eq:heat-equation} with the \texttt{blocks} initial trend. Each node evolves via AR(1) temporal dynamics and heat-diffusion spatial smoothing for $T=4$.}
\label{fig:spatiotemporal_evolution}
\end{figure}

We consider square lattices with $d_1 = d_2 = 40$ over $T = 4$
timesteps. The spatio-temporal graph is constructed by connecting
spatial neighbors on the lattice at each time and linking each (lattice) node at time $t$ to its counterpart at time $t+1$. This yields
$d_1 d_2  T = 6400$ nodes and $17280$ edges in total. It fully
determines the graph difference operator \eqref{graph-diff-op} shared
among all competing methods. We set $\rho = 0.4$, $\kappa = 0.03$, $\alpha = 0.6$,
$\sigma_{\mathrm{AR}} = 0.03$,  and
$\mathrm{RSNR} = 3$, matching the lattice design of Section~\ref{sec:simulation-design}. We hold out the entire final time slice ($t = T$) during training, so that the trend at $t = T$ is fully forecast using only the preceding time points. All Bayesian methods use $N_{\mathrm{burn}} = 7500$, $N = 10000$, and otherwise the same specifications as in Section~\ref{sec:competing-methods}.  For GL, we impute the entries of the held-out final time slice with the grand mean of the observed entries and fit the fused lasso (\texttt{genlasso::fusedlasso}, $k=0$) on the full spatio-temporal graph. We again select $\lambda$ by oracle: the value of $\lambda$ that minimizes RMSE against the (held-out) true trend at $t=T$. The fitted values at $t=T$ are then taken as the GL forecast.

Figure~\ref{fig:st-summary} summarizes the forecast RMSEs and MCIW/ECP at $t = T$ across 100 simulated datasets for \texttt{blocks} and \texttt{lin} trends (see Appendix~\ref{sec:extra-comp} for additional trends). The results  extend the patterns observed in the (time-invariant) lattice case (Section~\ref{sec:eval-model-accur}): the proposed horseshoe GDSP delivers accurate point forecasts and precise yet well-calibrated uncertainty quantification for the trend. For \texttt{blocks}, the horseshoe GDSP  attains the lowest forecast RMSE by a clear margin and is competitive with NIG and BL for the smoothing \texttt{lin} trend. In both cases, the horseshoe GDSP intervals are much narrower yet still achieve the nominal coverage. Notably, the gains over the horseshoe prior remain substantial for both estimation and inference, which again emphasizes the benefits of graph-dependent shrinkage.

\begin{figure}[h]
\centering
\includegraphics[width=\linewidth]{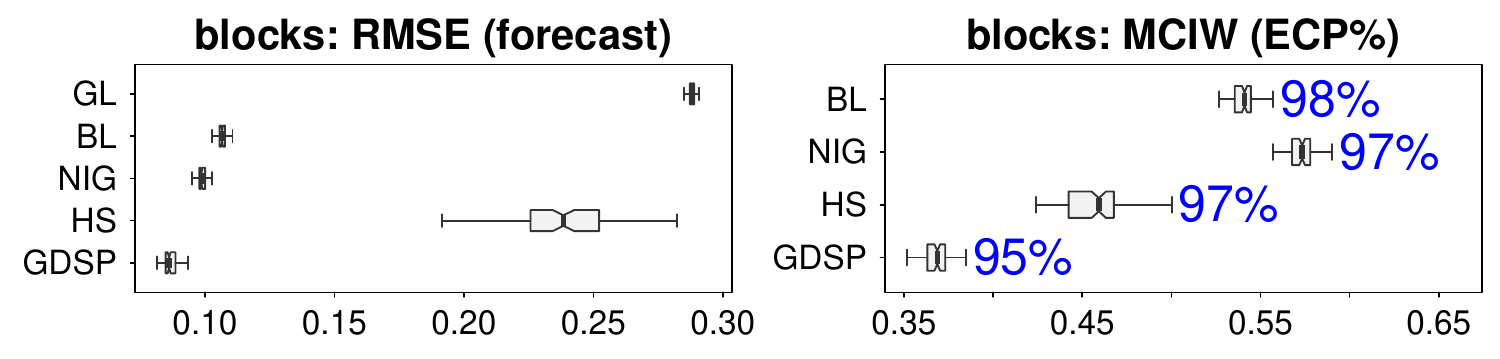}
\vspace{2pt}
\includegraphics[width=\linewidth]{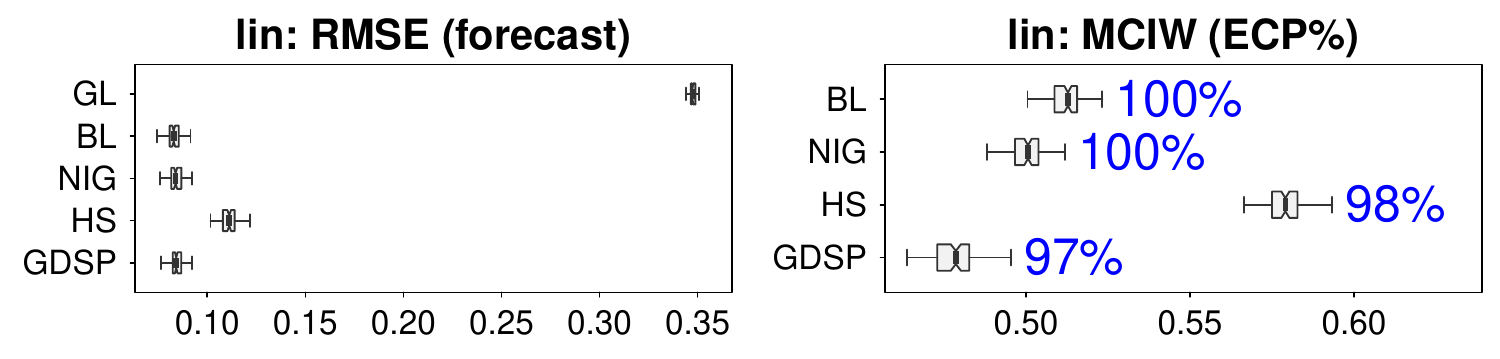}
\caption{Forecast RMSEs (left) and MCIWs (right) with ECPs (blue annotations) at $t = T$ for the \texttt{blocks} (top) and \texttt{lin} (bottom) spatio-temporal trends ($d_1=d_2 = 40$, $T = 4$, RSNR $= 3$). In both cases, the horseshoe GDSP delivers highly competitive point forecasts and the narrowest (calibrated) forecast intervals.}
\label{fig:st-summary}
\end{figure}

\FloatBarrier
\section{Modeling and Forecasting U.S.\ County-Level Unemployment}\label{sec:us_unemployment}

The U.S.\ Bureau of Labor Statistics publishes local area (county-level) unemployment data each month. Unemployment is a key macroeconomic indicator: together with inflation, it forms the basis of the Phillips curve, a central relationship in monetary policy. County unemployment data, rather than aggregated national or state data, preserves the spatial heterogeneity driven by local labor markets, policies, and demographics. We focus on county-level unemployment during April through July 2020, which spans the complete arc of the sharp COVID-19 unemployment shock: the April surge, the May-June trough and turnaround, and the July continuation of the recovery. The unemployment trends during this period exhibit substantial spatial and temporal heterogeneity. Our goal is to apply  several graph trend filtering models and evaluate their (spatial) imputation, (temporal) forecasting, and computational efficiency. 

We define our spatio-temporal graph on the county- and time-indexed nodes as follows. First, the spatial graph operates on the 3108 counties in the continental U.S.\ (excluding Alaska and Hawaii) and connects counties that share a physical border. The average degree is 5.94 (standard deviation 1.31). The most connected county is
San Juan County, UT (degree 14), while several counties, including
island counties such as Nantucket County, MA, have degree~1. The
spatial graph is sparse: the edge density is 0.00191, compared to 0.00125 for
a square lattice with the same number of nodes.
The spatio-temporal
graph is constructed by stacking the spatial graphs for April through July 2020 with temporal edges that connect
each county in month $t$ to the same county in month $t+1$.  The
resulting spatio-temporal graph has $n = 12432$ nodes.

Using this spatio-temporal graph, we apply several graph trend filtering models for the  log-unemployment rates.  We construct two out-of-sample tasks.
First, an \emph{imputation} task: we hold out $25\%$ of the $9324$ county-months from 
April to June 2020,
selected uniformly at random and identically for every method. Each model must reconstruct the held-out
values from the remaining observations. These targets typically have observed
neighbors in space and in time, so this task isolates each method's ability
to borrow information across the graph. Second, we \emph{forecast} the
entire July 2020 surface. Here, we observe the complete spatio-temporal data of the first COVID-19 wave (April to July) and must extrapolate the recovery. Both tasks are extremely challenging due to the spatial heterogeneity  (see Figure~\ref{fig:us_forecast}), the rapid temporal changes, and the short time series for training ($T=3$).

We use the same competing methods from Section~\ref{sec:competing-methods}.
Bayesian imputations and forecasts are generated as in Section~\ref{mcmc-imp}.  We run the MCMC for $N = 50000$ iterations after discarding the initial $N_{\mathrm{burn}} = 20000$ iterations as a burn-in, storing every $10$th draw ($5000$ saved samples).
Trace plots are reported in Appendix~\ref{app-data}.
For GL, which provides no native imputation, we first fill the
held-out training county-months with each county's mean over its observed
months, fit the fused
lasso on the spatio-temporal graph spanning April-June, and select
$\lambda$ along the solution path by oracle. For the imputation task, the oracle is the value of $\lambda$ that minimizes the RMSE on the held-out set of county-months. For the prediction task, the oracle minimizes the RMSE on the held-out July data. No data-driven choice of $\lambda$ could do
better on either task. The training graph contains no July nodes, so GL also
requires an explicit forecasting rule: with the piecewise-constant trend
implied by $k = 0$, each county's July forecast is its estimated June value.

To visualize these tasks, we present county-level maps in Figure~\ref{fig:us_forecast} for imputation   and forecasting. The GDSP accurately recovers the spatial patterns, including both broad regional gradients
and finer county-level variation. The GDSP forecasts must rely primarily on temporal connections, which produces a smoother map. A complementary
county-profile view, with all $3108$ counties ordered by FIPS code and the
GDSP intervals for every month of the window, is reported in the
appendix (Figure~\ref{fig:us_profile}).

\begin{figure}[t!]
  \centering
    \includegraphics[width=0.92\linewidth]{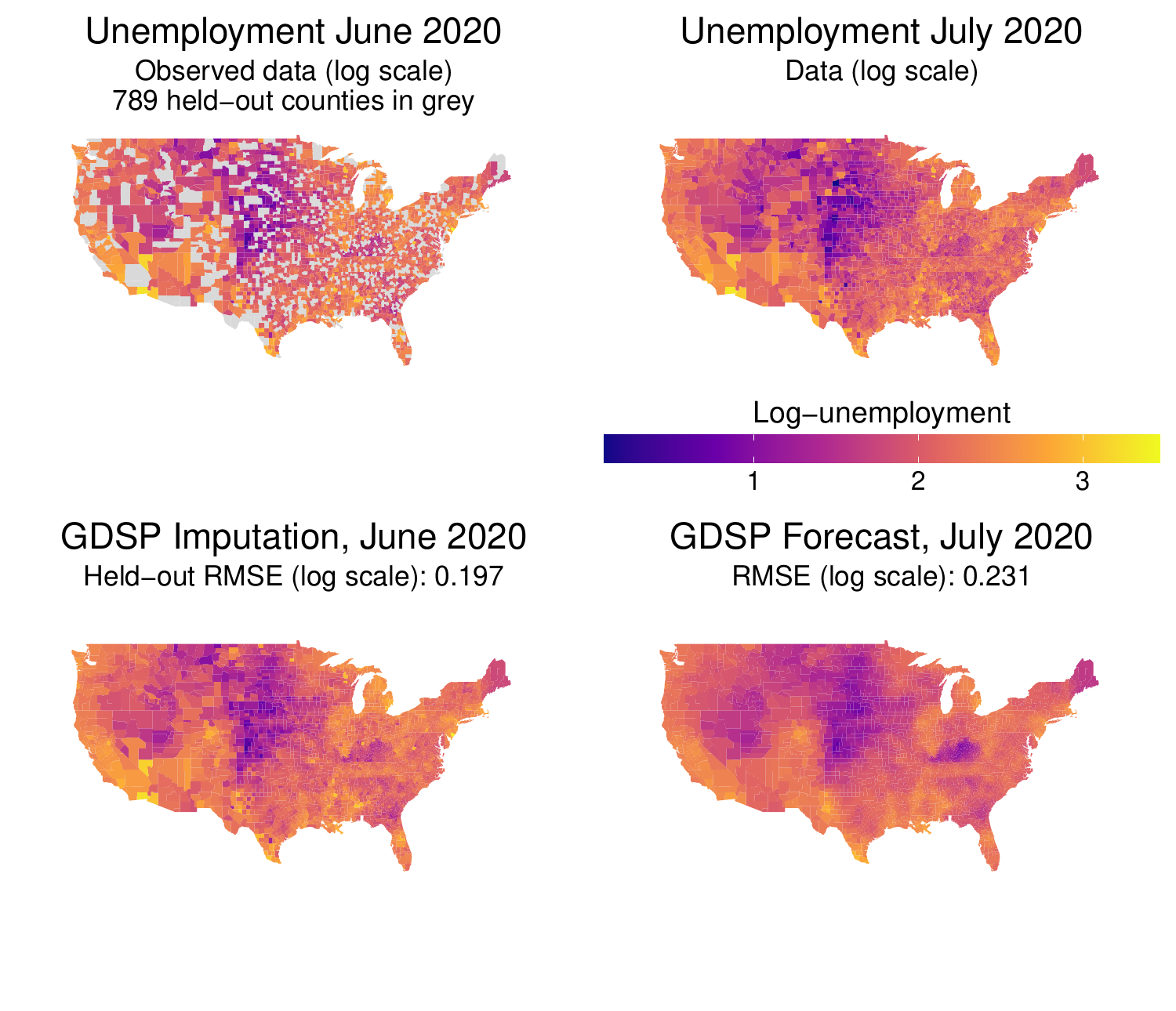}
  \caption{County-level maps of  U.S.\ log-unemployment to illustrate the imputation (left) and forecasting (right) tasks. Observed data (top) is presented for both tasks, along with the point predictions and forecasts of the proposed GDSP (bottom).}
  \label{fig:us_forecast}
\end{figure}

We evaluate each method on the two held-out sets (Table~\ref{tab:us-imputation}). Since these are observable data, we report mean prediction interval widths (MPIW) computed similarly as for the trend in Section~\ref{sec:eval-model-accur}.
On the imputation task, the horseshoe
GDSP attains the lowest imputation RMSE
and improves on the frequentist GL by $20\%$, even though the GL tuning parameter is selected by oracle using the held-out data. The GDSP intervals
are also the narrowest of all methods, with empirical coverage at the
nominal level ($95.5\%$). 
On the forecasting task, several Bayesian competitors (NIG, BL, GDSP) perform similarly for point forecasts and outperform the remaining methods (GL, HS). Here, NIG offers the best interval forecasts, which suggests that aggressive shrinkage is not necessary for temporal neighbors on this graph. This is reflected in Figure~\ref{fig:us_forecast} (bottom right), where the GDSP forecast is spatially smooth. 
Still, each Bayesian method attains the nominal coverage. 
Yet again, the horseshoe GDSP offers massive improvements over the HS for both point prediction and uncertainty quantification, thus emphasizing the importance of \emph{graph-dependent} shrinkage. 

Finally, the Bayesian methods are computationally competitive with GL: the
horseshoe GDSP sampler runs in about $1.4$ times the wall-clock time of GL's
full solution path ($3.2$ against $2.3$ hours),
where the latter does \emph{not} include the time to select the GL tuning parameter. Even so, NIG and HS are faster than the GL path. These Bayesian methods   provide full posterior and predictive inference with no further tuning parameters to select. Appendix~\ref{app-data} the county-level profile of the fits and intervals across all $3108$ counties and the MCMC diagnostics (trace plots and effective sample sizes). 

\begin{table}[t!]
\centering\small
\begin{tabular}{lcccc}
\toprule
 & \multicolumn{2}{c}{Imputation} & \multicolumn{2}{c}{Forecast} \\
\cmidrule(lr){2-3}\cmidrule(lr){4-5}
 & RMSE & MPIW (ECP) & RMSE & MPIW (ECP) \\
\midrule
GDSP & {\bf 0.199} ($\pm$0.004) & {\bf 0.827} (95.5\%) & 0.231 ($\pm$0.005) & 1.251 (98.5\%) \\
BL   & 0.201 ($\pm$0.004) & 0.835 (95.2\%) & 0.226 ($\pm$0.005) & 1.233 (98.0\%) \\
NIG  & 0.203 ($\pm$0.004) & 0.850 (95.4\%) & {\bf 0.223} ($\pm$0.005) & {\bf 1.176} (97.9\%) \\
HS   & 0.231 ($\pm$0.007) & 0.975 (96.1\%) & 0.291 ($\pm$0.008) & 1.316 (96.6\%) \\
GL (oracle) & 0.248 ($\pm$0.005) & --- & 0.242 ($\pm$0.005) & --- \\
\bottomrule
\end{tabular}
\caption{RMSEs, MPIW, and ECP for imputation of $2331$ held-out county-months (left) and RMSEs, MPIW, and ECP for forecasting $3108$ counties in July 2020 (right). RMSE brackets
report $\pm 1$ standard error under resampling of the held-out entries. GL is tuned by oracle for each task and does not produce interval estimates. Bolded entries are the smallest RMSE and MPIW values (subject to ECP $\ge 95\%$).}
\label{tab:us-imputation}
\end{table}

\section{Conclusions}\label{sec:conclusions}

We introduced a Bayesian modeling and computational framework for trend filtering on a graph, thus providing a general approach for smoothing, inference, imputation, and prediction with dependent data. Our approach leveraged the  graph at three junctures: 1) the trend, for smoothing and imputation; 2) the local shrinkage, for greater adaptivity and more precise inference; and 3) the MCMC algorithm, to design more efficient sampling steps. The proposed graph-dependent shrinkage prior (GDSP) extended well-established continuous shrinkage priors to dependent data settings via graph difference operations on the log-shrinkage parameters. We showed how the GDSP maintains the appealing shrinkage properties of its iid counterparts, but critically incorporates the graph to borrow shrinkage information across connected nodes. Like popular (iid) global-local shrinkage priors, the GDSP retains a global scale parameter, which we used both to infer the global smoothness across the graph and to ensure posterior propriety. For both the graph-dependent trend and the graph-dependent shrinkage parameters, we developed MCMC sampling steps that exploited sparse and banded matrix operations to facilitate scalable posterior inference. 

Empirically, the horseshoe GDSP outperformed frequentist and Bayesian alternatives across simulated lattice data, spatio-temporal image sequences, and county-level U.S.\ unemployment imputation and forecasting. The gains were most pronounced for challenging trends, including non-smooth signals and real, local area unemployment data, where the interplay between graph-informed smoothness and adaptive shrinkage is most critical. The horseshoe GDSP offered superior point and interval estimates in nearly all cases, often with substantial improvements over the iid horseshoe, thus emphasizing the importance of \emph{graph-dependent} shrinkage. On the U.S.\ unemployment dataset with $9324$ observed county-month nodes, the fully Bayesian graph trend filtering methods were computationally competitive with the frequentist point estimator, even without including the cost of selecting the frequentist tuning parameter. 

There are several directions for future work. Although 
\eqref{eq:likelihood} assumes Gaussian errors, it is straightforward to extend Bayesian graph trend filtering models for non-Gaussian data, including binomial,  negative-binomial, and  mixtures of Gaussians. Data augmentation sampling algorithms exist for each of these cases, which would preserve the main sampling steps from our approach (Section~\ref{sec:mcmc}). Alternatively, given node-specific covariates $x_i$,  one might modify \eqref{eq:likelihood} for graph-dependent regression, such as $y_i = x_i^\top\beta_i + \epsilon_i$, as a  generalization of spatially-varying coefficient or time-varying parameter regression models. Finally, it is possible to incorporate node- or edge-level covariates into the trend or the shrinkage prior, lending further adaptability to these critical graph-informed quantities. 

\acks{Kowal gratefully acknowledges funding from the National Science Foundation under award number SES-2435310.}

\bibliography{reference}
\clearpage

\clearpage
\appendix

\setcounter{theorem}{0}
\renewcommand{\thetheorem}{\Alph{section}.\arabic{theorem}}
\makeatletter\@addtoreset{theorem}{section}\makeatother

\setcounter{lemma}{0}
\renewcommand{\thelemma}{\Alph{section}.\arabic{lemma}}
\makeatletter\@addtoreset{lemma}{section}\makeatother

\counterwithin{figure}{section}
\counterwithin{table}{section}
\counterwithin{equation}{section}

\section{Mathematical details and proofs of theorems}
\label{sec:proofs}

We first prove Theorems~\ref{thm:cond-prior}--\ref{thm:posterior-shrinkage} on the prior concentration and posterior adaptivity properties, and then prove Theorem~\ref{thm:posterior-proper}
on posterior propriety result. The modifications for 
$k_h = 1$ results, collected in Theorem~\ref{rem:kh1-prior}, are deferred to
Appendix~\ref{sec:kh1} and hold under the same standing conventions with
$k_h = 1$ in place of $k_h = 0$. Throughout, the vector $\bfh$ is edge-indexed for $k$ even or zero,
with length $m = |E|$, and vertex-indexed for $k$ odd, with length $n$.
The proofs are general: we use the notation $m$ for the length of $\bfh$ with the understanding that  $m=n$ should be substituted for when $k$ is odd.

\subsection*{Proofs of Theorems~\ref{thm:cond-prior}--\ref{thm:posterior-shrinkage}}
For $k_h = 0$ the innovations are the first differences across the edges of
the shrinkage graph, so each $h_i$ enters the prior only through comparisons
with its neighbors; as in Section~\ref{sec:graph_shrink}, vertex $i$ is non-isolated ($n_i \geq 1$). Throughout
Theorems~\ref{thm:cond-prior}--\ref{thm:posterior-shrinkage}, all
conditionals are those of the graph-dependent component, the difference
relation \eqref{eq:prior-3} alone, and the global scale
is fixed at $\tau^2 = 1$, as stated in Section~\ref{sec:graph_shrink}, so that
$h_i = \log\lambda_i^2$ and $\kappa_i = 1/(1+e^{h_i}) \in (0,1)$, i.e.\
$e^{h_i} = (1-\kappa_i)/\kappa_i$.

\begin{proof}[Theorem~\ref{thm:cond-prior}]
Recall that \eqref{eq:prior-3} is read as the product density over the
independent edge innovations. Since $a = b$, the $Z(a,a,0,1)$
density $p_Z(z) = \{B(a,a)\}^{-1}e^{az}(1+e^z)^{-2a}$ is symmetric. Every edge at $i$ can be written
as $h_i - h_j$, giving
$\pi(h_i \mid \bfh_{-i}) \propto \prod_{j \sim i} p_Z(h_i - h_j)$.
Substituting
$e^{h_i} = (1-\kappa_i)/\kappa_i$ and $e^{h_j} = (1-\kappa_j)/\kappa_j$,
\[
  e^{a(h_i-h_j)} = \frac{(1-\kappa_i)^a\,\kappa_j^a}{\kappa_i^a\,(1-\kappa_j)^a},
  \qquad
  1 + e^{h_i - h_j}
   = \frac{\kappa_i(1-\kappa_j) + (1-\kappa_i)\kappa_j}{\kappa_i(1-\kappa_j)},
\]
so each factor $p_Z(h_i - h_j)$ is proportional to $\{\kappa_i(1-\kappa_i)\}^a\,\{\kappa_j(1-\kappa_j)\}^a
\bigl\{\kappa_i(1-\kappa_j)+(1-\kappa_i)\kappa_j\bigr\}^{-2a}$, symmetric in
$i$ and $j$ as it must be.
Multiplying over $j \sim i$, discarding factors free of $\kappa_i$, and
applying the Jacobian $|\dd h_i/\dd\kappa_i| = \{\kappa_i(1-\kappa_i)\}^{-1}$
yields \eqref{eq:cond-prior}.
\end{proof}

\begin{lemma}[Drift bound]\label{lem:drift}
Let $f$ be a probability density on $\R$ and $r > 0$. If
$(\log f)' \geq r$ on $(-\infty, t\,]$, then
$\int_{-\infty}^{t-M} f(x)\,\dd x \leq e^{-rM}$ for every $M \geq 0$.
Symmetrically, if $(\log f)' \leq -r$ on $[\,t, \infty)$, then
$\int_{t+M}^{\infty} f(x)\,\dd x \leq e^{-rM}$.
\end{lemma}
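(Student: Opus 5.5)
The plan is to compare $f$ with an exponential density through its logarithm, integrating the derivative bound from a point to the left of $t$ up to $t$. Write $\ell = \log f$; the hypothesis $(\log f)' \geq r$ on $(-\infty,t]$ implicitly requires $f>0$ and $\ell$ (absolutely) differentiable there. For $x \leq t$, the fundamental theorem of calculus gives $\ell(t) - \ell(x) = \int_x^t \ell'(u)\,\dd u \geq r(t-x)$. Hence
\[
f(x) \leq f(t)\,e^{-r(t-x)}, \qquad x \le t.
\]
Integrating over $x \leq t-M$ yields
\[
\int_{-\infty}^{t-M} f(x)\,\dd x \leq f(t)\,\frac{e^{-rM}}{r}.
\]

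This leaves the factor $f(t)/r$, which must be shown to be at most $1$. That is the one nontrivial step, since $f(t)$ could a priori be large. I would use normalization of $f$ together with the same exponential comparison on the window $[t-M,t]$. Split the left tail as
\[
A = \int_{-\infty}^{t-M} f, \qquad B = \int_{t-M}^{t} f.
\]
A cleaner route is to apply the comparison with the base point at $x=t-M$ instead of $t$: for $x \le t-M$ we have $f(x) \leq f(t-M)\,e^{-r(t-M-x)}$, so
\[
A \leq \frac{f(t-M)}{r}.
\]
Separately, for $u\in[t-M,t]$ we have $f(u) \geq f(t-M)\,e^{r(u-t+M)}$, so
\[
B \geq \frac{f(t-M)\,(e^{rM}-1)}{r}.
\]
Therefore $A \le B/(e^{rM}-1)$. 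Since $A+B \le 1$, it follows that $A(1 + e^{rM}-1) \leq 1$, that is, $A \le e^{-rM}$, exactly as claimed. When $M=0$ the bound is trivially $A\le 1$.

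The symmetric statement for the right tail follows by applying the result to $\tilde f(x) = f(-x)$ with $\tilde t = -t$. Then $(\log \tilde f)'(x) = -(\log f)'(-x) \ge r$ on $(-\infty,-t]$, and the left tail of $\tilde f$ beyond $-t-M$ equals $\int_{t+M}^\infty f$.

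The main obstacle is the step that replaces the crude bound $f(t)e^{-rM}/r$ by one independent of $f(t)$, which is resolved by pairing the tail with the adjacent mass $B$ and using total mass at most one. The only regularity used is absolute continuity of $\log f$ on the half-line, which I would state as implicit in the hypothesis.
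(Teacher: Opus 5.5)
Your proof is correct. It reaches the same intermediate inequality as the paper, $\int_{-\infty}^{t-M} f \le e^{-rM}\int_{-\infty}^{t} f$: your $A\le B/(e^{rM}-1)$ is equivalent to $Ae^{rM}\le A+B$. The mechanism, however, is different.

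\emph{The paper's route.} It uses a translation. Substituting $x=u-M$, and noting that for $u\le t$ the whole segment $[u-M,u]$ lies in $(-\infty,t]$, it gets the pointwise comparison $f(u-M)\le e^{-rM}f(u)$. Integrating over $u\le t$ then finishes the proof in one line.

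\emph{Your route.} You pivot at the single point $t-M$. You bound $f$ above by an exponential envelope on the tail and below by the same envelope on the window $[t-M,t]$, so both $A$ and $B$ are tied to the value $f(t-M)$.

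\emph{What each buys.} The paper's argument is shorter and needs no pivot value or bookkeeping of $B$. Yours gives the by-product $A\le f(t-M)/r$. It also shows clearly why the crude bound $f(t)e^{-rM}/r$ in your first paragraph is not enough. You are right that $f(t)/r$ can exceed $1$, since the log-derivative may be much larger than $r$ just left of $t$. That abandoned paragraph should be dropped from a final write-up.

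\emph{Minor remarks.} The case $M=0$ needs no separate treatment, because $A(e^{rM}-1)\le B$ holds trivially there. If the hypothesis is read as pointwise differentiability of $\log f$ on the half-line, the mean value theorem gives $\log f(b)-\log f(a)\ge r(b-a)$ directly, so the absolute-continuity caveat is unnecessary. Your reflection $\tilde f(x)=f(-x)$ for the right tail is exactly the paper's ``mirror image'' step.
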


\begin{proof}[Lemma~\ref{lem:drift}]
The substitution $x = u - M$ gives
$\int_{-\infty}^{t-M} f(x)\,\dd x = \int_{-\infty}^{t} f(u - M)\,\dd u$;
since the whole segment from $u - M$ to $u$ lies in $(-\infty, t\,]$,
$\log f(u - M) \leq \log f(u) - rM$, so the integral is at most
$e^{-rM}\int_{-\infty}^{t} f(u)\,\dd u \leq e^{-rM}$. The second claim is the
mirror image.
\end{proof}

\begin{proof}[Theorems~\ref{thm:prior_lims}--\ref{thm:prior-concentration-one}]
Both statements follow the same two steps: the conditional log-density has
derivative bounded away from zero on the relevant half-line, and
Lemma~\ref{lem:drift} converts that drift into an exponential tail bound,
which we then read off on the $\kappa$ scale. For $\varepsilon \geq 1$ both
statements are trivial, since $\kappa_i \in (0,1)$; we therefore take
$\varepsilon < 1$. The function
$\log p_Z(z) = az - (a+b)\log(1+e^z) - \log B(a,b)$ is strictly concave with
derivative $a - (a+b)\sigma(z)$, $\sigma(z) = e^z/(1+e^z)$, so
$\pi(h_i \mid \cdot)$ is log-concave with
\[
  \frac{\dd}{\dd h_i}\log\pi(h_i \mid \cdot)
   \;=\; \sum_{j\sim i}\bigl\{a - (a+b)\,\sigma(h_i - h_j)\bigr\}.
\]
Let $h_{\min} = \min_{j\sim i} h_j$ and fix $c_0$ with
$\sigma(-c_0) = a/\{2(a+b)\}$ (possible since $a, b > 0$). For
$h_i \le h_{\min} - c_0$ every summand is at least $a/2$, so the log-density
increases at rate at least $r = n_i a/2$ on $(-\infty,\, h_{\min} - c_0]$.
Lemma~\ref{lem:drift} then gives, for $h_{\min} \geq c_0 + t$ (so that the
lemma applies with $M \geq 0$; the bound is vacuous otherwise),
$\mathbb{P}(h_i \le h_{\min} - c_0 - M \mid \cdot) \le e^{-n_i a M/2}$.
Now $\kappa^* = \max_{j\sim i}\kappa_j \to 0$ means $h_{\min} \to \infty$,
and for fixed $\varepsilon$, with $t_\varepsilon = \log\{(1-\varepsilon)/\varepsilon\}$,
\[
  \mathbb{P}(\kappa_i \ge \varepsilon \mid \cdot)
   = \mathbb{P}(h_i \le t_\varepsilon \mid \cdot)
   \le \exp\bigl\{-n_i a\,(h_{\min} - c_0 - t_\varepsilon)/2\bigr\} \to 0.
\]

The second statement is symmetric: choosing $c_1$ with
$\sigma(c_1) = (2a+b)/\{2(a+b)\}$, the log-density decreases at rate at least
$n_i b/2$ on $[h_{\max} + c_1, \infty)$ with
$h_{\max} = \max_{j\sim i} h_j$. Then $\kappa_* \to 1$ means
$h_{\max} \to -\infty$, so by the second claim of
Lemma~\ref{lem:drift},
$\mathbb{P}(\kappa_i \le 1 - \varepsilon \mid \cdot) \to 0$.
\end{proof}

\subsection*{Proof of Theorem~\ref{thm:posterior-shrinkage}}
The two parts of Theorem~\ref{thm:posterior-shrinkage} share an unnormalized
conditional posterior whose tail near $\kappa = 1$ is controlled via
Lemma~\ref{lem:diverge-one-kh0}.
Throughout, we take $\varepsilon \leq 1/2$ without loss: the events are
monotone in $\varepsilon$, so larger $\varepsilon$ follow at once.

\begin{lemma}[Divergence near $\kappa = 1$, product form]\label{lem:diverge-one-kh0}
Write $\kappa_* = \min_{j\sim i}\kappa_j$. For every
$\varepsilon \in (0, 1/2]$,
\[
  \int_{1-\varepsilon}^{1}
    \frac{\{\kappa(1-\kappa)\}^{n_i/2-1}}
         {\prod_{j\sim i}\bigl\{\kappa(1-\kappa_j) + (1-\kappa)\kappa_j\bigr\}}\,\dd\kappa
   \;\geq\; F_{\varepsilon, n_i}(1 - \kappa_*),
   \]
where $F_{\varepsilon, n_i}(\delta) \to +\infty$ as $\delta \downarrow 0$,
so the integral diverges whenever $\kappa_* \to 1$.
\end{lemma}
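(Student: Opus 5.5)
We bound the integrand from below on a shrinking window near $\kappa = 1$, chosen at the scale $\delta = 1 - \kappa_*$. The integrand is positive on $(1-\varepsilon,1)$, so it suffices to integrate over a subinterval. On $[1-\varepsilon, 1)$ with $\varepsilon \le 1/2$ we have $\kappa \ge 1/2$, so the numerator satisfies $\{\kappa(1-\kappa)\}^{n_i/2-1} \ge c_{n_i}(1-\kappa)^{n_i/2-1}$. Here $c_{n_i} = 2^{-|n_i/2-1|}$ handles both signs of the exponent, using $\kappa \in [1/2,1]$.

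For the denominator, set $u = 1-\kappa$. Each factor equals $\kappa(1-\kappa_j) + (1-\kappa)\kappa_j \le (1-\kappa_j) + u \le \delta + u$, since $1-\kappa_j \le 1-\kappa_* = \delta$ for every $j \sim i$. Hence the product is at most $(\delta+u)^{n_i}$. The integral is then bounded below by
\[
  c_{n_i}\int_0^{\varepsilon} \frac{u^{n_i/2-1}}{(\delta+u)^{n_i}}\,\dd u .
\]
We restrict to $u \in [\delta, \varepsilon]$, which is nonempty once $\delta < \varepsilon$. There $\delta + u \le 2u$, so the integrand is at least $2^{-n_i} u^{-n_i/2-1}$. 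Integrating gives
\[
  F_{\varepsilon,n_i}(\delta) = c_{n_i}2^{-n_i}\frac{2}{n_i}\bigl(\delta^{-n_i/2} - \varepsilon^{-n_i/2}\bigr),
\]
which tends to $+\infty$ as $\delta \downarrow 0$. For $\delta \ge \varepsilon$ we simply set $F_{\varepsilon,n_i}(\delta)=0$, which is still a valid lower bound.

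The only point needing care is the direction of the bound on the factors. We need an upper bound on the denominator that is uniform over all neighbors, and that is exactly what $\kappa_*$ supplies through $1-\kappa_j \le \delta$. The divergence comes from the window $u \in [\delta, \varepsilon]$: there the integrand behaves like $u^{-n_i/2-1}$, which is non-integrable at $0$. A finer substitution $u = \delta v$ would give the sharper rate $\delta^{-n_i/2}$, but the crude window already suffices.
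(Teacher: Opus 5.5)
Your proof is correct and follows the paper's route. You use the same lower bound on $\kappa^{n_i/2-1}$ for $\kappa \ge 1/2$ and the same bound $\kappa(1-\kappa_j)+(1-\kappa)\kappa_j \le \delta + u$ on each denominator factor. The only difference is the last step: the paper lets $\delta \downarrow 0$ inside $c_{n_i}\int_0^\varepsilon u^{n_i/2-1}(\delta+u)^{-n_i}\,\mathrm{d}u$ and invokes Monotone Convergence, whereas you restrict to $u\in[\delta,\varepsilon]$ to get an explicit bound of order $\delta^{-n_i/2}$. That order is already the exact rate, so the substitution $u=\delta v$ in your closing remark would improve only the constant, not the rate.
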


\begin{proof}[Lemma~\ref{lem:diverge-one-kh0}]
Substitute $t = 1 - \kappa$ and write $\delta = 1 - \kappa_*$. On
$t \in (0, \varepsilon]$ we have $\kappa \geq 1/2$, so
$\kappa^{n_i/2-1} \geq c_{n_i} = \min\{1, 2^{-(n_i/2-1)}\} > 0$, while each
factor of the denominator satisfies
$\kappa(1-\kappa_j) + t\,\kappa_j \leq (1-\kappa_j) + t \leq \delta + t$.
Hence the integral is bounded below by
$F_{\varepsilon, n_i}(\delta) = c_{n_i}\int_0^\varepsilon t^{n_i/2-1}/(\delta+t)^{n_i}\,\dd t$,
whose integrand increases monotonically as $\delta \downarrow 0$ to
$c_{n_i}\,t^{-n_i/2-1}$, non-integrable at $0$ (since $n_i/2 + 1 > 1$ for
$n_i \geq 1$); by Monotone Convergence $F_{\varepsilon, n_i}(\delta) \to +\infty$.
\end{proof}

\begin{proof}[Theorem~\ref{thm:posterior-shrinkage}]
Both parts concern the conditional posterior of $\kappa_i$ given $\bfy$ and
the neighboring values $\{\kappa_j\}_{j \sim i}$. With $k = -1$ we have
$\omega_i = \beta_i$, so, for $\sigma_\epsilon = \tau = 1$, integrating the
trend out of the likelihood~\eqref{eq:likelihood} against the
prior~\eqref{eq:prior} gives the marginal likelihood
$y_i \mid \lambda_i \sim \Normal(0, 1 + \lambda_i^2)$. Since
$1 + \lambda_i^2 = \kappa_i^{-1}$, this contributes the factor
$\kappa_i^{1/2}\exp(-\kappa_i y_i^2/2)$, which depends on $\bfy$ only through
$y_i$. Multiplying by the conditional prior~\eqref{eq:cond-prior} at
$a = b = 1/2$, the conditional posterior of $\kappa_i$ is proportional to
\[
  g_y(\kappa_i)
   \;=\; \frac{\exp(-\kappa_i y_i^2/2)\,
               \kappa_i^{(n_i-1)/2}\,(1-\kappa_i)^{n_i/2-1}}
              {\prod_{j\sim i}\bigl\{\kappa_i(1-\kappa_j) + (1-\kappa_i)\kappa_j\bigr\}}.
\]

\textbf{Proof of (i).}
We bound the ratio
$\int_0^{1-\varepsilon} g_y\,\dd\kappa_i \big/ \int_{1-\varepsilon}^1 g_y\,\dd\kappa_i$
uniformly for $\bfy \in \mathcal{Y}_M$ (recall
$\mathcal{Y}_M = \{\bfy \in \R^n : |y_i| \le M\}$ from
Theorem~\ref{thm:posterior-shrinkage}); since $\kappa_* \to 1$, we may
assume $\kappa_* \geq 1/2$. On
$(0, 1-\varepsilon]$, $\exp(-\kappa_i y_i^2/2) \leq 1$ and each denominator
factor satisfies
$\kappa_i(1-\kappa_j) + (1-\kappa_i)\kappa_j \geq (1-\kappa_i)\kappa_j
\geq \varepsilon\,\kappa_* \geq \varepsilon/2$, so
\[
  \int_0^{1-\varepsilon} g_y\,\dd\kappa_i
  \;\leq\; \Bigl(\frac{2}{\varepsilon}\Bigr)^{n_i}
     \int_0^1 \kappa^{(n_i-1)/2}(1-\kappa)^{n_i/2-1}\,\dd\kappa
  \;=:\; C_1 \;<\; \infty,
\]
free of $y_i$ and of $\{\kappa_j\}$. On $[1-\varepsilon, 1]$,
$\exp(-\kappa_i y_i^2/2) \geq e^{-M^2/2}$ and
$\kappa_i^{(n_i-1)/2} = \kappa_i^{1/2}\,\kappa_i^{n_i/2-1}
\geq 2^{-1/2}\kappa_i^{n_i/2-1}$, so
$\int_{1-\varepsilon}^1 g_y\,\dd\kappa_i$ is bounded below by
$2^{-1/2}e^{-M^2/2}\,F_{\varepsilon, n_i}(1-\kappa_*)$ by
Lemma~\ref{lem:diverge-one-kh0}. The
ratio is therefore at most
$2^{1/2}e^{M^2/2}\,C_1/F_{\varepsilon, n_i}(1-\kappa_*) \to 0$ as $\kappa_* \to 1$,
uniformly for $\bfy \in \mathcal{Y}_M$  and over all neighbor
configurations sharing $\kappa_*$.

\textbf{Proof of (ii).}
Fix the neighboring values $\{\kappa_j\}_{j\sim i}$ and let $|y_i| \to \infty$. On
$[\varepsilon, 1]$, $\exp(-\kappa_i y_i^2/2) \leq e^{-\varepsilon y_i^2/2}$
and each denominator factor satisfies
$\kappa_i(1-\kappa_j) + (1-\kappa_i)\kappa_j \geq \varepsilon(1-\kappa_j) > 0$,
so $\int_\varepsilon^1 g_y\,\dd\kappa_i \leq C_2\,e^{-\varepsilon y_i^2/2}$
with $C_2 = C_2(\{\kappa_j\}, \varepsilon) < \infty$. On $(0, \varepsilon]$,
each denominator factor is at most
$\kappa_i + \kappa_j \leq 2$ and
$(1-\kappa_i)^{n_i/2-1} \geq \min\{1, (1-\varepsilon)^{n_i/2-1}\} > 0$, so,
substituting $u = \kappa_i y_i^2$, for $|y_i| \geq 1$
\[
  \int_0^{\varepsilon} g_y\,\dd\kappa_i
  \;\geq\; c\,|y_i|^{-(n_i+1)}
     \int_0^{\varepsilon y_i^2} e^{-u/2}\,u^{(n_i-1)/2}\,\dd u
  \;\geq\; c'\,|y_i|^{-(n_i+1)}
\]
for constants $c, c' > 0$ free of $y_i$. The ratio
$\int_\varepsilon^1 g_y / \int_0^\varepsilon g_y
\leq C_3\,|y_i|^{n_i+1} e^{-\varepsilon y_i^2/2} \to 0$, which is (ii): the
mass on $[\varepsilon, 1]$ decays exponentially in $y_i^2$ while the mass on
$(0, \varepsilon]$ decays only polynomially.
\end{proof}

\subsection*{Proof of Theorem~\ref{thm:posterior-proper}}

The proof rests on four lemmas, each with a single, separable role.
Lemma~\ref{lem:kernel-preserved} fixes the null space of the graph-difference
operators. Lemma~\ref{lem:trend-marginal} integrates $\bfbeta$ out of the
likelihood-times-prior product and produces an upper bound on the marginal that is
uniform in $\sigma_\epsilon^2$ and in the local components.
Lemma~\ref{lem:global-shrinkage-marginal} integrates that bound over the global
scale, using only that the log-normal prior on $\tau^2$ has finite negative
moments. Lemma~\ref{lem:hperp-proper} establishes that the local prior on
$\bfh^\perp$ is proper.

Throughout this proof we write $m = |E|$ for the number of edges. The
argument is presented for $k$ even or zero, where the shrinkage vector $\bfh$ has length
$m$; when $k$ is odd, $\bfh$ is indexed instead by the $n$ vertices, and one substitutes
$n$ for $m$ throughout.

\begin{lemma}[Null space of graph-difference operators]\label{lem:kernel-preserved}
If $G = (V, E)$ is connected, then
$\ker(\Delta^{(k+1)}) = \mathrm{span}\{\bfone_n\}$ for all
$k \geq 0$.
\end{lemma}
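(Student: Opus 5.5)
The plan is to prove the claim by induction on $k$, using the recursion \eqref{graph-diff-op} together with two facts about the incidence matrix $D := \Delta^{(1)}$ of a connected graph. The first fact is $\ker(D) = \mathrm{span}\{\bfone_n\}$. Indeed, $D\bfbeta = 0$ means $\beta_i = \beta_j$ for every edge $(i,j)$, and connectedness propagates this equality along paths to all vertices. The second fact is that $L = D^\top D$ satisfies $\ker(L) = \ker(D)$, because $\bfbeta^\top L \bfbeta = \Vert D\bfbeta\Vert_2^2$. More generally, $\ker(M^\top M) = \ker(M)$ for any real matrix $M$, and this identity drives the whole induction.

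I will organize the argument around the closed forms $\Delta^{(2j)} = L^{j}$ and $\Delta^{(2j+1)} = D L^{j}$, which follow directly from the recursion. The odd case of the recursion gives $\Delta^{(k+1)} = D^\top \Delta^{(k)}$ with $\Delta^{(k)} = D L^{j}$, hence $D^\top D L^j = L^{j+1}$. The even case gives $\Delta^{(k+1)} = D\Delta^{(k)} = D L^{j}$.

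For the even powers I use that $L$ is symmetric positive semidefinite. This gives $\ker(L^{j}) = \ker(L)$ for every $j \geq 1$, either via the spectral decomposition or by induction: if $L^{j}\bfbeta = 0$ with $j \geq 2$, then $\Vert L^{j-1}\bfbeta\Vert^2 = \bfbeta^\top L^{2j-2}\bfbeta = 0$ when $2j-2 \geq j$, which reduces $j$. Hence $\ker(L^j) = \mathrm{span}\{\bfone_n\}$.

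For the odd operators I note that $(DL^{j})^\top (DL^{j}) = L^{2j+1}$. Applying the identity $\ker(M^\top M) = \ker(M)$ gives $\ker(DL^{j}) = \ker(L^{2j+1}) = \mathrm{span}\{\bfone_n\}$. This covers all $k \geq 0$.

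Finally, I must address the isolated-vertex modification $\mathrm{bdiag}(\Delta^{(k+1)}, I_{n^*})$. Since $G$ is assumed connected with $n \geq 2$, there are no isolated vertices, so this modification does not arise. The same argument applies verbatim to the line graph, which is connected whenever $G$ is connected with at least one edge, and this will be needed later for $\Delta_k^{k_h+1}$ to give Lemma~\ref{lem:level-free}(i). I do not expect a real obstacle here. The only delicate step is correctly establishing the alternating closed form and the equality of kernels for powers of a positive semidefinite matrix; both are routine.
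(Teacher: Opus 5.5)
Your proof is correct, but it is organized differently from the paper's.

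The paper runs an induction on $k$ that mirrors the recursion \eqref{graph-diff-op} step by step:
\begin{itemize}
\item In the even step it writes $\Delta^{(k+1)}\mathbf{v} = \mathbf{0}$ as $\Delta^{(k)}\mathbf{v} = t\bfone_n$, and then kills $t$ using $\mathrm{im}(L^{k/2}) \perp \bfone_n$.
\item In the odd step it observes that $\Delta^{(k)}\mathbf{v}$ lies in both $\ker((\Delta^{(1)})^\top)$ and $\mathrm{im}(\Delta^{(1)})$, so it must vanish.
\end{itemize}

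You instead pass once to the closed forms $\Delta^{(2j)} = L^j$ and $\Delta^{(2j+1)} = \Delta^{(1)}L^j$. You then reduce every operator to a power of $L$. The odd-order operators go through the Gram identity $\ker(M^\top M) = \ker(M)$, which gives $\ker(\Delta^{(1)}L^j) = \ker(L^{2j+1})$. The powers are handled by the fact that kernels of powers of a symmetric matrix stabilize.

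Both arguments rest on the same linear-algebra fact: $\ker(M^\top) \cap \mathrm{im}(M) = \{\mathbf{0}\}$, equivalently $\|M\mathbf{x}\|^2 = \mathbf{x}^\top M^\top M\mathbf{x}$. So neither is deeper than the other.

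Your version cleanly separates the graph-theoretic input from the linear algebra. In effect you prove $\ker(\Delta^{(k+1)}) = \ker(\Delta^{(1)})$ for every graph. Connectedness enters only once, to identify $\ker(\Delta^{(1)}) = \mathrm{span}\{\bfone_n\}$; for a disconnected graph the same argument yields the span of the component indicators.

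The paper's version stays literally inside the recursive definition. Its induction absorbs the kernel stabilization for powers of $L$ implicitly, so it needs no separate lemma about them.
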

\begin{proof}[Lemma~\ref{lem:kernel-preserved}]
Induction on $k$; the two parities are handled separately
because $\Delta^{(k+1)}$ factors as $\Delta^{(1)}\Delta^{(k)}$ for $k$ even
and as $(\Delta^{(1)})^\top\Delta^{(k)}$ for $k$ odd. For $k = 0$, $\Delta^{(1)} = (\delta(e_1),\ldots,\delta(e_{m}))^\top$
is the oriented incidence matrix and $\kernel(\Delta^{(1)}) = \mathrm{span}\{\bfone_n\}$
because $G$ is connected. Suppose $\kernel(\Delta^{(k)}) = \mathrm{span}\{\bfone_n\}$
for some $k \geq 1$.

\emph{$k$ even.} Then $\Delta^{(k+1)} = \Delta^{(1)}\Delta^{(k)}$ with
$\Delta^{(k)} = L^{k/2}$ a power of the graph Laplacian
$L = (\Delta^{(1)})^\top \Delta^{(1)}$. Since $L$ is symmetric and
$L\bfone_n = \mathbf{0}$, also $L^{k/2}\bfone_n = \mathbf{0}$, hence
$\bfone_n \in \kernel(L^{k/2}) = \mathrm{im}(L^{k/2})^{\perp}$, i.e.\
$\mathrm{im}(\Delta^{(k)})\perp\bfone_n$. Trivially
$\mathrm{span}\{\bfone_n\}\subseteq\kernel(\Delta^{(k+1)})$. Conversely,
$\Delta^{(1)}(\Delta^{(k)}\mathbf{v}) = \mathbf{0}$ implies
$\Delta^{(k)}\mathbf{v}\in\kernel(\Delta^{(1)}) = \mathrm{span}\{\bfone_n\}$,
say $\Delta^{(k)}\mathbf{v} = t\bfone_n$. Taking inner product with
$\bfone_n$ and using $\mathrm{im}(\Delta^{(k)})\perp\bfone_n$ yields
$tn = 0$, hence $\Delta^{(k)}\mathbf{v} = \mathbf{0}$, and the inductive
hypothesis gives $\mathbf{v}\in\mathrm{span}\{\bfone_n\}$.

\emph{$k$ odd.} Then $\Delta^{(k+1)} = (\Delta^{(1)})^\top\Delta^{(k)}$.
Trivially $\bfone_n\in\kernel(\Delta^{(k+1)})$, since $\Delta^{(k)}\bfone_n
= \mathbf{0}$ by the inductive hypothesis. Conversely, let
$\mathbf{v} \in \kernel(\Delta^{(k+1)})$ and set
$\mathbf{w} = \Delta^{(k)}\mathbf{v}$. Then
$(\Delta^{(1)})^\top \mathbf{w} = \Delta^{(k+1)}\mathbf{v} = \mathbf{0}$, so
$\mathbf{w}\in\kernel((\Delta^{(1)})^\top)$. Since $k$ is odd,
$\Delta^{(k)} = \Delta^{(1)} L^{(k-1)/2}$ factors through $\Delta^{(1)}$,
so $\mathbf{w}\in\mathrm{im}(\Delta^{(1)})$. By the fundamental theorem of
linear algebra these two subspaces are orthogonal, so $\mathbf{w} = \mathbf{0}$,
i.e.\ $\Delta^{(k)}\mathbf{v} = \mathbf{0}$, and the inductive hypothesis
gives $\mathbf{v}\in\mathrm{span}\{\bfone_n\}$.
\end{proof}

\begin{lemma}[Trend marginal: upper bound]\label{lem:trend-marginal}
Let $\Delta^{(k+1)}$ with $k\geq0$ and $\rank(\Delta^{(k+1)}) = n - 1$, let
$\Lambda = \mathrm{diag}(\lambda_1^2,\ldots,\lambda_m^2)$ with all
$\lambda_j > 0$, $\tau^2 > 0$, and let $\sigma_\epsilon^2 > 0$. Then
\[
  M(\Lambda, \tau^2, \sigma_\epsilon^2)
    \;=\; \int_{\R^n} p(\bfy\mid\bfbeta,\sigma_\epsilon^2)\,
                       p(\Delta^{(k+1)}\bfbeta\mid\Lambda, \tau^2)\,\dd\bfbeta
\]
admits the closed form
\begin{equation}\label{eq:M-closed}
  M(\Lambda, \tau^2,\sigma_\epsilon^2) = C\,(\sigma_\epsilon^2)^{-(n-1)/2}\,
     |Q_\alpha|^{-1/2}\,\tau^{-m}\,\prod_{j=1}^m\lambda_j^{-1}\,e^{R(\bfy,\sigma_\epsilon^2,\tau^2,\Lambda)},
\end{equation}
where $C = (2\pi)^{-m/2}$ is a positive constant;  $Q_\alpha = \sigma_\epsilon^{-2}I_{n-1} + \tau^{-2} \,
V^\top {\Delta^{(k+1)}}^{\top} \Lambda^{-1} \Delta^{(k+1)} V$ is a positive definite matrix
for any $V\in\R^{n\times(n-1)}$ whose columns form an orthonormal basis
of $(\kernel \left(\Delta^{(k+1)}\right))^\perp$; and $R(\bfy,\sigma_\epsilon^2,\tau^2,\Lambda) \le 0$ is the exponent that
remains after integrating out $\bfbeta$, given explicitly in the proof.

In particular, if $\prod_{j=1}^m \lambda_j = 1$, then for every $\tau^2 > 0$ and $\sigma_{\epsilon}^2 > 0$,
\begin{equation}\label{eq:M-bound}
  M(\Lambda, \tau^2, \sigma_\epsilon^2) \;\le\; C \tau^{-m}.
\end{equation}
\end{lemma}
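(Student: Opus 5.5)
We decompose $\bfbeta$ along the null space identified in Lemma~\ref{lem:kernel-preserved} and then do two Gaussian integrals. Since $\rank(\Delta^{(k+1)}) = n-1$, its kernel is one-dimensional, spanned by $\bfone_n/\sqrt n$. Complete this unit vector with $V$ to an orthogonal matrix $[\bfone_n/\sqrt n,\ V]$ and write $\bfbeta = \alpha_0\bfone_n/\sqrt n + V\pmb\alpha$. This change of variables has unit Jacobian, and $\Delta^{(k+1)}\bfbeta = \Delta^{(k+1)}V\pmb\alpha$ does not involve $\alpha_0$.

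The likelihood factorizes accordingly. The term $\Vert\bfy-\bfbeta\Vert^2$ splits as $(\bar y\sqrt n-\alpha_0)^2 + \Vert V^\top\bfy-\pmb\alpha\Vert^2$. Integrating over $\alpha_0$ yields the factor $(2\pi\sigma_\epsilon^2)^{1/2}$, which lowers the power of $\sigma_\epsilon^2$ from $n/2$ to $(n-1)/2$.

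The $\pmb\alpha$-integrand is proportional to
\[
\exp\{-\tfrac12\sigma_\epsilon^{-2}\Vert V^\top\bfy-\pmb\alpha\Vert^2-\tfrac12\tau^{-2}\pmb\alpha^\top V^\top\Delta^\top\Lambda^{-1}\Delta V\pmb\alpha\},
\]
where $\Delta = \Delta^{(k+1)}$. Its precision is $Q_\alpha$. $Q_\alpha$ is positive definite because $\sigma_\epsilon^{-2}I_{n-1}$ is, and the second term is positive semidefinite; indeed the second term alone is positive definite, since $\Delta V$ has trivial kernel. Completing the square with $\bfell = \sigma_\epsilon^{-2}V^\top\bfy$ gives:
\begin{itemize}
\item the factor $(2\pi)^{(n-1)/2}|Q_\alpha|^{-1/2}$;
\item the exponent $R = -\tfrac12\sigma_\epsilon^{-2}\Vert V^\top\bfy\Vert^2 + \tfrac12\bfell^\top Q_\alpha^{-1}\bfell$.
\end{itemize}
The prior normalizing constant $(2\pi)^{-m/2}\tau^{-m}\prod_j\lambda_j^{-1}$ passes through unchanged. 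The powers of $2\pi$ from the likelihood and the two integrals cancel, leaving $C = (2\pi)^{-m/2}$ and the form \eqref{eq:M-closed}.

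For $R\le0$, note that $Q_\alpha \succeq \sigma_\epsilon^{-2}I$, so $Q_\alpha^{-1}\preceq\sigma_\epsilon^2 I$. Then $\bfell^\top Q_\alpha^{-1}\bfell\le\sigma_\epsilon^{-2}\Vert V^\top\bfy\Vert^2$, and hence $R\le0$. It is cleaner to also keep the $-\tfrac12 n\bar y^2\sigma_\epsilon^{-2}$ term from the $\alpha_0$ part inside $R$; this term is also $\le 0$.

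The main obstacle is the bound \eqref{eq:M-bound}, because the prefactor $(\sigma_\epsilon^2)^{-(n-1)/2}|Q_\alpha|^{-1/2}$ must be controlled uniformly in $\sigma_\epsilon^2$ and $\tau^2$. We use $Q_\alpha\succeq\sigma_\epsilon^{-2}I_{n-1}$ again. This gives $|Q_\alpha|\ge\sigma_\epsilon^{-2(n-1)}$, so
\[
(\sigma_\epsilon^2)^{-(n-1)/2}|Q_\alpha|^{-1/2}\le 1 .
\]
Combining this with $e^R\le1$ and $\prod_j\lambda_j=1$ yields $M\le C\tau^{-m}$. The only care needed is tracking the $\sigma_\epsilon$ powers from the $\alpha_0$ integral so that exactly $(n-1)/2$ remains to match $|Q_\alpha|$.
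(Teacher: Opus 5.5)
Your proof is correct and follows the paper's argument essentially step for step: the orthogonal split $\bfbeta = \alpha_0\bfone_n/\sqrt{n} + V\pmb\alpha$, the $\alpha_0$-integral contributing $(2\pi\sigma_\epsilon^2)^{1/2}$, completing the square in $\pmb\alpha$ to obtain the same exponent $R$, and then using $Q_\alpha \succeq \sigma_\epsilon^{-2} I_{n-1}$ twice, once for $R \le 0$ and once for $(\sigma_\epsilon^2)^{-(n-1)/2}|Q_\alpha|^{-1/2} \le 1$. One slip in your aside: the $\alpha_0$-integral over all of $\R$ absorbs the entire quadratic $(\sqrt{n}\,\bar y - \alpha_0)^2$, since the $-\tfrac12 n\bar y^2\sigma_\epsilon^{-2}$ piece is cancelled exactly by the completed-square term. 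There is therefore no leftover term to add to $R$, and inserting one would make the closed form false, although the final upper bound is unaffected.
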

The bound~\eqref{eq:M-bound} is what Step~1 of the
Theorem~\ref{thm:posterior-proper} proof needs. Note that $\prod_{j=1}^m \lambda_j = 1$ holds by construction under \eqref{eq:h-decomp}.%

\begin{proof}[Lemma~\ref{lem:trend-marginal}]
We split $\bfbeta$ along $\kernel(\Delta^{(k+1)})$ and its
orthogonal complement, integrate the two directions separately, and bound
the resulting quadratic form via $Q_\alpha \succcurlyeq \sigma_\epsilon^{-2}I_{n-1}$.
Since $\rank(\Delta^{(k+1)}) = n - 1$ and $\bfone_n \in \kernel(\Delta^{(k+1)})$ always, $\kernel\left(\Delta^{(k+1)}\right) = \mathrm{span}\{\bfone_n\}$; for connected $G$ this is Lemma~\ref{lem:kernel-preserved}.
Let $U = n^{-1/2}\bfone_n$ and $V$ as above, so $[U, V]$ is orthogonal.
Decompose $\bfbeta = U\gamma + V\pmb\alpha$. Then $\Delta^{(k+1)}\bfbeta = \Delta^{(k+1)}V\pmb\alpha$ and
$\|\bfy - \bfbeta\|_2^2 = (U^\top\bfy - \gamma)^2 + \|V^\top\bfy - \pmb\alpha\|_2^2$. The prior is $p(\Delta^{(k+1)}\bfbeta\mid\Lambda, \tau^2)\propto \tau^{-m} \prod_j\lambda_j^{-1}\exp(-\pmb\alpha^\top P_\alpha\pmb\alpha/2)$, where the matrix $P_\alpha = \tau^{-2}\, V^\top {\Delta^{(k+1)}}^\top  \Lambda^{-1}\Delta^{(k+1)}V$ is positive definite. To see why, note that $\Delta^{(k+1)}V$ must have full column rank: the columns of $V$ span $\kernel(\Delta^{(k+1)})^\perp$ by definition, and $\Delta^{\left(k+1\right)}\bfbeta$ depends only on $\pmb\alpha$.

The $\gamma$-integral contributes $(2\pi\sigma_\epsilon^2)^{1/2}$; completing
the square in $\pmb\alpha$ and integrating yields~\eqref{eq:M-closed} with
\[
  R(\bfy,\sigma_\epsilon^2,\tau^2,\Lambda)
   \;=\; \frac{(V^\top\bfy)^\top Q_\alpha^{-1}(V^\top\bfy)}{2\sigma_\epsilon^4}
       - \frac{\|V^\top\bfy\|_2^2}{2\sigma_\epsilon^2}.
\]
Since $Q_\alpha\succcurlyeq\sigma_\epsilon^{-2}I_{n-1}$, we have
$Q_\alpha^{-1}\preccurlyeq\sigma_\epsilon^2 I_{n-1}$ (so $R\le 0$) and
$|Q_\alpha|^{-1/2}\le(\sigma_\epsilon^2)^{(n-1)/2}$. Substituting
into~\eqref{eq:M-closed} gives~\eqref{eq:M-bound}.
\end{proof}

\begin{lemma}[Global shrinkage marginal]
\label{lem:global-shrinkage-marginal}
Let $\tau^2 = e^{\hzero}$ with $\hzero \sim \Normal(\muG, s_0^2)$ for some
$\muG \in \R$ and $s_0^2 > 0$, as in \eqref{eq:h-priors}, and let
$M(\Lambda, \tau^2, \sigma_\epsilon^2)$ be the trend marginal of
Lemma~\ref{lem:trend-marginal} with the geometric mean of $\{\lambda_j\}$ equal
to $1$. Then, for every $p > 0$,
\begin{equation}\label{eq:lognormal-moment}
  \mathbb{E}\bigl(\tau^{-p}\bigr)
  \;=\; \mathbb{E}\bigl\{e^{-(p/2)\hzero}\bigr\}
  \;=\; \exp\Bigl(-\tfrac{p}{2}\,\muG + \tfrac{p^2}{8}\,s_0^2\Bigr)
  \;<\; +\infty,
\end{equation}
and consequently the global shrinkage marginal is finite,
\begin{equation*}
  \int_0^\infty M\!\left(\Lambda, \tau^2, \sigma_\epsilon^2\right)
  p(\tau^2)\,\dd\tau^2 \;\le\; C\,\mathbb{E}\bigl(\tau^{-m}\bigr) \;<\; +\infty,
\end{equation*}
for every $k \geq 0$ and $k_h \geq 0$, with no condition on the graph.
\end{lemma}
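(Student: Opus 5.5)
The plan has two parts: first compute the negative moments of the log-normal global scale directly, then integrate the uniform trend-marginal bound \eqref{eq:M-bound} of Lemma~\ref{lem:trend-marginal} against the prior on $\tau^2$.

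\textbf{Step 1: the moment identity.} Since $\tau^2 = e^{\hzero}$, we have $\tau^{-p} = e^{-(p/2)\hzero}$. The value of $\mathbb{E}\{e^{-(p/2)\hzero}\}$ is then the moment generating function of $\hzero \sim \Normal(\muG, s_0^2)$ evaluated at $t = -p/2$, namely $\exp(t\muG + t^2 s_0^2/2)$. Substituting $t=-p/2$ gives exactly \eqref{eq:lognormal-moment}. This is finite for every real $p$ because the Gaussian moment generating function is entire, so all negative moments of $\tau$ are finite.

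\textbf{Step 2: the global marginal.} The geometric mean of $\{\lambda_j\}$ equals one, which is equivalent to $\prod_j \lambda_j = 1$. This holds by construction under the decomposition \eqref{eq:h-decomp}, because $\sum_j h_j^\perp = 0$ and $\lambda_j = \exp(h_j^\perp/2)$. Lemma~\ref{lem:trend-marginal} therefore applies and gives $M(\Lambda,\tau^2,\sigma_\epsilon^2) \le C\tau^{-m}$ pointwise in $\tau^2>0$, uniformly in $\sigma_\epsilon^2$ and in the local components. Since $M$ is a nonnegative measurable function of $\tau^2$, monotonicity of the integral yields
\[
\int_0^\infty M\,p(\tau^2)\,\dd\tau^2 \le C\int_0^\infty \tau^{-m}p(\tau^2)\,\dd\tau^2 = C\,\mathbb{E}(\tau^{-m}).
\]
Applying Step 1 with $p=m$ shows the right-hand side is finite.

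\textbf{Scope of the claim.} Nothing in the argument depends on $k$, on $k_h$, or on the graph. The graph enters only through the hypotheses of Lemma~\ref{lem:trend-marginal}, and the bound \eqref{eq:M-bound} does not depend on $k_h$ at all. When $k$ is odd, the same argument goes through with $m$ replaced by $n$.

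\textbf{Main obstacle.} The only genuine point is confirming that the uniform bound \eqref{eq:M-bound} actually applies, that is, verifying the normalization $\prod_j\lambda_j=1$. Because the prior on $h_0$ is placed on the mean of $\bfh$ itself, this normalization holds exactly rather than approximately. Beyond that check, the lemma reduces to the Gaussian moment generating function computation in Step 1.
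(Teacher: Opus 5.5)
Your proposal is correct and matches the paper's proof. Both obtain \eqref{eq:lognormal-moment} as the Gaussian moment generating function of $\hzero$ evaluated at $-p/2$, and both get the global marginal by integrating the uniform bound \eqref{eq:M-bound} against $p(\tau^2)$ and then applying the moment identity with $p=m$. Your extra check that $\prod_j \lambda_j = 1$ under \eqref{eq:h-decomp} is harmless but not needed here, since the lemma already assumes the geometric mean of the $\lambda_j$ equals one.
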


\begin{proof}[Lemma~\ref{lem:global-shrinkage-marginal}]
The identity \eqref{eq:lognormal-moment} is the moment generating function of
$\hzero \sim \Normal(\muG, s_0^2)$ evaluated at $-p/2$, finite for every $p$. The
bound follows by integrating \eqref{eq:M-bound} of Lemma~\ref{lem:trend-marginal}
against the prior density $p(\tau^2)$.
\end{proof}

The exact growth rate of $M$ as $\tau^2 \to 0^+$ explains the mean offset in
\eqref{eq:h-priors}, although the proof of Theorem~\ref{thm:posterior-proper} does
not require it. As $\tau^2 \to 0^+$,
\begin{equation*}
    Q_\alpha
    = \sigma_\epsilon^{-2}I_{n-1}
      + \tau^{-2}V^\top{\Delta^{(k+1)}}^\top\Lambda^{-1}\Delta^{(k+1)}V
    \;\sim\; \tau^{-2} V^\top{\Delta^{(k+1)}}^\top\Lambda^{-1}\Delta^{(k+1)}V \succ 0,
\end{equation*}
so $|Q_\alpha|^{-1/2} \sim \tau^{n-1}|V^\top{\Delta^{(k+1)}}^\top\Lambda^{-1}\Delta^{(k+1)}V|^{-1/2}$,
while $e^{R(\bfy,\sigma_\epsilon^2,\tau^2,\Lambda)} \to
e^{-\|V^\top\bfy\|^2/(2\sigma_\epsilon^2)} > 0$ as $Q_\alpha^{-1}\to 0$.
Substituting into~\eqref{eq:M-closed} and absorbing all factors constant in
$\tau^2$ into $C' > 0$ gives, for fixed
$(\bfy, \Lambda, \sigma_\epsilon^2)$,
\begin{equation}\label{eq:M-asymp}
    M\!\left(\Lambda,\tau^2,\sigma_\epsilon^2\right)
    \;=\; C'\,\tau^{-(m-n+1)}
           \bigl(1 + o(1)\bigr),
    \qquad \tau^2\to 0^+.
\end{equation}
The exponent $m - n + 1 = m - (n - 1)$ counts the shrinkage coordinates left
uninformed by the trend: $\Delta^{(k+1)}$ has rank $n - 1$, so for $k$ even or zero
it equals the dimension of the cycle space of $G$ (zero on trees), and for $k$ odd
it equals one after substituting $n$ for $m$. On the log scale, \eqref{eq:M-asymp} is the linear tilt
$\exp\{-(m - n + 1)\hzero/2\}$, and the mean offset in \eqref{eq:h-priors} cancels
it exactly:
\begin{equation}\label{eq:offset-cancel}
  e^{-(m-n+1)\hzero/2}\,
  e^{-\left\{\hzero - \mu_0 - (m-n+1)s_0^2/2\right\}^2/(2 s_0^2)}
  \;\propto\;
  e^{-(\hzero - \mu_0)^2/(2 s_0^2)},
\end{equation}
up to a constant free of $\hzero$. Hence near total shrinkage the level prior acts
like the centered Gaussian $\Normal(\mu_0, s_0^2)$: the growth of the marginal
likelihood creates no spurious mode at $\tau^2 = 0$.

\begin{lemma}[Properness of the local prior]\label{lem:hperp-proper}
Let $\Delta_k^{k_h+1} \in \R^{r \times m}$ with $k_h \geq 0$ be the shrinkage
operator, whose null space is $\mathrm{span}\{\bfone_m\}$: this follows from
Lemma~\ref{lem:kernel-preserved} applied to the shrinkage graph, which is
connected whenever $G$ is: for $k$ odd it is $G$ itself, and for $k$ even or
zero it is the line graph of $G$, connected for any connected $G$ with at least
one edge. If $m = 1$ the hyperplane $\bfone_m^\perp = \{0\}$ is a single point
and the conclusion is trivial; assume $m \geq 2$, so that
$\Delta_k^{k_h+1}$ has a nonzero singular value. Let
$\sigma_{\min}^{+} > 0$ denote its smallest nonzero singular value. Then, for any
$a, b > 0$, the local prior
\[
  \pi(\bfh^\perp) \;\propto\;
  \prod_{i=1}^{r} p_Z\bigl\{(\Delta_k^{k_h+1}\bfh^\perp)_i\bigr\},
  \qquad \bfh^\perp \in \bfone_m^\perp,
\]
with $p_Z(z) = \{\mathrm{B}(a,b)\}^{-1} e^{az}(1 + e^z)^{-(a+b)}$ the
$Z(a, b, 0, 1)$ density, is proper on the hyperplane $\bfone_m^\perp$.
\end{lemma}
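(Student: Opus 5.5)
The plan is to bound the integrand by a product of functions that each decay exponentially in one linear coordinate, and then change variables to make those coordinates independent. First, note that $p_Z$ has two-sided exponential tails. Indeed $p_Z(z)\le B(a,b)^{-1}e^{az}$ for $z\le 0$ and $p_Z(z)\le B(a,b)^{-1}e^{-bz}$ for $z\ge 0$. Setting $c=\min\{a,b\}>0$, this gives $p_Z(z)\le B(a,b)^{-1}e^{-c|z|}$ for all $z$. Consequently
\[
\prod_{i=1}^r p_Z\bigl\{(\Delta_k^{k_h+1}\bfh^\perp)_i\bigr\}\;\le\; B(a,b)^{-r}\exp\bigl(-c\,\|\Delta_k^{k_h+1}\bfh^\perp\|_1\bigr)\;\le\; B(a,b)^{-r}\exp\bigl(-c\,\|\Delta_k^{k_h+1}\bfh^\perp\|_2\bigr),
\]
where the last step uses $\|\cdot\|_1\ge\|\cdot\|_2$.

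Second, I would use the null-space statement in the lemma. The null space is $\mathrm{span}\{\bfone_m\}$, which follows from Lemma~\ref{lem:kernel-preserved} applied to the connected shrinkage graph; for odd $k_h$ one uses the recursion with the transposed incidence factor, and Lemma~\ref{lem:kernel-preserved} already covers both parities. It follows that $\bfone_m^\perp$ is exactly the row space of $\Delta_k^{k_h+1}$. On this subspace the operator acts injectively, so
\[
\|\Delta_k^{k_h+1}\bfh^\perp\|_2\;\ge\;\sigma_{\min}^{+}\,\|\bfh^\perp\|_2\qquad\text{for all }\bfh^\perp\in\bfone_m^\perp.
\]
This follows from the SVD: expand $\bfh^\perp$ in the right singular vectors with nonzero singular values, which span the row space.

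Third, combining the two bounds gives
\[
\int_{\bfone_m^\perp}\pi(\bfh^\perp)\,\dd\bfh^\perp\;\lesssim\;\int_{\R^{m-1}}e^{-c\,\sigma_{\min}^{+}\|\mathbf u\|_2}\,\dd\mathbf u\;<\;\infty.
\]
Here I parametrize the hyperplane isometrically by $\mathbf u\in\R^{m-1}$, for example through an orthonormal basis $V$ of $\bfone_m^\perp$, so Lebesgue measure on the hyperplane corresponds to Lebesgue measure on $\R^{m-1}$. The integral is finite because, in polar coordinates, it equals a constant times $\int_0^\infty \rho^{m-2}e^{-c\sigma_{\min}^{+}\rho}\,\dd\rho$, a Gamma integral. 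Positivity and measurability of the kernel then give a well-defined normalizing constant in $(0,\infty)$.

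The only step needing care is the injectivity bound on $\bfone_m^\perp$. Its key input is that the null space is exactly $\mathrm{span}\{\bfone_m\}$, including connectedness of the line graph when $k$ is even. I would cite this as given in the statement rather than reprove it: any connected $G$ with an edge has a connected line graph, since any two edges are joined by a path in $G$. With that settled, the argument is a direct norm-equivalence bound, and it needs no concavity or other fine structure of $p_Z$.
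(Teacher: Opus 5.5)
Your proposal is correct and follows essentially the same route as the paper's proof. Both use the two-sided exponential bound $p_Z(z)\le \{\mathrm{B}(a,b)\}^{-1}e^{-\min(a,b)|z|}$, pass from $\|\cdot\|_1$ to $\|\cdot\|_2$, apply the smallest-nonzero-singular-value bound on $\bfone_m^\perp$, and conclude from integrability of $e^{-c\,\sigma_{\min}^{+}\|\bfh^\perp\|_2}$ over the $(m-1)$-dimensional hyperplane. Your SVD justification of the injectivity bound and the polar-coordinate Gamma integral make explicit what the paper states in a single line.
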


\begin{proof}[Lemma~\ref{lem:hperp-proper}]
The $Z$-density has exponential tails: for $z \geq 0$, using
$(1 + e^z)^{-(a+b)} \leq e^{-(a+b)z}$ gives
$p_Z(z) \leq \{\mathrm{B}(a,b)\}^{-1} e^{-bz}$, while for $z < 0$, using
$(1 + e^z)^{-(a+b)} \leq 1$ gives $p_Z(z) \leq \{\mathrm{B}(a,b)\}^{-1} e^{az}$;
hence $p_Z(z) \leq \{\mathrm{B}(a,b)\}^{-1} e^{-\min(a,b)|z|}$ for all $z$.
Writing $\bfeta = \Delta_k^{k_h+1}\bfh^\perp$,
  \begin{align*}
  \prod_{i=1}^{r} p_Z(\eta_i) & \leq \{\mathrm{B}(a,b)\}^{-r} e^{-\min(a,b)\,\|\bfeta\|_1} \\
  & \leq \{\mathrm{B}(a,b)\}^{-r} e^{-\min(a,b)\,\|\bfeta\|_2} \\
  & \leq \{\mathrm{B}(a,b)\}^{-r}
     e^{-\min(a,b)\,\sigma_{\min}^{+}\|\bfh^\perp\|_2},
\end{align*}
using $\|\cdot\|_1 \geq \|\cdot\|_2$ and, on $\bfone_m^\perp$,
$\|\Delta_k^{k_h+1}\bfh^\perp\|_2 \geq \sigma_{\min}^{+}\|\bfh^\perp\|_2$. The
final bound is integrable over the $(m-1)$-dimensional hyperplane
$\bfone_m^\perp$.
\end{proof}

The argument covers every $k_h \geq 0$ uniformly, in particular the
case $k_h = 0$, where $r$ exceeds $m - 1$ whenever the shrinkage
graph contains cycles and no bijective change of variables to the innovations
exists.

\begin{proof}[Theorem~\ref{thm:posterior-proper}]
We evaluate the properness of the joint posterior by checking the integrability of:
\begin{equation}\label{eq:posterior-integral}
  \int p(\bfy\mid\bfbeta,\sigma_\epsilon^2)\,p(\Delta^{(k+1)}\bfbeta\mid\Lambda, \tau^2)\,
       \pi(\bfh^\perp)\,\pi(\tau^2)\,\pi(\sigma_\epsilon^2)\,\dd\bfbeta\,\dd\tau^2\,\dd\bfh^\perp\,\dd\sigma_\epsilon^2,
\end{equation}
where $\Lambda = \mathrm{diag}(e^{h_j^\perp})$ represents the local
components, with $\bfh^\perp$ ranging over the zero-sum subspace
$\bfone_m^\top\bfh^\perp = 0$ of the change of coordinates
\eqref{eq:h-decomp} (so the geometric mean of the $e^{h_j^\perp}$ is
$1$ by construction): $h_j = \hzero + h_j^\perp$ with
$\tau^2 = e^{\hzero}$, so $\lambda_j^2 = e^{h_j^\perp}$ and this $\Lambda$
is exactly the $\Lambda = \mathrm{diag}(\lambda_1^2,\ldots,\lambda_m^2)$ of
Lemma~\ref{lem:trend-marginal}, with the zero-sum coordinate supplying the
geometric-mean-one normalization that \eqref{eq:M-bound} requires.
Here $\pi(\tau^2)$ is the log-normal density induced by \eqref{eq:h-priors}, and
$\pi(\sigma_\epsilon^2)$ is the Inverse-Gamma prior. We proceed sequentially in
three steps; the integrand is nonnegative, so the sequential
order is justified (Tonelli).

\textbf{Step~1 ($\bfbeta$ and $\tau^2$).} We integrate out $\bfbeta$ via
Lemma~\ref{lem:trend-marginal} to yield the trend marginal
$M(\Lambda, \tau^2, \sigma_\epsilon^2)$. Since the geometric mean of the local
components is fixed equal to one on $\bfone_m^\perp$, the uniform
bound~\eqref{eq:M-bound} applies: $M \leq C\,\tau^{-m}$, uniformly in
$\sigma_\epsilon^2 > 0$ and $\bfh^\perp \in \bfone_m^\perp$. Integrating over the
global scale with Lemma~\ref{lem:global-shrinkage-marginal},
\[
  \int_0^\infty M(\Lambda, \tau^2, \sigma_\epsilon^2)\,\pi(\tau^2)\,\dd\tau^2
  \;\leq\; C\,\mathbb{E}\bigl(\tau^{-m}\bigr) \;=\; C_1 \;<\; +\infty,
\]
with $C_1$ free of both $\sigma_\epsilon^2$ and $\bfh^\perp$.

\textbf{Step~2 ($\bfh^\perp$).} The bound of Step~1 is constant in $\bfh^\perp$,
and the local prior $\pi(\bfh^\perp)$ is proper on $\bfone_m^\perp$ for every
$k_h \geq 0$ by Lemma~\ref{lem:hperp-proper}. Hence the integral over
$\bfh^\perp$ remains bounded by $C_1$.

\textbf{Step~3 ($\sigma_\epsilon^2$).} The Inverse-Gamma
prior~\eqref{eq:sigma-prior} is proper for any $a_\sigma, b_\sigma > 0$, so integrating the
constant bound of Step~2 leaves \eqref{eq:posterior-integral} bounded by $C_1$.
The joint marginal is finite, confirming that the posterior is proper.
\end{proof}

\section{Higher-order difference operators for shrinkage ($k_h = 1$)}
\label{sec:kh1}
We now extend the GDSP prior and posterior analysis (Section~\ref{sec-theory}) from $k_h = 0$ to $k_h = 1$. When $k_h = 0$,  the full conditionals only require comparisons with immediate neighbors
(Theorem~\ref{thm:cond-prior}). For $k_h \geq 1$, the shrinkage graph differences couple each $h_i$ to vertices beyond its immediate neighbors.
The following result collects the $k_h = 1$ analogues of
Theorems~\ref{thm:cond-prior}--\ref{thm:posterior-shrinkage}.

\begin{theorem}[Higher-order shrinkage operators]\label{rem:kh1-prior}
Let $k_h = 1$ and write
$\zeta_i = \prod_{j \sim i}(1-\kappa_j)/\kappa_j = \prod_{j \sim i} \lambda_j^2$.
\begin{enumerate}
\item[\textbf{(i)}] For any $a, b > 0$,
\begin{equation}\label{eq:cond-prior-kh1}
  \pi_i(\kappa_i \mid \{\kappa_j\}_{j \sim i})
  = \frac{n_i}{B(a,b)}
    \frac{\zeta_i^{b}\,\kappa_i^{bn_i - 1}\,(1-\kappa_i)^{an_i - 1}}
         {\bigl\{\zeta_i \kappa_i^{n_i} + (1-\kappa_i)^{n_i}\bigr\}^{a+b}}.
\end{equation}
\item[\textbf{(ii)}] For the horseshoe GDSP ($a = b = 1/2$) and any
$\varepsilon > 0$,
$\mathbb{P}_i(\kappa_i < \varepsilon \mid \{\kappa_j\}_{j\sim i})
\to 1$ as $\zeta_i \to \infty$, and
$\mathbb{P}_i(\kappa_i > 1 - \varepsilon \mid \{\kappa_j\}_{j\sim i})
\to 1$ as $\zeta_i \to 0$.
\item[\textbf{(iii)}] Theorem~\ref{thm:posterior-shrinkage} holds: for part (i) replace $\kappa_* \to 1$ with $\zeta_i \to 0$, and for part (ii) replace $\kappa_j \in (0, 1)$ for all $j \sim i$ with $\zeta_i \in (0, \infty)$.
\end{enumerate}
\end{theorem}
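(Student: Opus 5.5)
The plan is to read $\pi_i$ the same way Theorem~\ref{thm:cond-prior} is read. It is the conditional law of $h_i$ implied by the single innovation attached to vertex $i$ of the shrinkage graph, with the global scale fixed at $\tau^2=1$. For $k_h=1$ the operator $\Delta_k^{2}$ is the Laplacian of the shrinkage graph. Row $i$ therefore reads $\eta_i = n_i h_i - \sum_{j\sim i} h_j = n_i h_i - \log\zeta_i$, since $\sum_{j\sim i}h_j=\log\prod_{j\sim i}\lambda_j^2$.

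Part (i) is then a change of variables. Given the neighbors, $h_i = (\eta_i+\log\zeta_i)/n_i$ with $\eta_i\sim Z(a,b,0,1)$, so $h_i$ has density $n_i\,p_Z(n_i h_i-\log\zeta_i)$. Substituting $e^{h_i}=(1-\kappa_i)/\kappa_i$ gives $e^{a(n_ih_i-\log\zeta_i)} = \zeta_i^{-a}(1-\kappa_i)^{an_i}\kappa_i^{-an_i}$. It also gives $1+e^{n_ih_i-\log\zeta_i} = \{\zeta_i\kappa_i^{n_i}+(1-\kappa_i)^{n_i}\}/(\zeta_i\kappa_i^{n_i})$. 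Multiplying these pieces, raising the second to the power $-(a+b)$, and applying the Jacobian $\{\kappa_i(1-\kappa_i)\}^{-1}$ yields \eqref{eq:cond-prior-kh1} exactly, normalizing constant included.

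Part (ii) should be easier than Theorems~\ref{thm:prior_lims}--\ref{thm:prior-concentration-one}, because $\eta_i$ has a fixed distribution that does not depend on the neighbors. Set $t_\varepsilon=\log\{(1-\varepsilon)/\varepsilon\}$ for $\varepsilon<1$. Then
\[
\mathbb{P}_i(\kappa_i<\varepsilon\mid\cdot)=\mathbb{P}(h_i>t_\varepsilon)=\mathbb{P}(\eta_i>n_it_\varepsilon-\log\zeta_i)\to 1 \quad\text{as } \zeta_i\to\infty .
\]
Symmetrically, $\mathbb{P}_i(\kappa_i>1-\varepsilon\mid\cdot)=\mathbb{P}(\eta_i<-n_it_\varepsilon-\log\zeta_i)\to1$ as $\zeta_i\to0$. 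Neither Lemma~\ref{lem:drift} nor the choice $a=b=1/2$ is needed here; the argument only requires that $\eta_i$ is a proper real random variable.

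For part (iii) I would reuse the structure of the proof of Theorem~\ref{thm:posterior-shrinkage}. With $k=-1$ and $\sigma_\epsilon=\tau=1$, the conditional posterior is proportional to $g_y(\kappa)=\kappa^{1/2}e^{-\kappa y_i^2/2}\,\pi_i(\kappa\mid\cdot)$.

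For the analogue of (i), take $|y_i|\le M$ and $\varepsilon\le1/2$. On $(0,1-\varepsilon]$ the likelihood factor is at most $1$. On $[1-\varepsilon,1]$ it is at least $2^{-1/2}e^{-M^2/2}$. Hence
\[
\frac{\int_0^{1-\varepsilon}g_y}{\int_{1-\varepsilon}^1 g_y}\le 2^{1/2}e^{M^2/2}\,\frac{\mathbb{P}_i(\kappa_i\le1-\varepsilon)}{\mathbb{P}_i(\kappa_i>1-\varepsilon)}.
\]
By (ii) the right side tends to $0$ as $\zeta_i\to0$, uniformly over $\mathcal{Y}_M$. This replaces the divergence argument of Lemma~\ref{lem:diverge-one-kh0}, because here the prior itself is proper and normalized.

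For the analogue of (ii), fix $\zeta_i\in(0,\infty)$. On $[\varepsilon,1]$ the prior density integrates to at most $1$, so $\int_\varepsilon^1 g_y\le e^{-\varepsilon y_i^2/2}$. On $(0,\varepsilon]$ the denominator $\{\zeta_i\kappa^{n_i}+(1-\kappa)^{n_i}\}^{a+b}$ is bounded above and below by positive constants, and $(1-\kappa)^{n_i/2-1}$ is bounded below. So $\pi_i\ge c\,\kappa^{n_i/2-1}$ there, and the substitution $u=\kappa y_i^2$ gives $\int_0^\varepsilon g_y\ge c'|y_i|^{-(n_i+1)}$. The ratio of the two pieces then decays like $|y_i|^{n_i+1}e^{-\varepsilon y_i^2/2}\to0$.

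The main obstacle is the interpretive point at the start. The true full conditional of $h_i$ under the Laplacian prior also involves the rows of the neighbors $j$, in which $h_i$ appears with coefficient $-1$. The exact normalizing constant $n_i/B(a,b)$ in (i) shows that the statement refers to the row-$i$ conditional $\pi_i$. I would state this convention explicitly, parallel to the remark preceding the proof of Theorem~\ref{thm:cond-prior}, before carrying out the computations above.
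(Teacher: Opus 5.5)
Your proposal is correct. Part (i) is the paper's computation: you read row $i$ of the Laplacian as $h_i=\mu_i+\eta_i/n_i$ with $\mu_i=n_i^{-1}\sum_{j\sim i}h_j$, just as the paper does. The paper adopts that row-$i$ convention implicitly through its notation $\pi_i$ and $\mathbb{P}_i$, so your remark that this is not the full conditional of $h_i$ under the joint Laplacian prior is a fair clarification, not a departure.

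The routes differ in (ii) and in the first half of (iii).

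\textbf{Part (ii).} The paper works with the unnormalized horseshoe kernel and proves the tail bounds of Lemma~\ref{lem:g-bounds}. These give
\[
\mathbb{P}_i(\kappa_i\ge\varepsilon)\le \frac{n_iC_g}{\pi\sqrt{\zeta_i}\,\varepsilon^{n_i}},
\qquad
\mathbb{P}_i(\kappa_i\le1-\varepsilon)\le \frac{n_iC_g\sqrt{\zeta_i}}{\pi\,\varepsilon^{n_i}}.
\]
Your shift representation $h_i=(\eta_i+\log\zeta_i)/n_i$, with $\eta_i$ of fixed law, is shorter and holds for all $a,b>0$. The paper's version instead comes with explicit $\zeta_i^{\mp1/2}$ rates, which you could recover from the exponential tails of $p_Z$ if wanted.

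\textbf{Part (iii)(i).} The paper bounds the unnormalized mass on $(0,1-\varepsilon]$ by $C_g/\varepsilon^{n_i}$. It then shows via Lemma~\ref{lem:diverge-one} that the unnormalized mass on $[1-\varepsilon,1]$ diverges as $\zeta_i\to0$. This mirrors its $k_h=0$ argument, where no closed-form normalizing constant is available. You instead use that $\pi_i$ is already normalized. Sandwiching the likelihood factor between $2^{-1/2}e^{-M^2/2}$ and $1$ reduces the posterior odds to the prior odds from (ii), so the divergence lemma becomes unnecessary.

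\textbf{Part (iii)(ii).} This matches the paper's argument. The only difference is that you use normalization to bound $\int_\varepsilon^1 g_y\le e^{-\varepsilon y_i^2/2}$, where the paper uses $C_g e^{-\varepsilon y_i^2/2}/(\zeta_i\varepsilon^{n_i})$.
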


The neighbors now inform $\kappa_i$ only through the single aggregate
$\zeta_i$. Unanimity among the neighbors is therefore sufficient but not
necessary: a single extreme neighbor can dominate the product $\zeta_i$
when the remaining neighbors are bounded away from the opposite
extreme. In this sense, $k_h = 1$ propagates shrinkage through an
aggregate of the neighborhood, whereas $k_h = 0$ weighs each neighbor
separately.

The three parts are proved separately below: the density
\eqref{eq:cond-prior-kh1} first; the concentration limits (ii) after the
auxiliary Lemma~\ref{lem:g-bounds}; and the posterior adaptation (iii)
after Lemma~\ref{lem:diverge-one}.

\begin{proof}[Theorem~\ref{rem:kh1-prior}(i)]
Because $k_h = 1$, the shrinkage operator in \eqref{eq:prior-3}
is $\Delta_k^{2} = L$, the Laplacian of the shrinkage graph, so the prior on
$\bfh = (\log\lambda_1^2,\ldots,\log\lambda_m^2)^\top$
($m = |E|$ for $k$ even or zero; substitute $n$ for $m$ when
$k$ is odd) is specified through the
graph Laplacian, $L\bfh = \bfeta$ with
$\eta_i \sim^{iid} Z(a,b,0,1)$. For a non-isolated vertex $i$,
this is $n_i h_i - \sum_{j\sim i} h_j = \eta_i$, equivalently
$h_i = \mu_i + \eta_i/n_i$ with $\mu_i = n_i^{-1}\sum_{j\sim i} h_j$. The
$Z$-distribution is a location-scale family, so
\[
  h_i \mid \{h_j\}_{j \sim i} \;\sim\; Z(a, b, \mu_i, 1/n_i),
\]
with density $f(h_i) = (n_i/B(a,b))\cdot e^{a w}/(1+e^w)^{a+b}$, where
$w = n_i(h_i - \mu_i) = n_i h_i - \sum_{j\sim i}h_j$.

Reparametrize via $\kappa_i = 1/(1+e^{h_i})$, so that
$h_i = \log\{(1-\kappa_i)/\kappa_i\}$ and $|\dd h_i/\dd\kappa_i| =
\{\kappa_i(1-\kappa_i)\}^{-1}$. Substituting,
\[
  w \;=\; n_i\log\frac{1-\kappa_i}{\kappa_i} - \sum_{j\sim i}\log\frac{1-\kappa_j}{\kappa_j}
       \;=\; \log\frac{(1-\kappa_i)^{n_i}}{\zeta_i\,\kappa_i^{n_i}},
  \qquad
  \zeta_i = \prod_{j\sim i}\frac{1-\kappa_j}{\kappa_j}.
\]
The change of variables and the algebraic identity
$e^{aw}/(1+e^w)^{a+b}
 = (1-\kappa_i)^{a n_i}(\zeta_i\kappa_i^{n_i})^b / \{\zeta_i\kappa_i^{n_i}+(1-\kappa_i)^{n_i}\}^{a+b}$
then yield
\[
  \pi_i(\kappa_i\mid\{\kappa_j\}_{j\sim i})
   \;=\; \frac{n_i}{B(a,b)}\cdot
     \frac{\zeta_i^{b}\,\kappa_i^{bn_i-1}\,(1-\kappa_i)^{an_i-1}}
          {\{\zeta_i\kappa_i^{n_i}+(1-\kappa_i)^{n_i}\}^{a+b}},
\]
which is~\eqref{eq:cond-prior-kh1}.
\end{proof}

The two concentration limits of Theorem~\ref{rem:kh1-prior}(ii) share a
common analytical core, isolated
in Lemma~\ref{lem:g-bounds}. In what follows, $g(\kappa;\zeta)$
denotes the kernel of the horseshoe ($a=b=1/2$) conditional
prior~\eqref{eq:cond-prior-kh1}:
\begin{equation}\label{eq:g-def}
  g(\kappa;\zeta) \;=\;
    \frac{\{\kappa(1-\kappa)\}^{n_i/2-1}}{\zeta\,\kappa^{n_i}+(1-\kappa)^{n_i}},
  \qquad \kappa\in(0,1),
\end{equation}
so that $\pi_i(\kappa_i\mid\{\kappa_j\}_{j\sim i}) =
(n_i/\pi)\sqrt{\zeta_i}\,g(\kappa_i;\zeta_i)$.

\begin{lemma}[Tail bounds for $g$]\label{lem:g-bounds}
Let $g$ be as in \eqref{eq:g-def}, the horseshoe
($a = b = 1/2$) kernel.
For every $\zeta>0$ and $\varepsilon\in(0,1)$,
\begin{enumerate}
\item[\textbf{(a)}] $\displaystyle
  \sqrt{\zeta}\int_{\varepsilon}^{1} g(\kappa;\zeta)\,\dd\kappa
  \;\le\; \frac{C_g}{\sqrt{\zeta}\,\varepsilon^{n_i}}$,
\item[\textbf{(b)}] $\displaystyle
  \sqrt{\zeta}\int_{0}^{1-\varepsilon} g(\kappa;\zeta)\,\dd\kappa
  \;\le\; \frac{C_g\,\sqrt{\zeta}}{\varepsilon^{n_i}}$,
\end{enumerate}
where $C_g = \mathrm{B}(n_i/2, n_i/2) < \infty$.
\end{lemma}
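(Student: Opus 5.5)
Both bounds come from one elementary estimate on the denominator of $g$: since it is a sum of two nonnegative terms, we can discard whichever term is unhelpful and keep only a single power. What remains is then integrated against a Beta$(n_i/2,n_i/2)$ kernel.

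For (a), on $[\varepsilon,1)$ we keep the first term of the denominator. This gives $\zeta\kappa^{n_i}+(1-\kappa)^{n_i}\ge \zeta\kappa^{n_i}\ge \zeta\varepsilon^{n_i}$, and hence $g(\kappa;\zeta)\le \{\kappa(1-\kappa)\}^{n_i/2-1}/(\zeta\varepsilon^{n_i})$. Integrating over $[\varepsilon,1)$ and enlarging the range to $(0,1)$ yields
\[
\int_\varepsilon^1 g(\kappa;\zeta)\,\dd\kappa\le \frac{\mathrm{B}(n_i/2,n_i/2)}{\zeta\,\varepsilon^{n_i}}.
\]
This integral is finite because $n_i\ge1$ makes both Beta parameters positive. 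Multiplying by $\sqrt{\zeta}$ gives $C_g/(\sqrt{\zeta}\,\varepsilon^{n_i})$, which is (a).

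For (b), on $(0,1-\varepsilon]$ we keep the second term instead. Then $1-\kappa\ge\varepsilon$, so the denominator is at least $(1-\kappa)^{n_i}\ge\varepsilon^{n_i}$ and $g\le\{\kappa(1-\kappa)\}^{n_i/2-1}/\varepsilon^{n_i}$. Integrating, enlarging the range to $(0,1)$, and multiplying by $\sqrt{\zeta}$ gives $C_g\sqrt{\zeta}/\varepsilon^{n_i}$, which is (b).

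There is no real obstacle here. The only points to check are that $C_g=\mathrm{B}(n_i/2,n_i/2)$ is finite, which holds whenever $n_i\ge1$, and that the bounds hold uniformly for every $\zeta>0$. They do, because the estimates never used any size restriction on $\zeta$.

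These bounds feed Theorem~\ref{rem:kh1-prior}(ii) through the normalization $\pi_i=(n_i/\pi)\sqrt{\zeta_i}\,g$. Bound (a), with prefactor $\zeta^{-1/2}$, forces the mass on $[\varepsilon,1)$ to vanish as $\zeta_i\to\infty$. Bound (b), with prefactor $\zeta^{1/2}$, forces the mass on $(0,1-\varepsilon]$ to vanish as $\zeta_i\to0$.
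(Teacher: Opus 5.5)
Your proof is correct and is essentially the paper's argument: you lower-bound the denominator by $\zeta\varepsilon^{n_i}$ on $[\varepsilon,1)$ and by $(1-\kappa)^{n_i}\ge\varepsilon^{n_i}$ on $(0,1-\varepsilon]$. You then integrate the remaining kernel $\{\kappa(1-\kappa)\}^{n_i/2-1}$ over $(0,1)$ to get $\mathrm{B}(n_i/2,n_i/2)$, and multiply by $\sqrt{\zeta}$.
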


The two bounds are stated in the form their consumers need: each left-hand side equals
$(\pi/n_i)\,\mathbb{P}_i(\kappa_i\in I)$ for the relevant tail interval $I$, so
the concentration claims of Theorem~\ref{rem:kh1-prior}(ii)
reduce to taking limits in $\zeta$.

\begin{proof}[Lemma~\ref{lem:g-bounds}]
On $\kappa\in[\varepsilon,1)$, the denominator of $g$ satisfies
$\zeta\kappa^{n_i}+(1-\kappa)^{n_i}\ge\zeta\varepsilon^{n_i}$, so
\[
  \int_{\varepsilon}^{1} g(\kappa;\zeta)\,\dd\kappa
   \;\le\; \frac{1}{\zeta\varepsilon^{n_i}}\int_0^1\{\kappa(1-\kappa)\}^{n_i/2-1}\,\dd\kappa
   \;=\; \frac{C_g}{\zeta\varepsilon^{n_i}}.
\]
Multiplying by $\sqrt\zeta$ gives (a). Part (b) is analogous: on
$\kappa\in(0,1-\varepsilon]$,
$\zeta\kappa^{n_i}+(1-\kappa)^{n_i}\ge(1-\kappa)^{n_i}\ge\varepsilon^{n_i}$.
\end{proof}

\begin{proof}[Theorem~\ref{rem:kh1-prior}(ii), $\zeta_i \to \infty$]
The claim is trivial for $\varepsilon \geq 1$, so take $\varepsilon < 1$.
As $\zeta_i \to \infty$, Lemma~\ref{lem:g-bounds}(a) gives
\[
  \mathbb{P}_i(\kappa_i\ge\varepsilon\mid\{\kappa_j\}_{j\sim i})
   \;=\; \frac{n_i}{\pi}\sqrt{\zeta_i}\int_\varepsilon^1 g(\kappa;\zeta_i)\,\dd\kappa
   \;\le\; \frac{n_i C_g}{\pi\sqrt{\zeta_i}\,\varepsilon^{n_i}}
   \;\to\; 0.
\]
\end{proof}

\begin{proof}[Theorem~\ref{rem:kh1-prior}(ii), $\zeta_i \to 0$]
The claim is trivial for $\varepsilon \geq 1$, so take $\varepsilon < 1$.
As $\zeta_i \to 0$, Lemma~\ref{lem:g-bounds}(b) gives
\[
  \mathbb{P}_i(\kappa_i\le 1-\varepsilon\mid\{\kappa_j\}_{j\sim i})
   \;=\; \frac{n_i}{\pi}\sqrt{\zeta_i}\int_0^{1-\varepsilon} g(\kappa;\zeta_i)\,\dd\kappa
   \;\le\; \frac{n_i C_g\sqrt{\zeta_i}}{\pi\,\varepsilon^{n_i}}
   \;\to\; 0.
\]
\end{proof}

We now turn to the posterior adaptation claimed in
Theorem~\ref{rem:kh1-prior}(iii), whose tail near $\kappa = 1$ is controlled by the
following analogue of Lemma~\ref{lem:diverge-one-kh0}. As there, we take
$\varepsilon \leq 1/2$ without loss, by monotonicity of the events in
$\varepsilon$.

\begin{lemma}[Divergence near $\kappa=1$, aggregate form]\label{lem:diverge-one}
For every $\varepsilon\in(0,1/2]$,
\[
  \int_{1-\varepsilon}^{1}
    \frac{\kappa^{(n_i-1)/2}\,(1-\kappa)^{n_i/2-1}}
         {\zeta\,\kappa^{n_i}+(1-\kappa)^{n_i}}\,\dd\kappa
   \;\to\; +\infty
  \qquad\text{as } \zeta\to 0^+.
\]
\end{lemma}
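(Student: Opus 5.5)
I would mirror the proof of Lemma~\ref{lem:diverge-one-kh0}: substitute $t = 1-\kappa$, bound the integrand from below by a function $F_\zeta(t)$ that increases monotonically as $\zeta \downarrow 0$ to a non-integrable limit at $t=0$, and then invoke Monotone Convergence.

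On $t \in (0,\varepsilon]$ with $\varepsilon \le 1/2$, we have $\kappa \ge 1/2$. The numerator factor therefore satisfies $\kappa^{(n_i-1)/2} \ge c_{n_i} := \min\{1, 2^{-(n_i-1)/2}\} > 0$; this covers both signs of the exponent, since $n_i = 1$ gives exponent $0$. For the denominator, $\kappa^{n_i} \le 1$, so
\[
  \zeta\,\kappa^{n_i} + (1-\kappa)^{n_i} \;\le\; \zeta + t^{n_i}.
\]
Hence the integral is bounded below by
\[
  F(\zeta) \;=\; c_{n_i}\int_0^{\varepsilon} \frac{t^{n_i/2-1}}{\zeta + t^{n_i}}\,\dd t .
\]
For each fixed $t$, the integrand is nonnegative and increases monotonically as $\zeta \downarrow 0$. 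Its pointwise limit is $t^{n_i/2-1-n_i} = t^{-n_i/2-1}$.

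This exponent satisfies $-n_i/2 - 1 \le -3/2 < -1$ for every $n_i \ge 1$, so the limit is not integrable near $t = 0$. By Monotone Convergence, applied along any sequence $\zeta_k \downarrow 0$ with monotonicity ensuring the full limit, $F(\zeta) \to +\infty$. Therefore the original integral diverges as $\zeta \to 0^+$.

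As a sanity check, the rate can be made explicit. Substituting $t = \zeta^{1/n_i}u$ gives
\[
  F(\zeta) = c_{n_i}\,\zeta^{-1/2}\int_0^{\varepsilon\zeta^{-1/n_i}} \frac{u^{n_i/2-1}}{1+u^{n_i}}\,\dd u .
\]
The last integral converges to a finite positive constant, since its integrand is integrable at both $0$ and $\infty$. So $F(\zeta) \asymp \zeta^{-1/2}$. This explicit $\zeta^{-1/2}$ rate is also useful downstream: it matches the $\sqrt{\zeta}$ factors in Lemma~\ref{lem:g-bounds} when proving Theorem~\ref{rem:kh1-prior}(iii).

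The only step that needs care is the lower bound on the numerator for small $n_i$, where the exponents $(n_i-1)/2$ and $n_i/2-1$ may be zero or negative. I handle this by keeping the factor $(1-\kappa)^{n_i/2-1} = t^{n_i/2-1}$ exact and bounding only the $\kappa$-power, which is harmless since $\kappa \in [1/2,1]$. Otherwise the argument is routine.
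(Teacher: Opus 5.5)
Your proposal is correct and follows essentially the same route as the paper: substitute $t = 1-\kappa$, bound the $\kappa$-power below by $2^{-(n_i-1)/2}$, note that the integrand increases as $\zeta \downarrow 0$ to the non-integrable $t^{-n_i/2-1}$, and conclude by Monotone Convergence. Your two extras are both valid and not in the paper: you simplify the denominator to $\zeta + t^{n_i}$ (the paper keeps $\zeta(1-t)^{n_i}$), and your scaling substitution yields the explicit $\zeta^{-1/2}$ rate.
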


\begin{proof}[Lemma~\ref{lem:diverge-one}]
Substituting $t=1-\kappa$ and using $(1-t)^{(n_i-1)/2}\ge 2^{-(n_i-1)/2}$ on
$t\in(0,\varepsilon]$, the integrand is bounded below by
$2^{-(n_i-1)/2}\,t^{n_i/2-1}/\{\zeta(1-t)^{n_i}+t^{n_i}\}$, which increases
monotonically as $\zeta\downarrow 0$ to $2^{-(n_i-1)/2}\,t^{-n_i/2-1}$. The
limit is non-integrable at $0$ (since $n_i/2+1>1$ when $n_i\ge 1$); the
conclusion follows by Monotone Convergence.
\end{proof}

\begin{proof}[Theorem~\ref{rem:kh1-prior}(iii)]
With $k=-1$ and $\sigma_\epsilon=1$, the marginal likelihood is
$y_i\mid\lambda_i\sim\Normal(0,1+\lambda_i^2)$. Since $1+\lambda_i^2=\kappa_i^{-1}$,
the likelihood contributes $\kappa_i^{1/2}\exp(-\kappa_i y_i^2/2)$ to
the tilted density
$\pi_i(\kappa_i \mid y_i, \{\kappa_j\}_{j\sim i}) \propto
\kappa_i^{1/2}\exp(-\kappa_i y_i^2/2)\,\pi_i(\kappa_i \mid \{\kappa_j\}_{j\sim i})$,
the $k_h = 1$ analogue of the conditional posterior used in
Theorem~\ref{thm:posterior-shrinkage}, which depends on $\bfy$ only through
$y_i$ and is proportional to (we reuse the name $g_y$ from the
$k_h = 0$ proof; here the denominator is the aggregate form)
\[
  g_y(\kappa_i)
   \;=\; \kappa_i^{1/2}\,\exp(-\kappa_i y_i^2/2)\,g(\kappa_i;\zeta_i)
   \;=\;
   \frac{\exp(-\kappa_i y_i^2/2)\,\kappa_i^{(n_i-1)/2}\,(1-\kappa_i)^{n_i/2-1}}
        {\zeta_i\,\kappa_i^{n_i}+(1-\kappa_i)^{n_i}}.
\]

\textbf{Proof of (i).}
As $\zeta_i \to 0$, we bound the ratio
$\int_0^{1-\varepsilon}g_y\,\dd\kappa_i / \int_{1-\varepsilon}^1 g_y\,\dd\kappa_i$
uniformly for $\bfy\in\mathcal{Y}_M$. On $(0,1-\varepsilon]$,
$\exp(-\kappa_i y_i^2/2)\le 1$ and the denominator of $g(\kappa_i;\zeta_i)$ is
at least $\varepsilon^{n_i}$, so, using
$\kappa_i^{(n_i-1)/2} \le \kappa_i^{n_i/2-1}$,
$\int_0^{1-\varepsilon}g_y\,\dd\kappa_i \le C_g/\varepsilon^{n_i}$, a finite
constant independent of $\zeta_i$ and $y_i$, with
$C_g = \mathrm{B}(n_i/2, n_i/2)$ as in Lemma~\ref{lem:g-bounds}. On $[1-\varepsilon,1]$,
$\exp(-\kappa_i y_i^2/2)\ge e^{-M^2/2}=:c_M>0$, so
$\int_{1-\varepsilon}^1 g_y\,\dd\kappa_i$ is bounded below by $c_M$ times
the integral in Lemma~\ref{lem:diverge-one}, which diverges to $+\infty$ as
$\zeta_i\to 0$. The ratio thus tends to $0$ uniformly for $\bfy\in\mathcal{Y}_M$.

\textbf{Proof of (ii).}
Fix $\zeta_i\in(0,\infty)$ and let $|y_i|\to\infty$. On $[\varepsilon,1]$,
$\exp(-\kappa_i y_i^2/2)\le e^{-\varepsilon y_i^2/2}$ and the
denominator of $g$ is at least $\zeta_i\varepsilon^{n_i}$, so, using
$\kappa_i^{(n_i-1)/2} \le \kappa_i^{n_i/2-1}$,
$\int_\varepsilon^1 g_y\,\dd\kappa_i \le
\{C_g/(\zeta_i\varepsilon^{n_i})\}\,e^{-\varepsilon y_i^2/2}$ with
$C_g = \mathrm{B}(n_i/2, n_i/2)$.
On $(0,\delta]$ with $\delta=\min(\varepsilon,1/2)$, the substitution
$u=\kappa_i y_i^2/2$, together with $(1-\kappa_i)^{n_i/2-1}\ge c_{n_i} = \min\{1,\, 2^{-(n_i/2-1)}\}$
and $\zeta_i\kappa_i^{n_i}+(1-\kappa_i)^{n_i}\le\zeta_i+1$, gives
\[
  \int_0^\delta g_y\,\dd\kappa_i
   \;\ge\; \frac{c_{n_i}}{\zeta_i+1}\left(\frac{2}{y_i^2}\right)^{(n_i+1)/2}
   \int_0^{\delta y_i^2/2} u^{(n_i-1)/2}e^{-u}\,\dd u,
\]
whose right-hand side is bounded below by
$c'\,|y_i|^{-(n_i+1)}$ for $|y_i| \geq 1$, with $c' > 0$ free of $y_i$. The mass on
$[\varepsilon,1]$ decays exponentially while the mass on $(0,\varepsilon]$
decays only polynomially, so the ratio tends to $0$.
\end{proof}

\section{MCMC sampling algorithm and computational details}
\label{sec:extra-mcmc}
We now show the steps of the Gibbs sampler for the joint
posterior density of $(\bfbeta, \bfh, \sigma_\epsilon^2)$, which also implicitly determines $\pmb \lambda$ and $\tau$. The posterior is
proportional to
\begin{equation}\label{eq:joint-kernel}
  \prod_{i=1}^{n} \phi\bigl(y_i;\, \beta_i,\, \sigma_\epsilon^2\bigr)
  \;\times\;
  \prod_{j} \phi\bigl(\omega_j;\, 0,\, e^{h_j}\bigr)
  \;\times\; \pi(\bfh) \;\times\; \pi(\sigma_\epsilon^2),
\end{equation}
where $\pmb\omega = \Delta^{(k+1)}\bfbeta$, $\pi(\bfh)$ is in Lemma~\ref{lem:level-free},  and $\pi(\sigma_\epsilon^2)$
is the Inverse-Gamma prior \eqref{eq:sigma-prior}. With missing data,
$\bfy$ collects the observed and currently imputed values. Throughout this section we fix $a = b = 1/2$ for the horseshoe GDSP, but modifications are available for general $a,b > 0$.

We initialize the Gibbs sampler as follows:
$\pmb{\beta}^{\left(0\right)}$ is the graph-weighted average of $\pmb y$ (plus Gaussian noise to randomize the initialization), $\sigma_{\epsilon}^{\left(0\right)}$ is the sample standard deviation, and $h_i^{(0)} = \log\{(\omega_i^{(0)})^2 + c\}$, i.e.\ the initial prior variance $\exp(h_i^{(0)})$ is $(\omega_i^{(0)})^2 + c$, where  $\omega^{\left(0\right)} =
\Delta^{\left(k+1\right)}\pmb{\beta}^{\left(0\right)}$ and
$c = 10^{-4}$ is a small offset we add for numerical
stability. 

Each iteration of the Gibbs sampler then cycles through the
following steps.
\begin{enumerate}
\item Draw $[\bfbeta \mid -] \sim \Normal_n(Q^{-1}_{\beta}\bfell_{\beta},\,
  Q_{\beta}^{-1})$, with $Q_\beta$ and $\bfell_\beta$ in \eqref{eq:awol}:
  compute the sparse Cholesky decomposition
  $Q_{\beta} = L_{\beta}L_{\beta}^\top$, solve
  $L_{\beta}\pmb{a}_{\beta} = \bfell_{\beta}$ for $\pmb{a}_{\beta}$, and
  solve $L_{\beta}^\top\bfbeta = \pmb{a}_{\beta} + \pmb{e}_{\beta}$ for
  $\bfbeta$, with $\pmb{e}_{\beta} \sim \Normal_n\left(\pmb 0, I_n\right)$
  (Section~\ref{mcmc-trend}). The sparsity pattern of $Q_\beta$ is fixed
  across iterations; when $Q_\beta$ is banded with bandwidth $b_Q$, the
  Cholesky decomposition requires $\mathcal{O}(n b_Q^2)$ FLOPS and
  $\mathcal{O}(n b_Q)$ memory.
\item Draw $[\bfh \mid -]$ in four sub-steps:
  \begin{enumerate}
  \item Set $\pmb{\omega} = \Delta^{\left(k+1\right)}\bfbeta$ and
    $\tilde\omega_j = \log(\omega_j^2 + c_0)$ as in \eqref{eq:pseudo-data}.
  \item Sample the augmentations for \eqref{eq:prior} and \eqref{eq:prior-3}: draw the mixture indicators $z_j$
    from \eqref{eq:z-update} and 
    draw $[\xi_i \mid -] \sim \mathrm{PG}(1,\, \eta_i)$ independently 
    via \texttt{BayesLogit::rpg}, where $\pmb{\eta} = \Delta_k^{k_h + 1}\bfh$ uses the current value of $\bfh$. 
  \item Draw
    $[\bfh \mid -] \sim \Normal_m\bigl((Q_h^{\star})^{-1}\bfell_h^{\star},\,
    (Q_h^{\star})^{-1}\bigr)$ as defined in Lemma~\ref{lem:h-fcd}. First, compute the sparse
    Cholesky decomposition $Q_{h} = L_{h}L_{h}^\top$ and draw
    $\pmb{x}_0 \sim \Normal_m(Q_h^{-1}\bfell_h^{\star},\, Q_h^{-1})$ using two triangular solves as in step 1. Next, apply the Sherman-Morrison
    correction: (i) solve $L_h\pmb{c} = \pmb{u}$ and
    $L_h^\top\pmb{g} = \pmb{c}$, so that $\pmb{g} = Q_h^{-1}\pmb{u}$ with
    $\pmb{u} = (s_0\, m)^{-1}\bfone_m$, (ii) draw $r_0 \sim \Normal(0, 1)$, and (iii)
    set
    \begin{equation}\label{eq:sherman-morrison}
      \bfh \;=\; \pmb{x}_0 \;-\; \pmb{g}\,
                 \frac{\pmb{u}^\top\pmb{x}_0 + r_0}{1 + \pmb{u}^\top\pmb{g}}.
    \end{equation}
    By design, this algorithm leverages the sparsity of $Q_h$ (inherited from $\Delta_k^{k_h+1}$) for both the initial draw of $\pmb{x}_0$ and subsequent correction, which reuses $L_h$. The actual precision
    $Q_h^{\star} = Q_h + \pmb{u}\pmb{u}^\top$ is dense and never formed.
  \item Set $\hzero = m^{-1}\bfone_m^\top\bfh$, $\tau^2 = \exp(\hzero)$,
    and $\lambda_j^2 = \exp(h_j - \hzero)$ for $j = 1, \ldots, m$ (Lemma~\ref{lem:level-free}). 
  \end{enumerate}
\item Draw $[\sigma_\epsilon^{2} \mid \pmb y, -] \sim
  \mathrm{Inverse\text{-}Gamma}\bigl(a_\sigma + n/2,\;
  b_\sigma + \sum_{i=1}^n(y_i - \beta_i)^2/2\bigr)$.
\item With missing data, draw
  $[\tilde y_i \mid -] \sim \Normal(\beta_i,\, \sigma_\epsilon^2)$
  independently for $i \in \mathcal{I}$ as in \eqref{pred}, and set
  $\pmb y = (\pmb y_{-\mathcal{I}}, \pmb{\Tilde{y}}_{\mathcal{I}})$.
\end{enumerate}

To complete the discussion of the sampler, we derive the full conditional distribution of $\bfh$, justify the sampling step 2(c), and then present the proof of Lemma~\ref{lem:level-free}, which 
clarifies how $\bfh$ encodes both the global and local shrinkage parameters. 

\begin{lemma}[Full conditional of $\bfh$]\label{lem:h-fcd}
Let $Q_h = \Sigma_v^{-1} + \bigl(\Delta_k^{k_h+1}\bigr)^\top
\Sigma_\xi^{-1}\,\Delta_k^{k_h+1}$,
$Q_h^{\star} = Q_h + (s_0^2m^2)^{-1}\pmb{1}_m\pmb{1}_m^\top$, and
$\bfell_h^{\star} = \Sigma_v^{-1}\bigl(\tilde{\pmb\omega} - \pmb{\mu}_z\bigr)
+ \muG/(s_0^2 m)\pmb{1}_m$. The full conditional distribution of the log-variance is $[\bfh \mid -]\sim\Normal_m\bigl((Q_h^{\star})^{-1}\bfell_h^{\star},\,
(Q_h^{\star})^{-1}\bigr)$.
\end{lemma}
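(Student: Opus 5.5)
The plan is to write the full conditional density of $\bfh$ given all augmentation variables, collect every factor that depends on $\bfh$, and show that the log-density is a quadratic form in $\bfh$. We then read off the precision and linear term by completing the square.

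We condition on $(\bfbeta, \pmb z, \pmb\xi, \sigma_\epsilon^2)$. Under the augmented joint density, the factors involving $\bfh$ are:
\begin{enumerate}
\item the pseudo-likelihood from \eqref{eq:pseudo-data}, namely $\tilde{\pmb\omega} \mid \bfh, \pmb z \sim \Normal_m(\bfh + \pmb\mu_z, \Sigma_v)$;
\item the P\'olya-Gamma-augmented innovation density from \eqref{eq:prior-3};
\item the global-scale factor $\phi(\hzero; \muG, s_0^2)$ with $\hzero = m^{-1}\bfone_m^\top\bfh$.
\end{enumerate}
That these are the only factors, with no extra Jacobian, follows from Lemma~\ref{lem:level-free}(ii). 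There the joint prior is $\pi(\bfh) \propto g(\bfh)\,\phi(\hzero;\muG,s_0^2)$ on all of $\R^m$. After augmentation, $g(\bfh)$ is replaced by $\prod_i \exp\{(a-b)\eta_i/2 - \xi_i\eta_i^2/2\}\,p_{a+b}(\xi_i)$ with $\pmb\eta = \Delta_k^{k_h+1}\bfh$. Its marginal over $\pmb\xi$ recovers $\prod_i p_Z(\eta_i) = g(\bfh)$, by the cited Theorem~4 of \citet{kowalDynamicShrinkageProcesses2019}. One point needs checking: the augmentation must be valid row by row even when the number of rows $r$ exceeds $m-1$. This holds because $g$ is defined as a product over rows, so each row is augmented independently.

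Next we expand each exponent in $\bfh$. With $a=b$, the innovation term contributes $-\tfrac12\bfh^\top(\Delta_k^{k_h+1})^\top\Sigma_\xi^{-1}\Delta_k^{k_h+1}\bfh$, where $\Sigma_\xi^{-1} = \mathrm{diag}(\xi_i)$. For general $a\neq b$ there is an extra linear term $\tfrac{a-b}{2}\bfone^\top\Delta_k^{k_h+1}\bfh$. We note that it vanishes here since the lemma's $\bfell_h^\star$ omits it, so the statement implicitly takes $a=b$, as in the sampler. The pseudo-likelihood contributes
\[
-\tfrac12\bfh^\top\Sigma_v^{-1}\bfh + \bfh^\top\Sigma_v^{-1}(\tilde{\pmb\omega}-\pmb\mu_z).
\]
The global factor contributes $-(\hzero-\muG)^2/(2s_0^2)$. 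Substituting $\hzero = m^{-1}\bfone_m^\top\bfh$ turns this into
\[
-\tfrac12\bfh^\top (s_0^2m^2)^{-1}\bfone_m\bfone_m^\top\bfh + \bfh^\top \muG (s_0^2 m)^{-1}\bfone_m,
\]
up to a constant.

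Summing the three contributions gives $-\tfrac12\bfh^\top Q_h^\star\bfh + \bfh^\top\bfell_h^\star$, with $Q_h^\star$ and $\bfell_h^\star$ exactly as stated. To identify a Gaussian we need $Q_h^\star \succ 0$. This is immediate because $\Sigma_v^{-1}$ is diagonal with positive entries and the other two terms are positive semidefinite. Positivity of $\Sigma_v^{-1}$ alone suffices, so we do not need Lemma~\ref{lem:kernel-preserved} at this step. Completing the square yields $\Normal_m\bigl((Q_h^\star)^{-1}\bfell_h^\star,(Q_h^\star)^{-1}\bigr)$.

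The main obstacle is the bookkeeping that justifies treating $\bfh$ itself, rather than $(\hzero,\bfh^\perp)$, as the variable of integration. The rest is routine completion of the square. I would handle it by appealing to Lemma~\ref{lem:level-free}(ii) directly. The map $\bfh \mapsto (\hzero, \bfh^\perp)$ is linear with constant Jacobian, so densities in either parameterization agree up to a constant, and the density stated there is already with respect to Lebesgue measure on $\R^m$.
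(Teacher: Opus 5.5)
Your proposal is correct and follows the same route as the paper's proof. Both condition on the augmentation variables and multiply the Gaussian pseudo-likelihood, the P\'olya-Gamma-conditional difference prior and the level factor $\phi(\hzero;\muG,s_0^2)$; both then expand $(\hzero-\muG)^2$ using $\hzero=m^{-1}\bfone_m^\top\bfh$ and complete the square. Your extra remarks are accurate refinements the paper leaves implicit: the stated $\bfell_h^\star$ presumes $a=b$, which the paper fixes at $a=b=1/2$ in that section, and $Q_h^\star\succ 0$ already follows from $\Sigma_v^{-1}\succ 0$.
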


\begin{proof}[Lemma~\ref{lem:h-fcd}]
Conditional on $(\pmb\omega, \pmb z, \pmb\xi)$, the full conditional
density of $\bfh$ is proportional to the product of three Gaussian
factors: the pseudo-data model,
$\exp\{-\tfrac12(\tilde{\pmb\omega} - \pmb\mu_z -
\bfh)^\top\Sigma_v^{-1}(\tilde{\pmb\omega} - \pmb\mu_z - \bfh)\}$; the
conditional difference prior,
$\exp\{-\tfrac12\,\bfh^\top(\Delta_k^{k_h+1})^\top\Sigma_\xi^{-1}
\Delta_k^{k_h+1}\bfh\}$; and the level prior,
$\exp\{-(\hzero - \muG)^2/(2s_0^2)\}$ with
$\hzero = m^{-1}\bfone_m^\top\bfh$. Expanding the square in the level
factor,
\begin{equation*}
  -\frac{(\hzero - \muG)^2}{2 s_0^2}
  \;=\; -\frac{1}{2}\,\bfh^\top
    \Bigl(\frac{1}{s_0^2 m^2}\,\bfone_m\bfone_m^\top\Bigr)\bfh
    + \frac{\muG}{s_0^2 m}\,\bfone_m^\top\bfh + \mbox{const},
\end{equation*}
so the three quadratic forms and the two linear terms collect into
$-\tfrac12\,\bfh^\top Q_h^{\star}\bfh + \bfh^\top\bfell_h^{\star}$. The full
conditional is
$\Normal_m\bigl((Q_h^{\star})^{-1}\bfell_h^{\star},
(Q_h^{\star})^{-1}\bigr)$.
\end{proof}

\begin{proof}[Sampling step 2(c)]
We verify that the draw~\eqref{eq:sherman-morrison} of step 2(c) is exact.
First, we draw $\pmb{x}_0 \sim \Normal_m\bigl(Q_h^{-1}\bfell_h^{\star},\, Q_h^{-1}\bigr)$. Next, form $\pmb{g} = Q_h^{-1}\pmb{u}$, with $\pmb{u} = (s_0m)^{-1}\bfone_m$ and $c = 1 + \pmb{u}^\top\pmb{g}$. Notice that $Q_h^{\star} = Q_h + \pmb{u}\pmb{u}^\top$. By the Sherman-Morrison identity, $(Q_h^{\star})^{-1} = Q_h^{-1} - \pmb{g}\pmb{g}^\top/c$. Applying this to $\pmb{u}$ together with $\pmb{u}^\top\pmb{g} = c - 1$, we have $(Q_h^{\star})^{-1}\pmb{u} = \pmb{g}/c$.
Next, independently draw $r_0 \sim \Normal(0,1)$ and return~\eqref{eq:sherman-morrison}, which we rearrange into
\begin{equation*}
      \bfh = A\pmb{x}_0 - (r_0/c)\pmb{g},\quad A = I_m - \pmb{g}\pmb{u}^\top / c.
\end{equation*}
Thus, $\bfh$ is an affine function of independent Gaussians, so we only need to match its first two moments to those of
the full conditional. For the mean, $\mathbb{E}(r_0) = 0$ and $AQ_h^{-1} = Q_h^{-1} - \pmb{g}\pmb{g}^\top/c = (Q_h^{\star})^{-1}$, so
\begin{equation*}
    \mathbb{E}\left(\bfh \mid -\right) = AQ_h^{-1}\bfell_h^{\star} = (Q_h^{\star})^{-1}\bfell_h^{\star}.
\end{equation*}
For the covariance, $r_0$ and $\pmb{x}_0$ contribute independently. Noting that $A Q_h^{-1} A^\top = (Q_h^{\star})^{-1} A^\top = (Q_h^{\star})^{-1} - \pmb{g}\pmb{g}^\top/c^2$, where the second equality uses $(Q_h^{\star})^{-1}\pmb{u} = \pmb{g}/c$,
\begin{equation*}
    \mathrm{Cov}(\bfh \mid -) = AQ_h^{-1}A^\top + \pmb{g}\pmb{g}^\top/c^2 = (Q_h^{\star})^{-1}.
\end{equation*}
Both moments agree with the full conditional, so~\eqref{eq:sherman-morrison} is an exact draw from it.
\end{proof}

\begin{proof}[Lemma~\ref{lem:level-free}]
We establish the statements in turn. First, the shrinkage operator annihilates constant vectors,
$\Delta_k^{k_h+1}\bfone_m = \pmb 0$ (Lemma~\ref{lem:kernel-preserved}
applied to the shrinkage graph). Therefore, 
$\Delta_k^{k_h+1}(\bfh + c\bfone_m) = \Delta_k^{k_h+1}\bfh$, which implies 
$g(\bfh + c\bfone_m) = g(\bfh)$, establishing (i). 

Taking $c = -\hzero$ shows that
$g(\bfh)$ depends only on the deviations
$\bfh^\perp = \bfh - \hzero\bfone_m$. The change of variables
$\bfh \leftrightarrow (\hzero, \bfh^\perp)$ is linear with constant
Jacobian, so $g$ factors as a constant function of $\hzero$ times a
function of $\bfh^\perp$. Its integral over $\hzero \in \R$ diverges,
while its integral over the zero-sum hyperplane is finite by
Lemma~\ref{lem:hperp-proper}. Because $g$ alone is improper in exactly the
level direction and defines no distribution for $\hzero$, specifying a valid probabilistic model requires assigning a proper marginal distribution to the level. Multiplying the kernel $g(\bfh)$ by a proper Gaussian prior on $\hzero$ defines the joint prior $\pi(\bfh) \propto g(\bfh) \phi\bigl(\hzero; \muG,\, s_0^2\bigr)$, establishing (ii).

Finally, under the same change of variables, this joint prior factors as
$\pi(\hzero, \bfh^\perp) \propto \phi(\hzero;\, \muG,\, s_0^2)\,
g(\bfh^\perp)$. This is a product of proper densities, ensuring that $\pi(\bfh)$
is proper, establishing (iii). The factorization moreover demonstrates that the level and the
deviations are independent, with
$\hzero \sim \Normal(\muG,\, s_0^2)$ and $\bfh^\perp$ having a density
proportional to $g$ on the zero-sum hyperplane.
\end{proof}

\section{Additional simulations and results}
\label{sec:extra-comp}
\subsection{2D Lattice}

\subsubsection{Details on the true trends}

In Table~\ref{tab:kernel-choices}, we provide the details of the true trends to accompany the visualizations in Figures~\ref{fig:sim_lattice}~and~\ref{fig:true-mean-funcs}.

\begin{table}[h]
  \centering \small 
  \caption{True trends $\beta(u_1, u_2)$ over a $d_1 \times d_2$ lattice, expressed as normalized coordinates  $u_1 = (r-1)/(d_1-1) \in [0,1]$ (rows) and $u_2 = (c-1)/(d_2-1)\in[0,1]$ (columns). For \texttt{bg}, the bandwidths are $\sigma_1 = 0.3$ and $\sigma_2 = 0.4$.}
    \begin{tabular}{c| l}   
       Name  & True trend $\beta(u_1, u_2)$ \\
       \hline
        \texttt{bg} &  $ 
             \frac{0.75}{\pi \sigma_1 \sigma_2} \exp\left\{-\frac{(u_1 - 0.2)^2}{\sigma_1^2} - \frac{(u_2 - 0.3)^2}{\sigma_2^2}\right\}  + \frac{0.45}{\pi \sigma_1 \sigma_2} \exp\left\{-\frac{(u_1 - 0.7)^2}{\sigma_1^2} - \frac{(u_2 - 0.8)^2}{\sigma_2^2}\right\} $ \\
        \texttt{blocks} & $  \mathbf{1}\{0.2 \leq u_1 \leq 0.4,   0.6 \leq u_2 \leq 0.9\}  + 0.5\mathbf{1}\{0.6 \leq u_1 \leq 0.8, 0.1 \leq u_2 \leq 0.3\} $\\
        \texttt{blocks+} & $\mathbf{1}\{0.2 \leq u_1 \leq 0.3\} + \mathbf{1}\{0.1 \leq u_2 \leq 0.2\} +0.5\mathbf{1}\{0.7 \leq u_1 \leq 0.8\} +   0.5\mathbf{1}\{0.6 \leq u_2 \leq 0.7\}$\\
        \texttt{lin} & $u_1$ \\
    \end{tabular}
    \label{tab:kernel-choices}
\end{table}

\subsubsection{Computational comparisons}\label{app-compute}
The evaluations of computational efficiency are in  
Figure~\ref{fig:sims-time}. Recall that the frequentist graph trend filtering estimator \eqref{gtf-est}, denoted ``GL", refers to the time to compute the solution path using the \texttt{R} package
\texttt{genlasso}. However, GL does \emph{not} include the time needed to compute a data-driven choice of $\lambda$, for instance using cross-validation. The Bayesian methods all use MCMC and are evaluated based on relative efficiency, which reports the number of effective samples relative to the total number of MCMC samples, and time to 1000 effective samples, which jointly considers computing time and MCMC efficiency. 

\begin{figure}[h]
  \centering
  \includegraphics[width=0.49\linewidth,trim=2 2 2 2,clip]{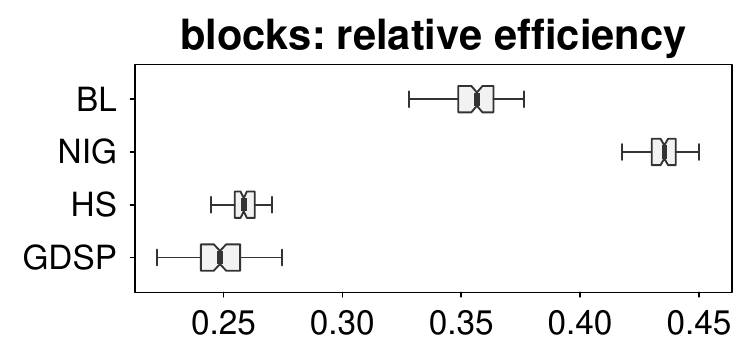}\hfill
  \includegraphics[width=0.49\linewidth,trim=2 2 2 2,clip]{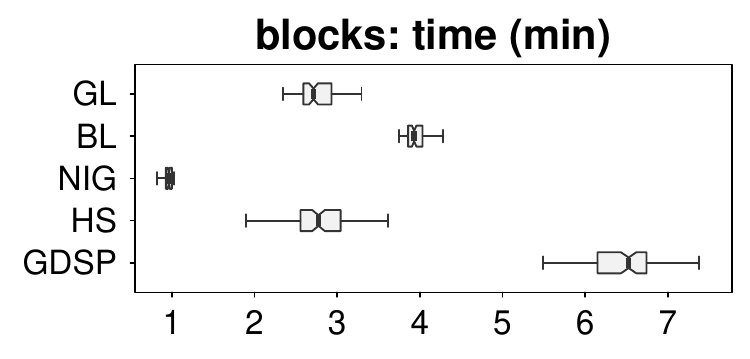}

  \vspace{2pt}
  \includegraphics[width=0.49\linewidth,trim=2 2 2 2,clip]{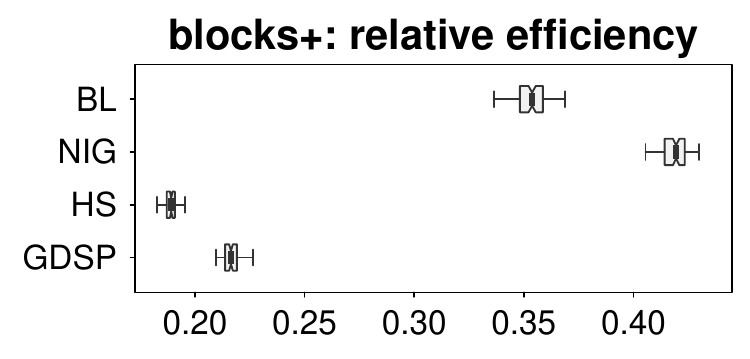}\hfill
  \includegraphics[width=0.49\linewidth,trim=2 2 2 2,clip]{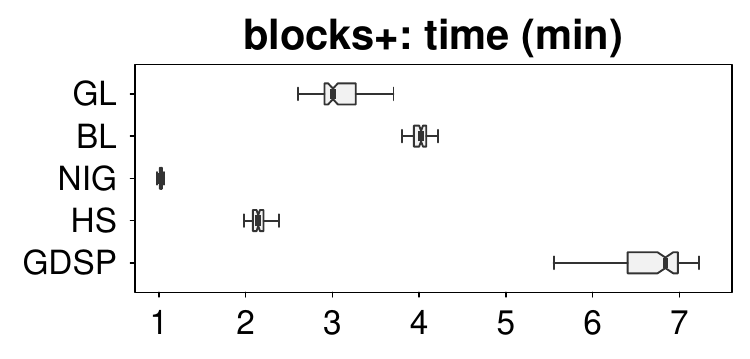}

  \includegraphics[width=0.49\linewidth,trim=2 2 2 2,clip]{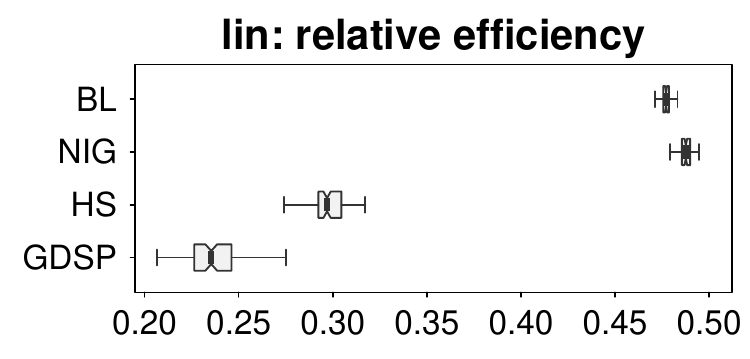}\hfill
  \includegraphics[width=0.49\linewidth,trim=2 2 2 2,clip]{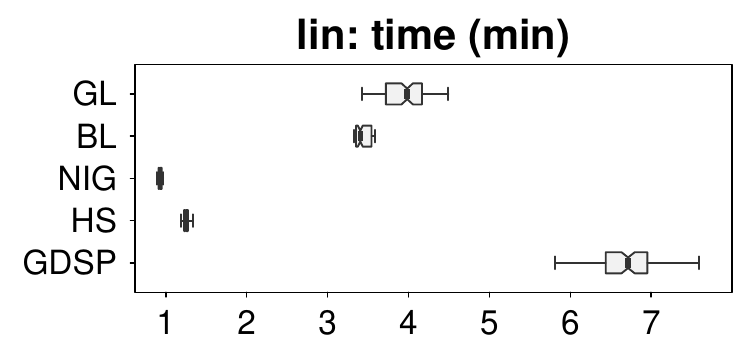}

  \vspace{2pt}
  \includegraphics[width=0.49\linewidth,trim=2 2 2 2,clip]{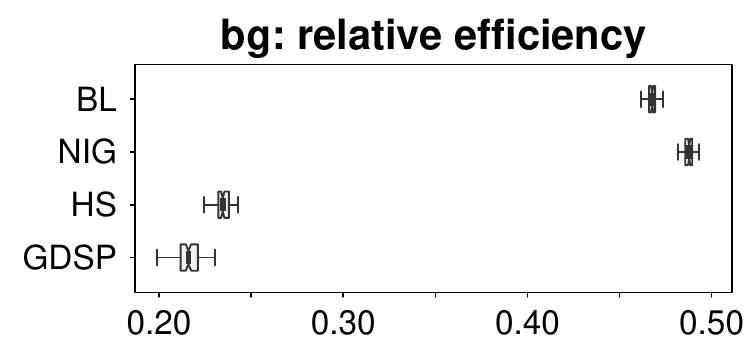}\hfill
  \includegraphics[width=0.49\linewidth,trim=2 2 2 2,clip]{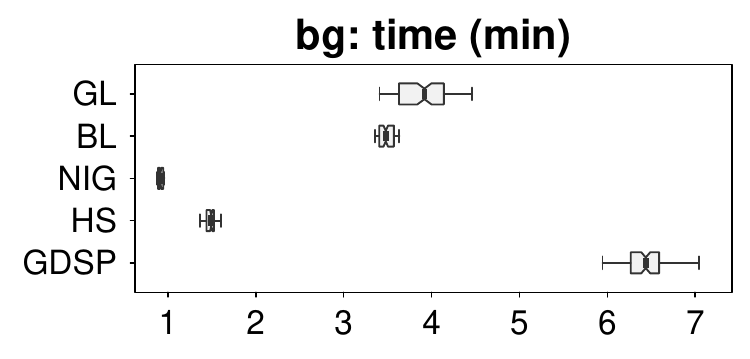}

  \caption{Average relative efficiency $\bar{N}_{\mathrm{eff}}/N$ and wall-clock time to 1000 effective samples $\bar{s}_{1000}$ (GL: full solution path) for the four trends with 50\% missingness.}%
  \label{fig:sims-time}
\end{figure}

As anticipated, the simpler shrinkage priors (NIG, BL) achieve higher MCMC efficiency than the more aggressive shrinkage priors (HS, GDSP). The graph-dependent shrinkage in GDSP incurs a small cost in MCMC efficiency relative to the HS, except in the case of \texttt{blocks+}. Comparing algorithm timing more broadly, we see that the frequentist estimator GL is actually slower than many of the Bayesian graph trend filtering methods. Naturally, the horseshoe GDSP requires the longest time to achieve 1000 effective samples. This is expected: the horseshoe GDSP includes more (shrinkage) parameters than other Bayesian models (NIG, BL) and imposes strong dependencies among those (shrinkage) parameters, which both make the MCMC less efficient and more computational intensive. However, the horseshoe GDSP remains computationally viable, and  only takes about  twice as long as the frequentist estimator GL. We emphasize that GL does \emph{not} include the cost to cross-validate or otherwise select $\lambda$; including this would make the horseshoe GDSP computations far more favorable. 

 \subsubsection{Different settings}
We report additional simulation results for 1) a complete case with no missing data and 2) varying $k = 0$
  and $k = 2$ for each graph trend filtering model. 

\paragraph{Complete case.} We present results under the same design as in Section~\ref{sec:simulation-design}, but without introducing any missingness, in Figure~\ref{fig:sims-complete}. Without any missing data, the frequentist GL of \cite{wangTrendFilteringGraphs2016}, using the oracle value of $\lambda$, now performs best for the non-smooth trends (\texttt{blocks} and \texttt{blocks+}). However, GL remains noncompetitive for the smooth trends (\texttt{lin} and \texttt{bg}) and does not provide interval estimates. The proposed horseshoe GDSP maintains its strong performance among the Bayesian models: it is the most accurate and has the narrowest (calibrated) intervals for the non-smooth trends, while all Bayesian priors perform comparably for the smooth trends, like in the case with missing data.

For completeness, we also include the computational comparisons as in Section~\ref{app-compute}. These are presented in Figure~\ref{fig:sims-time-complete}. As expected, the relative performance among Bayesian methods is similar to that in the missing data case, but the MCMC efficiency improves without the need to impute the missing observations. 

\begin{figure}[h]
  \centering
  \includegraphics[width=0.49\linewidth,trim=2 2 2 2,clip]{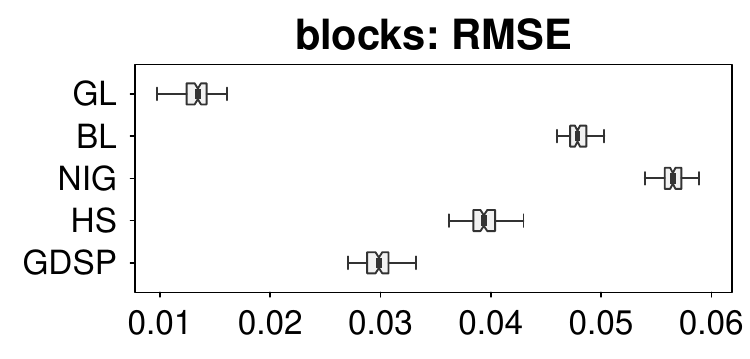}\hfill
  \includegraphics[width=0.49\linewidth,trim=2 2 2 2,clip]{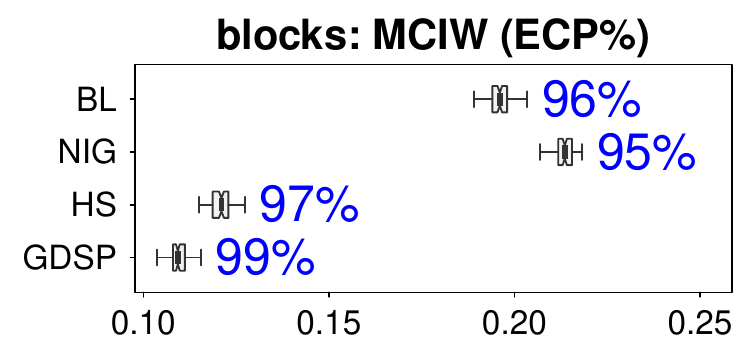}

  \vspace{2pt}
  \includegraphics[width=0.49\linewidth,trim=2 2 2 2,clip]{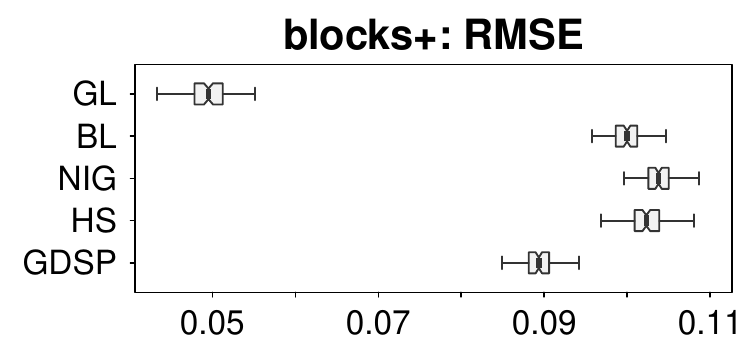}\hfill
  \includegraphics[width=0.49\linewidth,trim=2 2 2 2,clip]{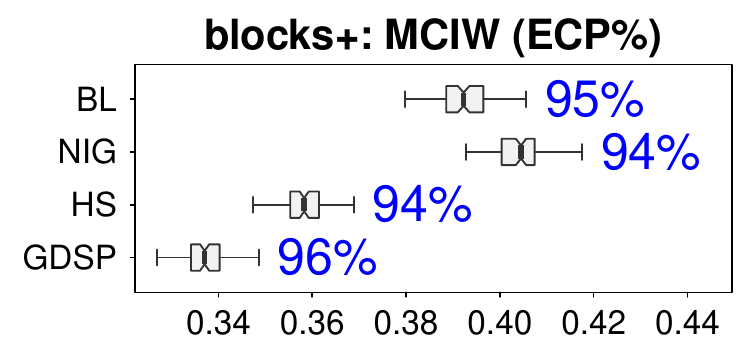}

  \includegraphics[width=0.49\linewidth,trim=2 2 2 2,clip]{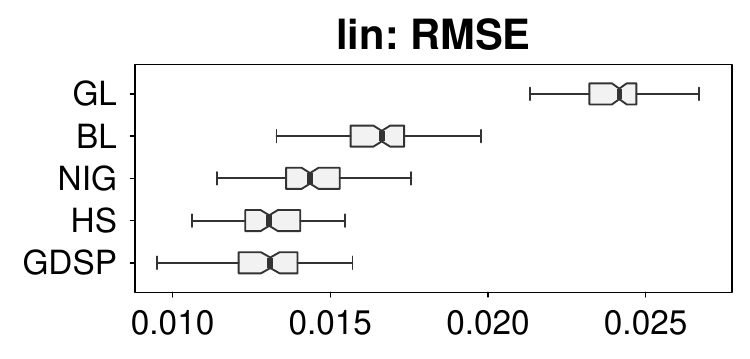}\hfill
  \includegraphics[width=0.49\linewidth,trim=2 2 2 2,clip]{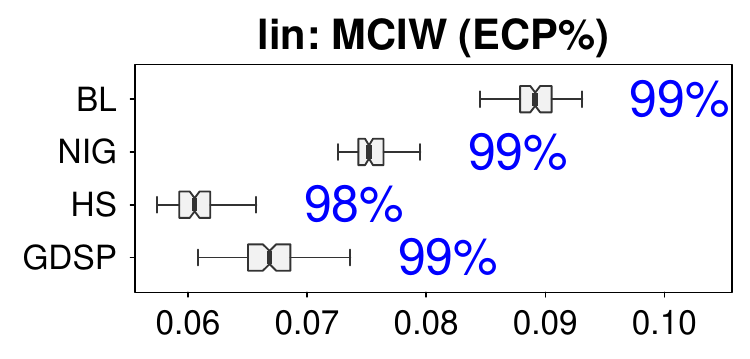}

  \vspace{2pt}
  \includegraphics[width=0.49\linewidth,trim=2 2 2 2,clip]{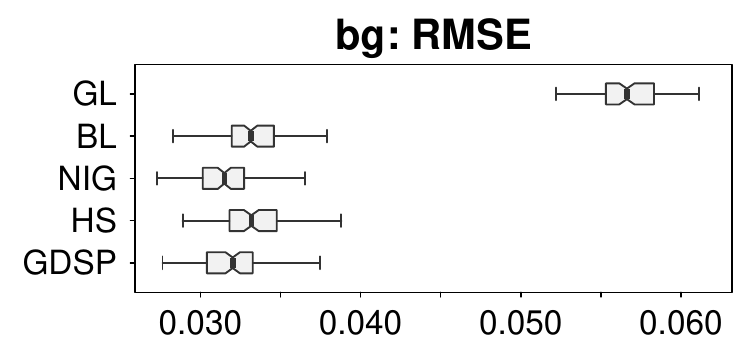}\hfill
  \includegraphics[width=0.49\linewidth,trim=2 2 2 2,clip]{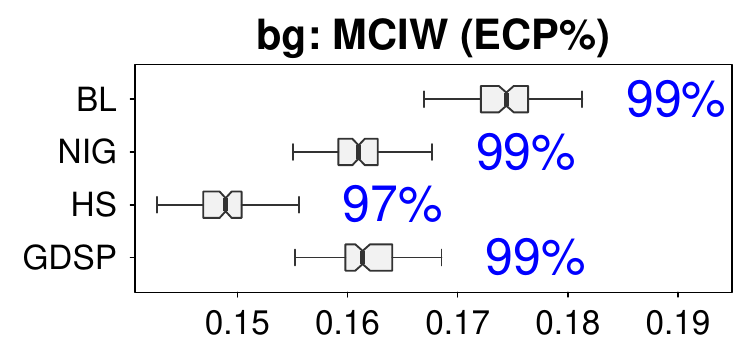}
  
  \caption{RMSEs and MCIWs with ECPs (blue annotations) for the four trends with 0\% missingness; each panel title reports the trend and the metric. The frequentist GL estimator, using the oracle $\lambda$, performs best for \texttt{blocks} and \texttt{blocks+}, but is not competitive for \texttt{lin} and \texttt{bg}. The proposed horseshoe GDSP maintains excellent point estimation and precise yet well-calibrated uncertainty quantification, especially compared to competing Bayesian models.}
    \label{fig:sims-complete}
\end{figure}

\begin{figure}[h]
  \centering
  \includegraphics[width=0.49\linewidth,trim=2 2 2 2,clip]{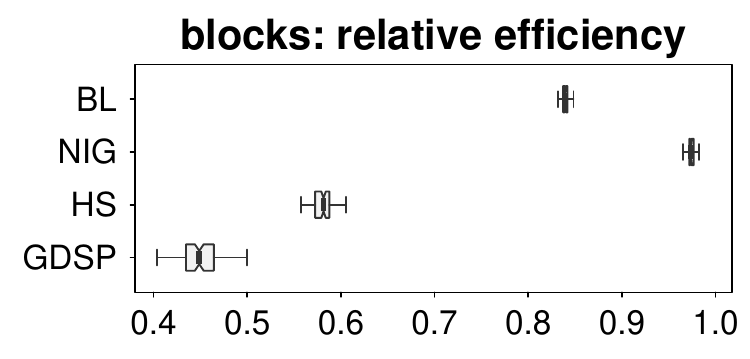}\hfill
  \includegraphics[width=0.49\linewidth,trim=2 2 2 2,clip]{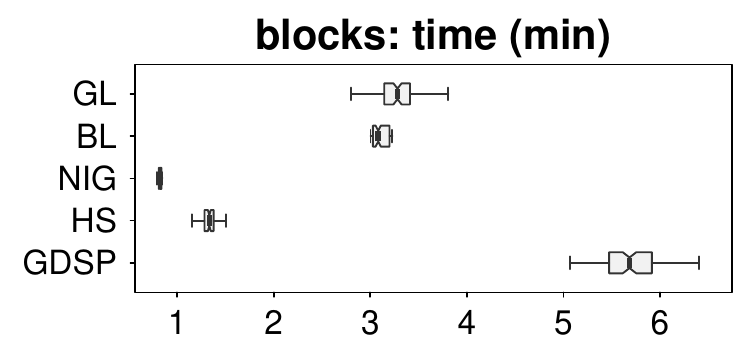}

  \vspace{2pt}
  \includegraphics[width=0.49\linewidth,trim=2 2 2 2,clip]{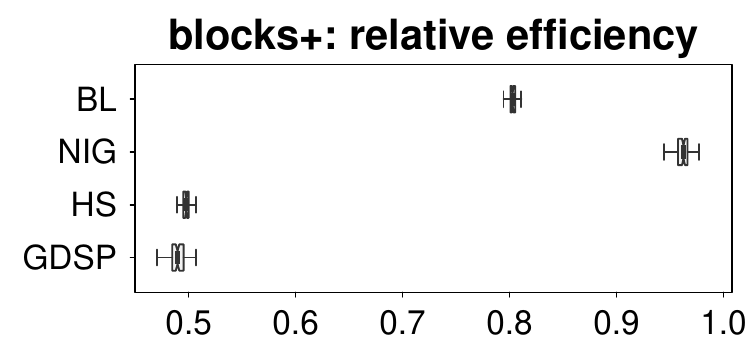}\hfill
  \includegraphics[width=0.49\linewidth,trim=2 2 2 2,clip]{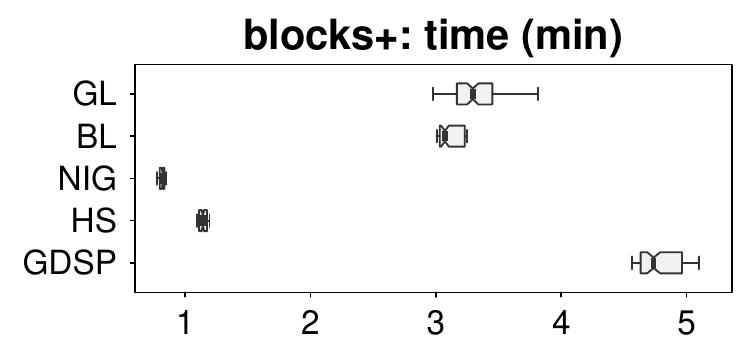}

  \includegraphics[width=0.49\linewidth,trim=2 2 2 2,clip]{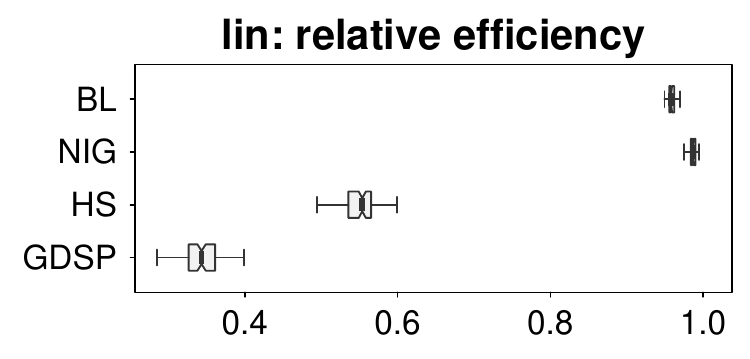}\hfill
  \includegraphics[width=0.49\linewidth,trim=2 2 2 2,clip]{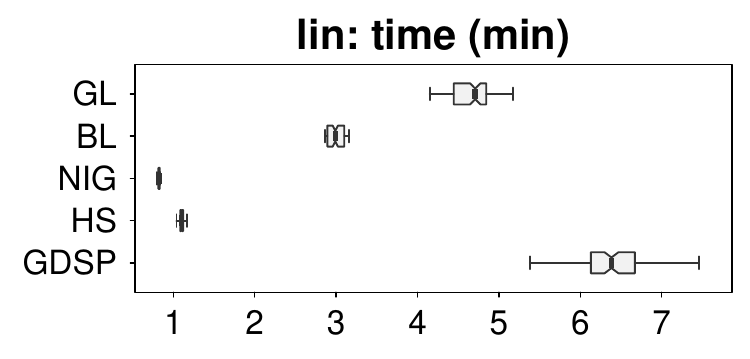}

  \vspace{2pt}
  \includegraphics[width=0.49\linewidth,trim=2 2 2 2,clip]{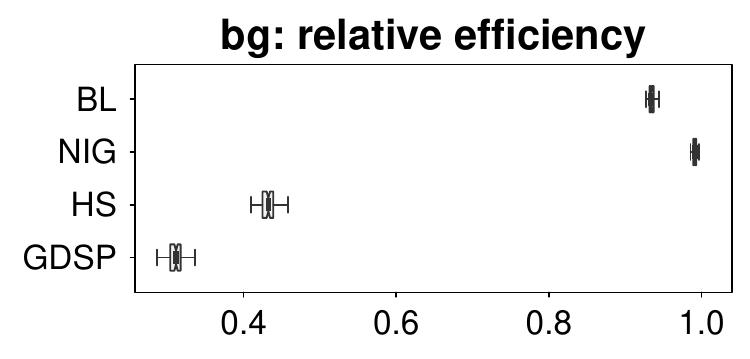}\hfill
  \includegraphics[width=0.49\linewidth,trim=2 2 2 2,clip]{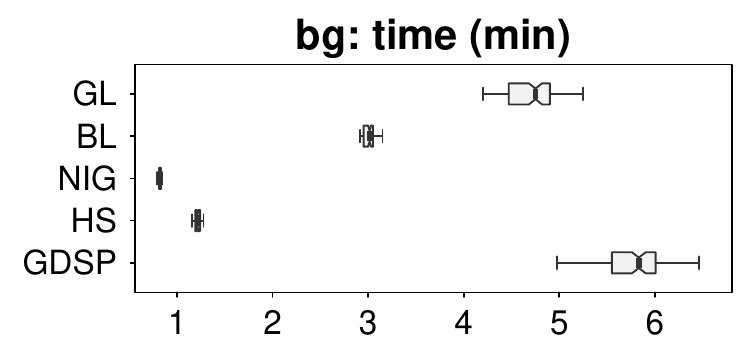}
  
  \caption{Average relative efficiency $\bar{N}_{\mathrm{eff}}/N$ and wall-clock time to 1000 effective samples $\bar{s}_{1000}$ (GL: full solution path) for the four trends with no missingness.}
  \label{fig:sims-time-complete}

\end{figure}

\paragraph{Graph trend filtering with $k = 0$ and $k = 2$.} We compare performance of the horseshoe GDSP across $k\in\{0,1,2\}$ in Figure~\ref{fig:many-k}. We focus on one non-smooth (\texttt{blocks}) trend and one smooth (\texttt{bg}) trend.
In addition to RMSE and MCIW and ECP, we include WAIC \citep{watanabeAsymptoticEquivalenceBayes2010}  in Figure~\ref{fig:many-k-waic}. As expected, increasing $k$ encourages smooth trend estimates, which is most favorable for smooth trends. Larger $k$ also provides narrower intervals. The choice of $k=1$ from Section~\ref{sec:competing-methods} performs well in both cases. However, increasing $k$ also produces a denser graph difference operator $\Delta^{(k+1)}$, which generally increases computing time.

\begin{figure}[h]
    \centering
    \includegraphics[width=\textwidth]{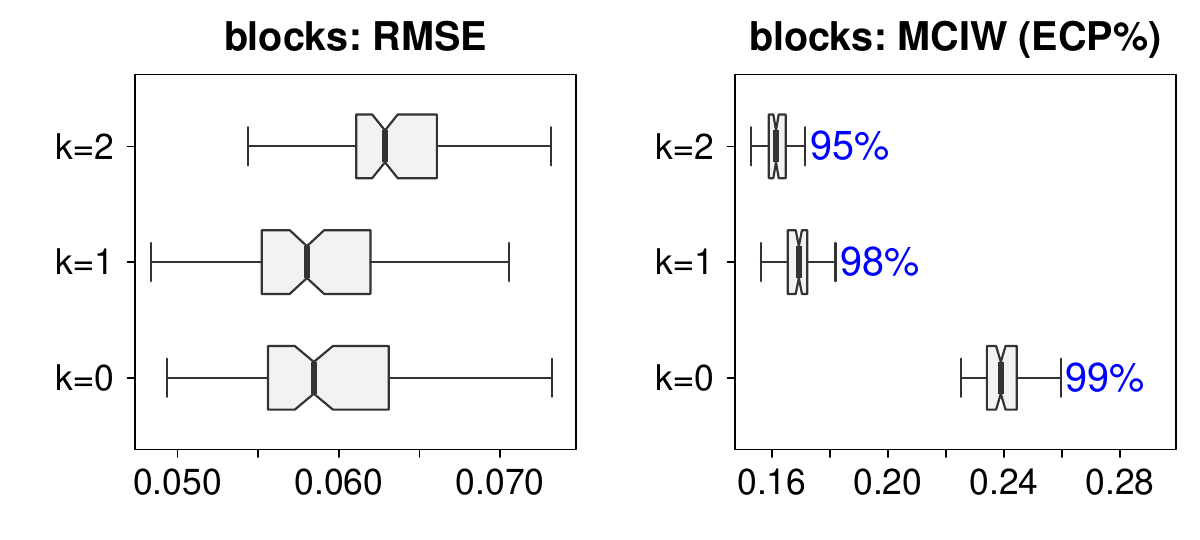}
    \vspace{4pt}
    \includegraphics[width=\textwidth]{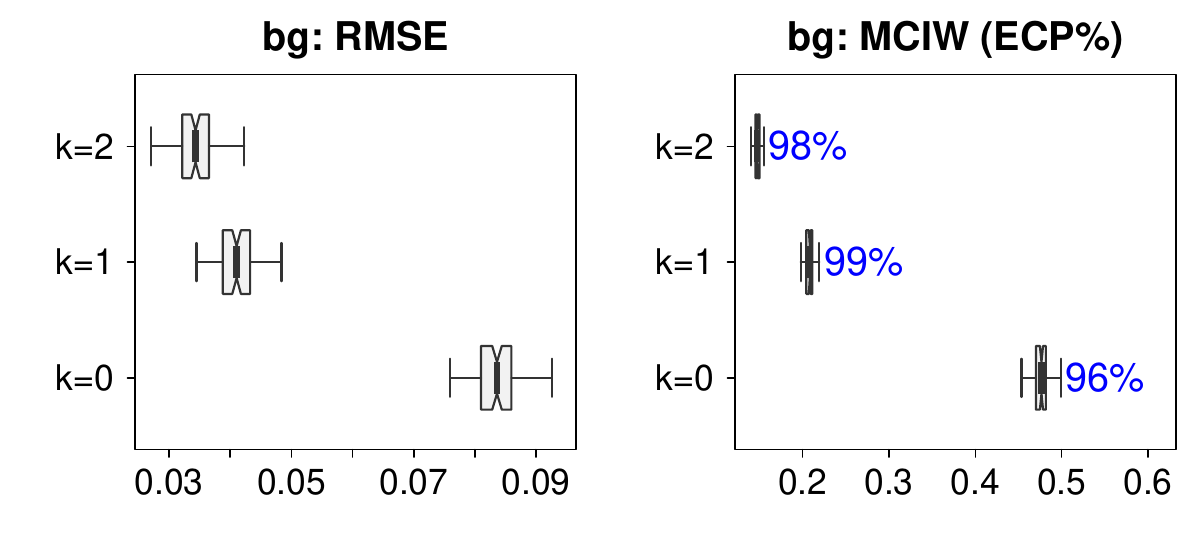}
    \caption{Performance of the horseshoe GDSP for \texttt{blocks} (top) and \texttt{bg} (bottom) trends on a $50 \times 50$ lattice (50\% missingness) across $k\in \{0,1,2\}$. The higher order penalties (larger $k$) encourage greater smoothness, provide narrower intervals, and perform best on smooth trends. The choice of $k=1$ from the main paper is competitive for both smooth and non-smooth trends.} 
    \label{fig:many-k}
\end{figure}

\begin{figure}[h]
    \centering
    \includegraphics[width=\textwidth]{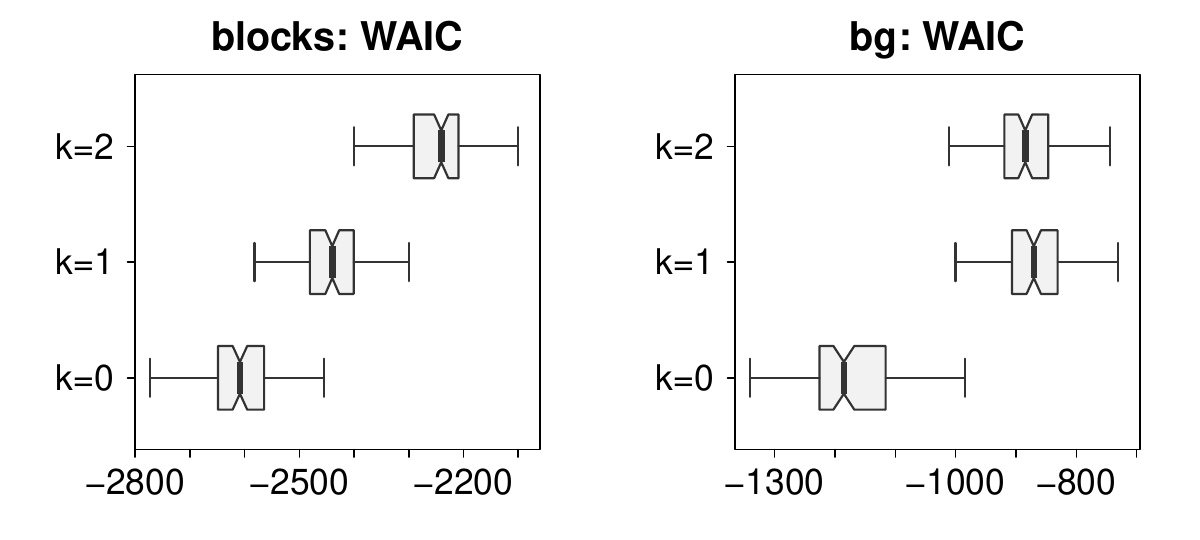}
    \caption{WAIC for the horseshoe GDSP across $k \in \{0,1,2\}$ for the
    \texttt{blocks} (left) and \texttt{bg} (right) trends on a $50 \times 50$
    lattice with 50\% missingness. Lower WAIC indicates better out-of-sample prediction.}
    \label{fig:many-k-waic}
\end{figure}

\subsection{Time series of images}
We complement the spatio-temporal results of Section~\ref{sec:st_sims} for the remaining trends. Figure~\ref{fig:st-summary-supp} shows the  forecast RMSEs and MCIW/ECP  at $t = T$ for \texttt{bg} and \texttt{blocks+}, using the same simulation setup ($d_1 = d_2 = 40$, $T = 4$, $\mathrm{RSNR} = 3$). The findings are consistent with those for \texttt{lin} and \texttt{blocks}: the proposed horseshoe GDSP achieves the lowest forecast RMSE  for both trends; for \texttt{bg} it also provides the narrowest (calibrated) intervals, while for \texttt{blocks+} the interval widths are all similar. All Bayesian methods significantly outperform GL in point estimation, with the advantage being particularly pronounced under missingness.

\begin{figure}[h]
    \centering
    \includegraphics[width=\linewidth]{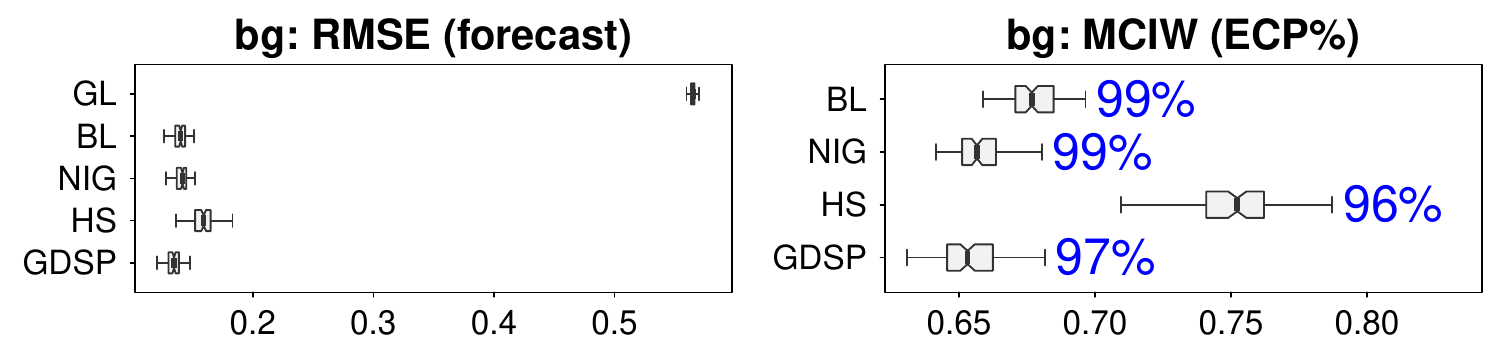}
    \vspace{2pt}
    \includegraphics[width=\linewidth]{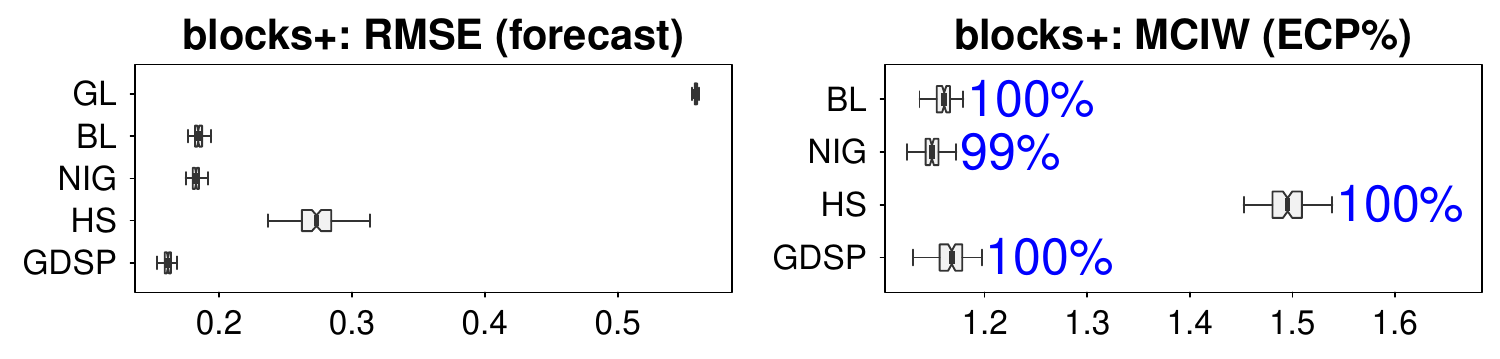}
    \caption{Forecast RMSEs and MCIWs with ECPs (blue annotations) at $t = T$ for the \texttt{bg} and \texttt{blocks+} spatio-temporal trends ($d_1 = d_2 = 40$, $T = 4$, $\mathrm{RSNR} = 3$); each panel title reports the trend and the metric. The proposed horseshoe GDSP provides accurate point estimates and narrow intervals that achieve the nominal coverage.}
    \label{fig:st-summary-supp}
\end{figure}

\section{Additional results and MCMC diagnostics for the U.S.\ unemployment analysis}\label{app-data}
In Figure~\ref{fig:us-traceplot}, we report trace plots of the trend $\beta$ for three representative counties: San Juan County, UT (FIPS 49037; degree 14, the most connected, hence a potentially challenging parameter for MCMC), Los Angeles County, CA (FIPS 06037; large metro), and Washington County, KY (FIPS 21229; from an anomalous block of Kentucky counties), for one observed month (2020-06) and the forecast month (2020-07). The most-connected county corresponds to greater dependencies in the trend and in the log-shrinkage, and thus represents a potentially challenging case for MCMC. 
The observed county-months mix well for all three counties. Mixing degrades as the information at a node decreases. Bulk effective sample sizes (ESS) per county-month for the GDSP trend have median $4688$ of the $5000$ stored draws at the observed county-months, and $2349$ at the held-out county-months of the imputation task, whose spatial and temporal neighbors are largely observed. On the forecast slice, which is informed only through the imputation step (Section~\ref{mcmc-imp}) and the graph, the median ESS is $414$ and the 5th percentile is 136. The implied Monte Carlo error is small relative to the reported posterior uncertainty. The Monte Carlo standard error of a county's posterior-mean forecast has median $1.2\%$ of its $95\%$ forecast-interval width ($2.1\%$ for $95\%$ of counties), and $0.5\%$ at the imputation targets.

\begin{figure}
    \centering
    \includegraphics[width=\linewidth]{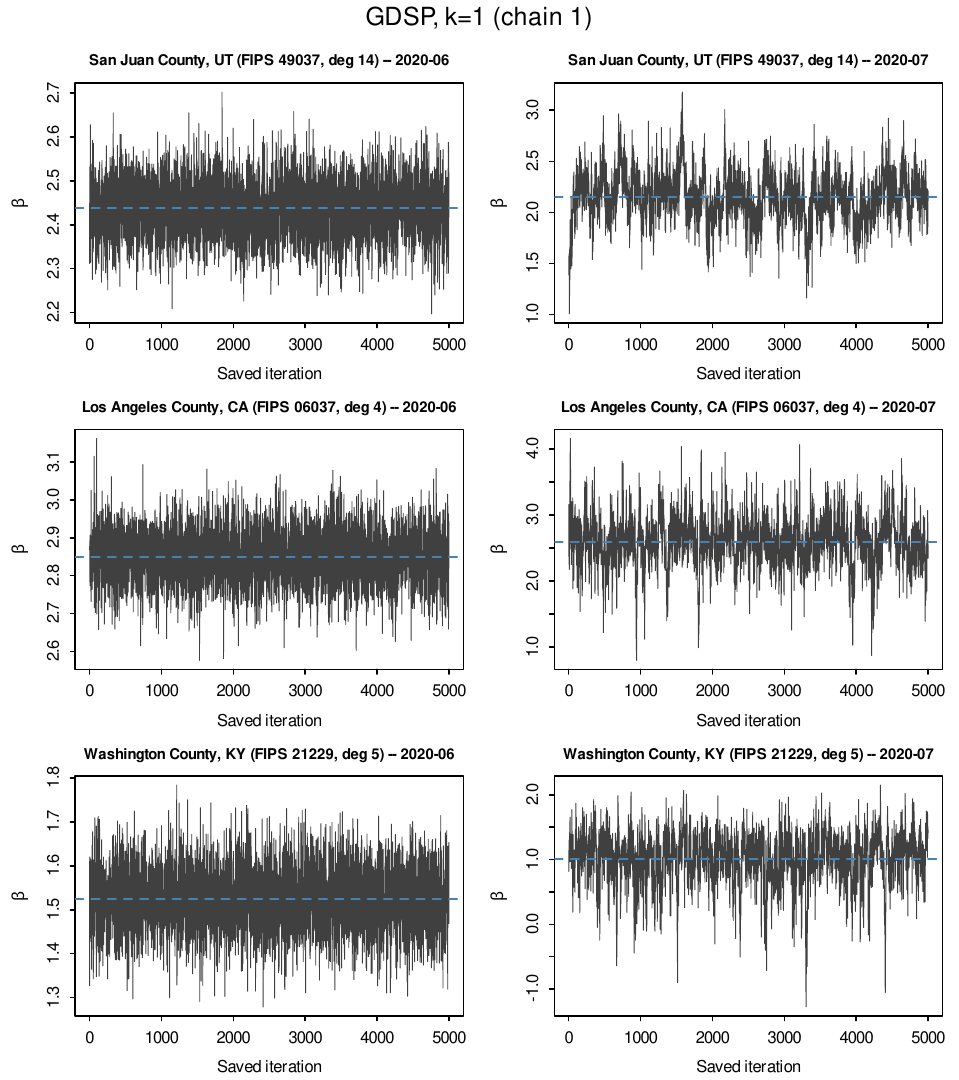}
    \caption{Trace plots of the trend $\beta$ for three representative U.S.\ counties (rows): San Juan County, UT (degree 14, the most connected); Los Angeles County, CA; and Washington County, KY. Columns: the last observed month (2020-06) and the forecast month (2020-07). Each panel shows the $5000$ stored draws of the single chain described in Section~\ref{sec:us_unemployment}. Mixing is fast for the observed months; for the forecast month, which has no data, the chain explores a wider range, and the intervals in Figure~\ref{fig:us_profile} are correspondingly wider.}
    \label{fig:us-traceplot}
\end{figure}

Figure~\ref{fig:us_profile} complements the maps of
Figure~\ref{fig:us_forecast} with the full county-level profile: all
$3{,}108$ counties ordered by FIPS code, with the realized
log-unemployment, the GDSP fit, and its $95\%$ intervals for each month of
the window. On the observed months the fit tracks the jagged cross-county
profile with narrow intervals, and the oracle-tuned GL (red) deviates
visibly at the held-out county-months it must impute; on the forecast month
the bands widen honestly, by an amount that adapts across regions, with the
realized July values falling inside them almost everywhere.

\begin{figure}[t!]
  \centering
  \includegraphics[width=0.94\linewidth]{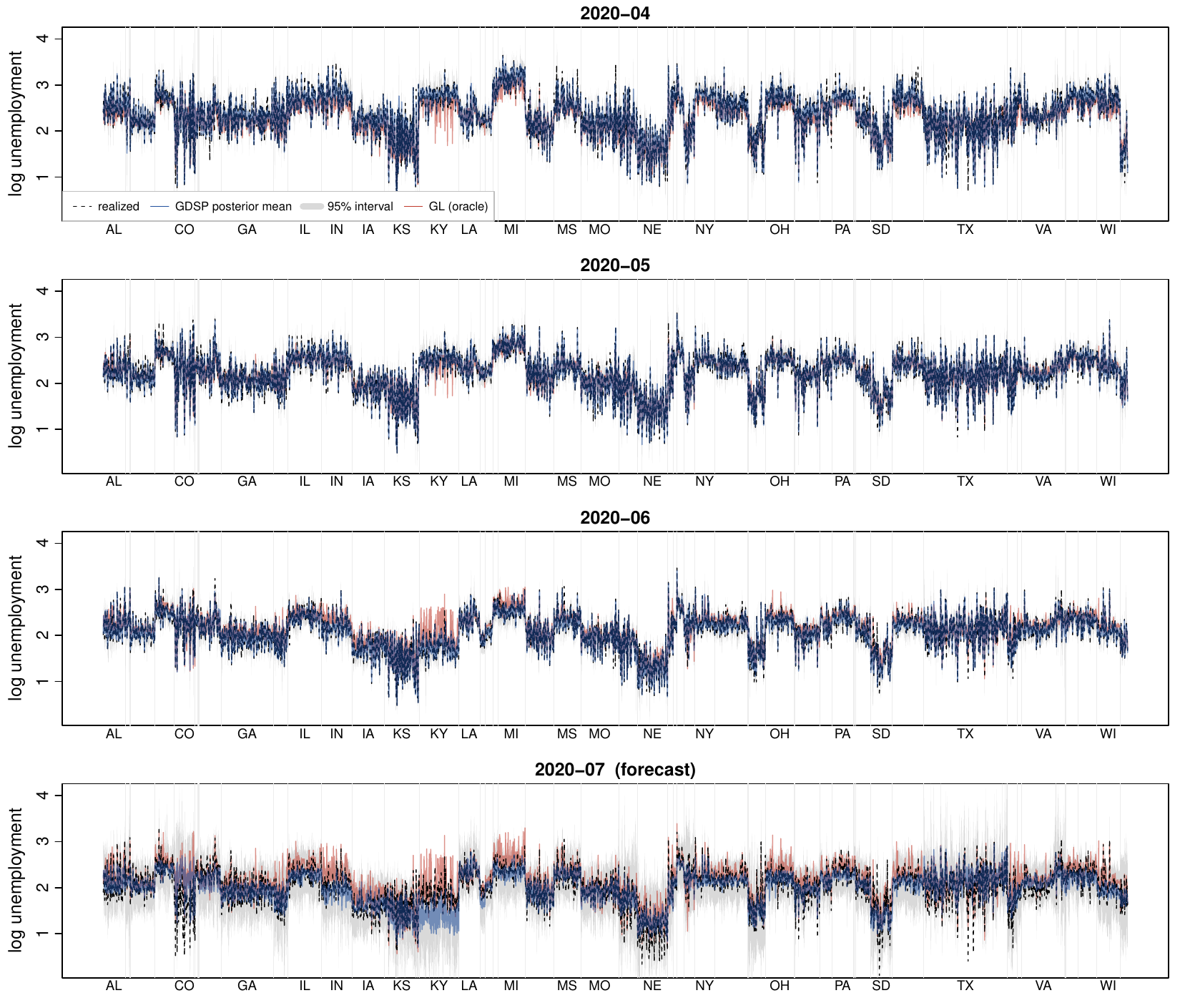}
  \caption{U.S.\ log-unemployment data with all $3108$ counties on
  the horizontal axis, ordered by FIPS code (light vertical lines mark state boundaries; large states labeled) and plotted across different months (top to bottom). Each panel shows realized
  log-unemployment (dashed black), the GDSP posterior mean (blue), $95\%$
  intervals (gray), and the oracle-tuned GL estimate (red). 
  Under the design of
  Section~\ref{sec:us_unemployment}, $25\%$ of the county-months in the
  top three panels are held out and imputed. The bottom
  panel is the out-of-sample July forecast.}
  \label{fig:us_profile}
\end{figure}

\end{document}